\numberwithin{equation}{section}
\begin{document}

\mainmatter
\title{Packing-Based Approximation Algorithm for the $k$-Set Cover Problem}

\titlerunning{Packing-Based Approximation Algorithm for the $k$-Set Cover Problem}

\author{
Martin F\"{u}rer\thanks{Research supported in part by NSF Grant CCF-0728921 and CCF-0964655} \and Huiwen Yu
}
\institute{
Department of Computer Science and Engineering \\
The Pennsylvania State University, University Park, PA 16802, USA
}


\toctitle{Lecture Notes in Computer Science}
\tocauthor{Authors' Instructions}

\date{}
\maketitle

\begin{abstract}
We present a packing-based approximation algorithm for the $k$-Set Cover problem. We introduce a new local search-based $k$-set packing heuristic, and call it Restricted $k$-Set Packing. We analyze its tight approximation ratio via a complicated combinatorial argument. Equipped with the Restricted $k$-Set Packing algorithm, our $k$-Set Cover algorithm is composed of the $k$-Set Packing heuristic \cite{schrijver} for $k\geq 7$, Restricted $k$-Set Packing for $k=6,5,4$ and the semi-local $(2,1)$-improvement \cite{furer} for 3-Set Cover. We show that our algorithm obtains a tight approximation ratio of $H_k-0.6402+\Theta(\frac{1}{k})$, where $H_k$ is the $k$-th harmonic number. For small $k$, our results are $1.8667$ for $k=6$, $1.7333$ for $k=5$ and $1.5208$ for $k=4$. Our algorithm improves the currently best approximation ratio for the $k$-Set Cover problem of any $k\geq 4$.

\end{abstract}


\section{Introduction}

Given a set of elements $U$ and a collection of subsets $\mathscr{S}$ of $U$ with each subset of $\mathscr{S}$ having size at most $k$ and the union of $\mathscr{S}$ being $U$, the $k$-Set Cover problem is to find a minimal size sub-collection of $\mathscr{S}$ whose union remains $U$. Without loss of generality, we assume that $\mathscr{S}$ is closed under subsets. Then the objective of the $k$-Set Cover problem can be viewed as finding a disjoint union of sets of $\mathscr{S}$ which covers $U$.

The $k$-Set Cover problem is NP-hard for any $k\geq 3$. For $k=2$, the 2-Set Cover problem is polynomial-time solvable by a maximum matching algorithm. The greedy approach for approximating the $k$-Set Cover problem chooses a maximal collection of $i$-sets (sets with size $i$) for each $i$ from $k$ down to 1. It achieves a tight approximation ratio $H_k$ (the k-th harmonic number) \cite{johnson}. The hardness result by Feige \cite{feige} shows that for $n=|U|$, the Set Cover problem is not approximable within $(1-\epsilon)\ln n$ for any $\epsilon>0$ unless NP$\subseteq$DTIME($n^{\log\log n}$). For the $k$-Set Cover problem, Trevisan \cite{trevisan} shows that no polynomial-time algorithm has an approximation ratio better than $\ln k-\Omega(\ln\ln k)$ unless subexponential-time deterministic algorithms for NP-hard problems exist. Therefore, it is unlikely that a tremendous improvement of the approximation ratio is possible.

There is no evidence that the $\ln k-\Omega(\ln\ln k)$ lower bound can be achieved. Research on approximating the $k$-Set Cover problem has been focused on improving the positive constant $c$ in the approximation ratio $H_k-c$. Small improvements on the constant might lead us closer to the optimal ratio. One of the main ideas based on greedy algorithms is to handle small sets separately. Goldschmidt et al. \cite{gold} give a heuristic using a matching computation to deal with sets of size 2 and obtain an $H_k-\frac{1}{6}$ approximation ratio. Halld\'{o}rsson \cite{hall2} improves $c$ to $\frac{1}{3}$ via his ``t-change''  and ``augmenting path'' techniques. Duh and F\"{u}rer \cite{furer} give a semi-local search algorithm for the 3-Set Cover problem and further improve $c$ to $\frac{1}{2}$. They also present a tight example for their semi-local search algorithm.

A different idea is to replace the greedy approach by a set-packing approach. Levin \cite{levin} uses a set-packing algorithm for packing 4-sets and improves $c$ to $0.5026$ for $k\geq 4$. Athanassopoulos et al. \cite{lp} substitute the greedy phases for $k\geq 6$ with packing phases and reach an approximation ratio $H_k-0.5902$ for $k\geq 6$.

The goal of this paper is not to provide incremental improvement in the approximation ratio for $k$-Set Cover. We rather want to obtain the best such result achievable by current methods. It might be the best possible result, as we conjecture the lower bound presented in \cite{trevisan} not to be optimal.

In this paper, we give a complete packing-based approximation algorithm (in short, PRPSLI) for the $k$-Set Cover problem. For $k\geq 7$, we use the $k$-set packing heuristic introduced by Hurkens and Shrijver \cite{schrijver}, which achieves the best known to date approximation ratio $\frac{2}{k}-\epsilon$ for the $k$-Set Packing problem for any $\epsilon>0$. On the other hand, the best hardness result by Hazan et al. \cite{hazan} shows that it is NP-hard to approximate the $k$-Set Packing problem within $\Omega(\frac{\ln k}{k})$.

For $k=6,5,4$, we use the same packing heuristic with the restriction that any local improvement should not increase the number of 1-sets which are needed to finish the disjoint set cover. We call this new heuristic Restricted $k$-Set Packing. We prove that for any $k\geq 5$, the Restricted $k$-Set Packing algorithm achieves the same approximation ratio as the corresponding unrestricted set packing heuristic. For $k=4$, this is not the case. The approximation ratio of the Restricted 4-Set Packing algorithm is $\frac{7}{16}$, which is worse than the $\frac{1}{2}-\epsilon$ ratio of the 4-set packing heuristic but it is also tight. For $k=3$, we use the semi-local optimization technique \cite{furer}. We thereby obtain the currently best approximation ratio for the $k$-Set Cover problem. Table 1 (in Appendix Section 5) includes a comparison of the approximation ratio of our algorithm with $GRSLI_{k,5}$ \cite{furer}, Levin's algorithm \cite{levin} and $PRSLI_{k,5}$ \cite{lp}. We also show that our result is indeed tight. Thus, $k$-Set Cover algorithms which are based on packing heuristic can hardly be improved. Our novel Restricted $k$-Set Packing algorithm is quite simple and natural, but its analysis is complicated. It
is essentially based on combinatorial arguments. We use the factor-revealing linear programming analysis for the $k$-Set Cover problem. The factor-revealing linear program is introduced by Jain et al. \cite{lp1} for analyzing the facility location problem. Athanassopoulos et al. \cite{lp} are the first to apply it to the $k$-Set Cover problem.

The paper is organized as follows. In Section 2, we give the description of our algorithm and present the main results. In Section 3, we prove the approximation ratio of the Restricted $k$-Set Packing algorithm. In Section 4, we analyze our $k$-Set Cover algorithm via the factor-revealing linear program.

\section{Algorithm Description and the Main Theorem}

In this section, we describe our packing-based $k$-Set Cover approximation algorithm. We first give an overview of some existing results.

Duh and F\"{u}rer \cite{furer} introduce a semi-local $(s,t)$-improvement for the 3-Set Cover problem. First, it greedily selects a maximal disjoint union of 3-sets. Then each local improvement replaces $t$ 3-sets with $s$ 3-sets, if and only if after computing a maximum matching of the remaining elements, either the total number of sets in the cover decreases, or it remains the same, while the number of 1-sets decreases. They also show that the $(2,1)$-improvement algorithm gives the best performance ratio for the $3$-Set Cover problem among all semi-local $(s,t)$-improvement algorithms. The ratio is proved to be tight.

\begin{theorem}[\cite{furer}]
The semi-local $(2,1)$-optimization algorithm for 3-Set Cover produces a solution with performance ratio $\frac{4}{3}$. It uses a minimal number of 1-sets.
\end{theorem}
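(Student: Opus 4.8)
We sketch the approach. Throughout, write $n=|U|$, let the algorithm's output consist of $a$ triples, $b$ pairs and $c$ singletons, and fix an optimal disjoint cover with $o_3$ triples, $o_2$ pairs and $o_1$ singletons. Since both covers are disjoint and cover $U$, we have $3a+2b+c=n=3o_3+2o_2+o_1$, hence the output has size $n-2a-b$ and the optimum has size $n-2o_3-o_2$. The first part of the plan is to rewrite the target bound: $|\mathrm{ALG}|\le\tfrac43|\mathrm{OPT}|$ is equivalent to the linear inequality $5o_3+2o_2\le 6a+3b+o_1$, which I would prove by charging the quantity $6a+3b+o_1$ against the sets of the optimum, giving each optimal triple a charge of at most $5$ and each optimal pair a charge of at most $2$. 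Let $A$ denote the set of elements that the algorithm covers by triples, so $|A|=3a$, and recall that the algorithm covers $U\setminus A$ by a maximum matching ($b$ edges) together with the remaining $c$ vertices as singletons.

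\emph{Matching part and minimality of $1$-sets.} Because the $b$ pairs form a maximum matching on $U\setminus A$ and $\mathscr S$ is closed under subsets, no two singletons of the output lie in a common member of $\mathscr S$; in particular every set of the optimum contains at most one output-singleton. Moreover, given $A$, every cover of $U\setminus A$ by pairs and singletons uses at least $|U\setminus A|-2b$ singletons, and since the semi-local acceptance rule performs any modification that keeps the total number of sets fixed while reducing the number of singletons, the output is $1$-set-optimal — this already yields the second assertion of the theorem. The same reasoning gives a lower bound on $b$: any optimal set meeting $U\setminus A$ in two or three elements contributes an edge to some matching of $U\setminus A$, so $b$ is at least the number of optimal triples meeting $U\setminus A$ in at least two elements plus the number of optimal pairs contained in $U\setminus A$.

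\emph{Core argument: exploiting $(2,1)$-local optimality.} It remains to show that $a$ and $b$ are large enough, i.e.\ that the optimum cannot contain too many triples and pairs that lie entirely, or almost entirely, outside $A$. The plan is to argue by contradiction from a forbidden configuration: if some output triple $T$ intersects too many optimal sets whose other elements are covered by pairs/singletons of the algorithm, then deleting $T$ frees its three elements, and one can reassemble two new disjoint triples from those three freed elements together with three elements currently covered by pairs/singletons (using the structure of the optimum inside $U\setminus A$), after which re-solving the matching increases neither the total number of sets nor the number of $1$-sets — an accepted $(2,1)$-improvement, contradicting local optimality. Turning this into the charging scheme above requires a case analysis according to how each optimal set distributes its elements among "covered by a triple", "covered by a pair" and "covered by a singleton" in the algorithm's solution (only a few patterns are possible for triples and for pairs), and then combining the resulting per-set inequalities; equivalently, one can analyze the connected components of the union of the two covers and verify $5o_3+2o_2\le 6a+3b+o_1$ component by component.

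\emph{Main obstacle.} The real difficulty is the core argument: a single algorithm triple can simultaneously "block" several optimal triples, and the delicate point is that the three elements taken away from the matching when building the two new triples must be chosen so that the re-optimized matching loses at most one edge — otherwise the exchange is not an improvement. Handling this forces the intricate case distinction on overlap patterns, and it is precisely the matching-optimality and $1$-set-minimality established in the first step that guarantee the re-optimization cannot silently create new singletons, thereby making the needed $(2,1)$-improvements admissible.
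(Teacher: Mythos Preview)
The paper does not prove this theorem at all: it is quoted as a known result from Duh and F\"{u}rer \cite{furer} and used as a black box throughout. There is therefore no ``paper's own proof'' to compare your attempt against.

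Evaluating your sketch on its own merits, there are two genuine gaps. First, your argument for minimality of $1$-sets is too weak for what the theorem asserts (and for how the paper uses it). You only show that, \emph{given the algorithm's set $A$ of triple-covered elements}, the maximum matching minimises singletons on $U\setminus A$; and that no accepted $(2,1)$-move reduces singletons further. That is local optimality, not the global statement ``no cover of $U$ uses fewer $1$-sets'' --- which is exactly what the paper later relies on (see its reference to Lemma~2.2 of \cite{furer}). A different choice of triples could in principle leave a residual graph with a larger matching; ruling this out requires the local-improvement structure in a nontrivial way, and you have not supplied that step.

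Second, the ``core argument'' is a plan, not a proof. You correctly reduce the ratio bound to $5o_3+2o_2\le 6a+3b+o_1$ and correctly identify the obstacle: when you delete one algorithm triple and try to insert two optimum triples, you must guarantee the re-optimised matching loses at most one edge so that the swap is an admissible $(2,1)$-improvement. But you then defer the entire case analysis that would establish this. That case analysis \emph{is} the proof; without it, nothing has been shown. The original Duh--F\"{u}rer argument carries out precisely this kind of analysis, and it is not short.
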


We use the semi-local $(2,1)$-improvement as the basis of our $k$-Set Cover algorithm. Other phases of the algorithm are based on the set packing heuristic \cite{schrijver}. For fixed $s$, the heuristic starts with an arbitrary maximal packing, it replaces $p\leq s$ sets in the packing with $p+1$ sets if the resulting collection is still a packing. Hurkens and Shrijver \cite{schrijver} show the following result,

\begin{theorem}[\cite{schrijver}]
For all $\epsilon>0$, the local search $k$-Set Packing algorithm for parameter $s=O(\log_k \frac{1}{\epsilon})$ has an approximation ratio $\frac{2}{k}-\epsilon$.
\end{theorem}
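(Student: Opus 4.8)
The plan is to fix a local optimum $\mathcal{A}$ returned by the algorithm and a maximum packing $\mathcal{B}$, and to prove $|\mathcal{B}|\le(\tfrac{k}{2}+O(k^{2}\epsilon))|\mathcal{A}|$, which rearranges to $|\mathcal{A}|\ge(\tfrac{2}{k}-\epsilon)|\mathcal{B}|$. After deleting the sets lying in $\mathcal{A}\cap\mathcal{B}$ (they contribute equally to both sides) we may assume $\mathcal{A}$ and $\mathcal{B}$ are disjoint as families. Consider the bipartite intersection graph $G$ with parts $\mathcal{A},\mathcal{B}$ and an edge $ab$ whenever $a\cap b\neq\emptyset$. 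Two facts are immediate: every $a\in\mathcal{A}$ has $\deg_G(a)\le k$ (its $\le k$ elements lie in distinct sets of the packing $\mathcal{B}$), and every $b\in\mathcal{B}$ has $\deg_G(b)\ge 1$ (otherwise $b$ is disjoint from all of $\mathcal{A}$ and the algorithm, being at a local optimum, would have added it as a $p=0$ swap). The whole argument rests on the following translation of local optimality, which I would isolate as a lemma: for every $U\subseteq\mathcal{A}$ with $|U|\le s$ one has $|\mathcal{B}(U)|\le|U|$, where $\mathcal{B}(U):=\{b\in\mathcal{B}:N_G(b)\subseteq U\}$. Indeed, if $|\mathcal{B}(U)|>|U|$ then, with $U'=\bigcup_{b\in\mathcal{B}(U)}N_G(b)\subseteq U$, the family $(\mathcal{A}\setminus U')\cup\mathcal{B}(U)$ is a strictly larger packing obtained by a swap of size $|U'|\le|U|\le s$, contradicting local optimality.

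Applying the lemma with $U=\{a\}$ shows that each $a\in\mathcal{A}$ is the sole neighbour of at most one ``pendant'' set of $\mathcal{B}$, so the number $p$ of degree-one sets of $\mathcal{B}$ is at most $|\mathcal{A}|$. Every other set of $\mathcal{B}$ has degree $\ge 2$, so from $\sum_b\deg_G(b)=\sum_a\deg_G(a)\le k|\mathcal{A}|$ we get $p+2(|\mathcal{B}|-p)\le k|\mathcal{A}|$, i.e.\ $|\mathcal{B}|\le\tfrac12(k|\mathcal{A}|+p)\le\tfrac{k+1}{2}|\mathcal{A}|$. This already beats the naive $H_k$-style argument but falls short of $\tfrac{2}{k}$: the slack is exactly the term $p$, and $p$ really can be as large as $|\mathcal{A}|$, so one cannot win by bounding the pendant count in isolation.

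To close the gap I would process the pendants by an iterative ``augmenting-chain'' argument. Call the $\le s$ neighbours of a low-degree set of $\mathcal{B}$, together with the pendants sitting on them, an \emph{alternating chain}; the lemma says precisely that no such chain, and no union of a bounded number of them, is itself an improving swap. Starting from the pendant-carrying sets of $\mathcal{A}$, one grows a maximal collection of vertex-disjoint alternating chains of total size $\le s$; each time a chain cannot be extended, the lemma forces the sets of $\mathcal{B}$ that would have continued it to reach into $\mathcal{A}$-sets not yet used by any chain, so the pool of ``still unaccounted, still thin'' sets of $\mathcal{B}$ shrinks by a constant factor (of order $1-\Theta(1/k)$, coming from the branching factor $\le k$ of a $\mathcal{B}$-set). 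After $\Theta(s)$ rounds all but an $O\!\bigl((1-\Theta(1/k))^{s}\bigr)=O(k^{-\Omega(s)})$ fraction of $\mathcal{B}$ has been absorbed into chains, each of which contributes a surplus of $\mathcal{A}$-sets; feeding this back into the degree inequality of the previous paragraph — so that the offending term $p$ is replaced by an $O(k^{-\Omega(s)})$ fraction of $|\mathcal{B}|$ — yields $|\mathcal{B}|\le\bigl(\tfrac{k}{2}+O(k^{-\Omega(s)})\bigr)|\mathcal{A}|$. Taking $s=\Theta(\log_k\tfrac1\epsilon)$ makes the error at most $\epsilon$, which is the claimed ratio $\tfrac{2}{k}-\epsilon$.

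The delicate point — and the reason the requirement on $s$ is only logarithmic — is the bookkeeping in the chain-growing step: one must keep every chain, and every union of chains invoked by the lemma, of total size $\le s$, while still guaranteeing that enough sets of $\mathcal{B}$ are captured for the geometric decay to kick in, and one must ensure that the $\mathcal{A}$-sets used by distinct chains are genuinely disjoint so that their surpluses add. Getting this accounting tight is exactly where the combinatorics becomes intricate (in essence it is the system-of-distinct-representatives argument of \cite{schrijver}), and it is also where the factor $2$ rather than $1$ in $\tfrac{2}{k}$ finally enters. Finally, I would record the matching lower bound: for every fixed $s$ there are $k$-uniform set systems — equivalently, $(k+1)$-claw-free graphs — on which some packing is a local optimum for swaps of size $\le s$ yet has size only $\bigl(\tfrac{2}{k}+\Omega(k^{-\Omega(s)})\bigr)$ times the optimum, so the $\epsilon$ cannot be removed for any constant $s$.
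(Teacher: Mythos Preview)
The paper does not prove this theorem at all: it is quoted as a known result of Hurkens and Schrijver and simply cited. So there is no ``paper's own proof'' to compare against; the only question is whether your sketch stands on its own.

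Your setup is correct and standard: the bipartite intersection graph, the degree bounds $\deg_G(a)\le k$ and $\deg_G(b)\ge 1$, and above all the lemma that $|\mathcal{B}(U)|\le|U|$ for every $U\subseteq\mathcal{A}$ with $|U|\le s$ are exactly the right translation of local optimality, and the pendant argument giving $|\mathcal{B}|\le\tfrac{k+1}{2}|\mathcal{A}|$ is the right warm-up. But the step that carries the entire theorem---the ``iterative augmenting-chain'' paragraph---is not actually carried out. You assert that the pool of unaccounted thin $\mathcal{B}$-sets shrinks by a factor $1-\Theta(1/k)$ per round, but you give no mechanism that produces this decay, no definition of what a ``round'' is, and no proof that the chains can be kept of total size $\le s$ while still capturing a constant fraction of the remaining sets. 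You yourself flag this (``Getting this accounting tight is exactly where the combinatorics becomes intricate'') and then refer back to \cite{schrijver}. That is circular: the whole content of the Hurkens--Schrijver theorem lives in precisely this bookkeeping, and their argument is not a chain-growing procedure but a direct extremal count (an inequality on the number of $\mathcal{B}$-sets whose neighbourhoods are contained in small subsets of $\mathcal{A}$, proved by induction on $s$). As written, your proposal is a plausible outline whose hard step is left as a pointer to the very paper you are trying to reprove; it is not a proof.
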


The worst-case ratio is also known to be tight. We apply this packing heuristic for $k\geq 7$. For $k=6,5,4$, we follow the intuition of the semi-local improvement and modify the local search of the packing heuristic, requiring that any improvement does not increase the number of 1-sets. We use the semi-local (2,1)-improvement for 3-Set Cover to compute the number of 1-sets required to finish the cover. Lemma 2.2 in \cite{furer} guarantees that the number of 1-sets returned by the semi-local (2,1)-improvement is no more than this number in any optimal solution. We compute this number first at the beginning of the restricted phase. Each time we want to make a replacement via the packing heuristic, we compute the number of 1-sets needed to finish the cover after making the replacement. If this number increases, the replacement is prohibited. To summarize, we call our algorithm the Restricted Packing-based $k$-Set Cover algorithm (PRPSLI) and give the pseudo-code in Algorithm 1. For input parameter $\epsilon>0$, $s_i$ is the parameter of the local improvement in Phase $i$. For any $i\neq 5,6$, we set $s_i$ in the same way as in Theorem 2. For $i=5,6$, we set $s_i=\lceil\frac{2}{i\epsilon}\rceil$.

\begin{algorithm}
\caption{Packing-based $k$-Set Cover Algorithm (PRPSLI)}
\begin{algorithmic}

\STATE \COMMENT{\textbf{\emph{ The $k$-Set Packing Phase}}}
\FOR{$i\leftarrow k$ down to 7} \STATE Select a maximal collection of disjoint $i$-sets.\REPEAT \STATE Select $p\leq s_i$ $i$-sets and replace them with $p+1$ $i$-sets. \UNTIL{there exist no more such improvements.} \ENDFOR

\STATE \COMMENT{\textbf{\emph{ The Restricted $k$-Set Packing Phase}}}
\STATE Run the semi-local $(2,1)$-improvement algorithm for 3-Set Cover on the remaining uncovered elements to obtain the number of 1-sets.
\FOR{$i\leftarrow 6$ to 4}
\REPEAT \STATE Try to replace $p\leq s_i$ $i$-sets with $p+1$ $i$-sets. Commit to the replacement only if the number of 1-sets computed by the semi-local $(2,1)$-improvement algorithm for 3-Set Cover on the remaining uncovered elements does not increase. \UNTIL{there exist no more such improvements.} \ENDFOR

\STATE \COMMENT{\textbf{\emph{ The Semi-Local Optimization Phase}}}
\STATE Run the semi-local $(2,1)$-improvement algorithm on the remaining uncovered elements.

\end{algorithmic}
\end{algorithm}

The algorithm clearly runs in polynomial time. The approximation ratio of PRPSLI is presented in the following main theorem. For completeness, we also state the approximation ratio for the 3-Set Cover problem, which is obtained by Duh and F\"{u}rer \cite{furer} and remains the best result. Let $\rho_k$ be the approximation ratio of the $k$-Set Cover problem.

\begin{theorem}[Main]
For all $\epsilon >0$, the Packing-based $k$-Set Cover algorithm has an approximation ratio $\rho_k=2H_k-H_{\frac{k}{2}}+\frac{2}{k}-\frac{1}{k-1}-\frac{4}{3}+\epsilon$ for even $k$ and $k\geq 6$; $\rho_k=2H_k-H_{\frac{k-1}{2}}-\frac{4}{3}+\epsilon$ for odd $k$ and $k\geq 7$; $\rho_5=1.7333$;
$\rho_4=1.5208$; $\rho_3=\frac{4}{3}$.
\end{theorem}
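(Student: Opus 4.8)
The plan is to analyze the cost of the cover produced by PRPSLI phase by phase using the factor-revealing linear program, bounding the contribution of each phase against the optimal solution. Let $O$ denote a fixed optimal disjoint set cover, and for each $j$ let $o_j$ be the number of $j$-sets in $O$, so $\mathrm{OPT}=\sum_j o_j$. The overall strategy is: (i) establish structural lemmas describing how many sets of each size survive into each phase when a maximal/locally optimal packing has been computed, using the approximation guarantees of Theorems 2, 3 and the Restricted $k$-Set Packing result proved in Section 3; (ii) encode these constraints as a linear program whose variables are the numbers of sets of each size handled in each phase and whose objective maximizes the ratio $\mathrm{ALG}/\mathrm{OPT}$; (iii) exhibit an explicit dual solution (a set of multipliers) certifying the claimed bound $\rho_k$, and separately construct tight instances matching it.

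First I would set up the accounting. The $k$-Set Packing phase, run for $i$ from $k$ down to $7$, replaces groups of $i$-sets; by Theorem 3 (Hurkens--Schrijver) each such phase leaves at most a $\frac{2}{i}+\epsilon$-fraction deficit relative to the best packing of $i$-sets, which by a standard argument (as in Athanassopoulos et al.\ \cite{lp}) translates into the inequality that the number of elements still uncovered after processing sizes $\geq i$ is controlled by $\sum_{j\geq i} o_j$ with coefficient roughly $1-\frac{2}{i}$. Collecting these over all $i$ from $k$ down to $7$ produces the $2H_k - H_{k/2}$-type main term, exactly as in the $H_k - 0.5902$ analysis, but now pushed down to $i=7$. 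Then for $i=6,5,4$ I would invoke the Restricted $k$-Set Packing bounds from Section 3 (approximation ratio $\frac{2}{i}$ for $i=5,6$, and $\frac{7}{16}$ for $i=4$); the key extra ingredient here is Lemma 2.2 of \cite{furer}, which guarantees the number of $1$-sets in PRPSLI's solution never exceeds the number of $1$-sets in an optimal cover — this is what lets the restricted phases be charged as favorably as the unrestricted ones despite the added constraint. Finally the semi-local $(2,1)$-phase on the residual $3,2,1$-sets contributes a $\frac{4}{3}$-type term via Theorem 1, and crucially contributes the $-\frac{4}{3}$ saving (more precisely, the difference between $\frac{4}{3}$ and the naive $H_3 = \frac{11}{6}$ bound) that, combined with the $-\frac{1}{k-1}$ and $+\frac{2}{k}$ lower-order corrections from the packing phases, yields the constant $-0.6402$ in the limit.

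Next I would assemble the factor-revealing LP: the variables count, for each optimal set size $j$ and each phase, how many elements of the $j$ optimal sets get covered where; the constraints are the per-phase packing/covering inequalities just described together with disjointness and the $1$-set monotonicity constraint; the objective is $\mathrm{ALG} = \sum(\text{sets used in each phase})$ divided by $\sum_j o_j = 1$ (normalized). I would then present an explicit feasible dual solution whose value equals $\rho_k$ for each parity class, verifying feasibility by a finite check of the dual constraints (one per primal variable, i.e.\ $O(k)$ inequalities), and evaluate it separately for $k=4,5,6$ to get the numerical values $1.5208$, $1.7333$, $1.8667$. The case $k=3$ is immediate from Theorem 1. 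Tightness requires, for each $k$, a family of instances on which PRPSLI is forced into the worst case simultaneously in every phase; I would build these by stacking the known tight examples for the packing heuristic \cite{schrijver}, for Restricted $4$-Set Packing (from Section 3), and for semi-local $(2,1)$ \cite{furer}, arranging the optimal sets so that the adversarial local optima in one phase feed exactly the adversarial configuration of the next.

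The main obstacle I expect is twofold. The hard analytic part is the Restricted $k$-Set Packing ratio itself (Section 3), but that is assumed here; within this theorem the real difficulty is making the phase-to-phase hand-off rigorous — showing that the residual instance entering the Restricted phase, and then the one entering the semi-local phase, still satisfies the hypotheses needed (in particular that the $1$-set count computed at the start of the Restricted phase, and maintained monotonically, agrees with what the semi-local phase ultimately needs, so that no double counting or slack is lost). Getting the lower-order terms $+\frac{2}{k}-\frac{1}{k-1}$ exactly right, with the correct dependence on the parity of $k$ (which controls whether the packing recursion bottoms out cleanly at $H_{k/2}$ or $H_{(k-1)/2}$), is the fiddly bookkeeping that the LP formulation is designed to automate; I would lean on the LP duality to certify it rather than track every term by hand.
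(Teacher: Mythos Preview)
Your proposal is correct and follows essentially the same approach as the paper: set up a factor-revealing LP with per-phase constraints coming from the packing ratios (Theorems~2 and~4) and the $1$-set monotonicity guaranteed by the restricted phases, then certify the bound via an explicit dual solution and match it with stacked tight examples. The paper carries this out with the variables $b_{i,j}$ and $a_1$ and a fully explicit dual assignment (Appendix~9), and the ``phase-to-phase hand-off'' issue you flag is exactly what the paper's Lemma~2 (existence of an optimal cover simultaneously minimizing size and $1$-set count) is there to resolve.
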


\begin{remark}
For odd $k\geq 7$, the approximation ratio $\rho_k$ is derived from the expression $\rho_k=\frac{2}{k}+\cdots+\frac{2}{5}+\frac{1}{3}+1+\epsilon$. We can further obtain the asymptotic representation of $\rho_k$, i.e., $\rho_k=2H_k-H_{\frac{k-1}{2}}-\frac{4}{3}+\epsilon=H_k+\ln 2-\frac{4}{3}+\Theta(\frac{1}{k})+\epsilon=H_k-0.6402+\Theta(\frac{1}{k})+\epsilon$. Similarly, for even $k\geq 6$, $\rho_k=\frac{2}{k}+\frac{1}{k-1}+\frac{2}{k-3}+\cdots+\frac{2}{5}+\frac{1}{3}+1+\epsilon=2H_k-H_{\frac{k}{2}}+\frac{2}{k}-\frac{1}{k-1}-\frac{4}{3}+\epsilon=H_k+\ln 2-\frac{4}{3}+\Theta(\frac{1}{k})+\epsilon=H_k-0.6402+\Theta(\frac{1}{k})+\epsilon$.
Finally, $\rho_{5} = \frac{2}{5} + \frac{1}{3} + 1$ and $\rho_{4} =  \frac{7}{16} + \frac{1}{12} + 1$.
\end{remark}

\begin{remark}
Restriction on Phase 6 is only required for obtaining the approximation ratio $\rho_k$ for even $k$ and $k\leq 12$. In other cases, only restriction on Phase 5 and Phase 4 are necessary.
\end{remark}


We prove the main theorem in Section 4. Before that, we analyze the approximation ratio of the Restricted $k$-Set Packing algorithm for $k\geq 4$ in Section 3. We state the result of the approximation ratio of the Restricted $k$-Set Packing algorithm as follows.

\begin{theorem}[Restricted $k$-Set Packing]
There exists a Restricted 4-Set Packing algorithm which has an approximation ratio $\frac{7}{16}$. For all $\epsilon>0$ and for any $k\geq 5$, there exists a Restricted $k$-Set Packing algorithm which has an approximation ratio $\frac{2}{k}-\epsilon$.
\end{theorem}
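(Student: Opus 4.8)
The plan is to run a local-search analysis in the style of Hurkens–Schrijver, but carefully track the extra constraint that no improvement may increase the number of $1$-sets needed to finish the cover. Fix a locally optimal packing $A$ produced by the Restricted $k$-Set Packing algorithm, and let $B$ be an optimal packing (a maximum one). As usual, consider the ``conflict graph'' on $A\cup B$: each set in $B$ is adjacent to the sets of $A$ that intersect it. The standard argument shows that if $A$ were much smaller than $B$, one could find a short augmenting configuration — a collection of $p\le s$ sets of $A$ whose removal frees up $p+1$ sets of $B$ — contradicting ordinary local optimality. For $k\ge 5$ I would argue that such a configuration can always be chosen so that it does \emph{not} increase the $1$-set count, so the restricted algorithm behaves exactly like the unrestricted one and inherits the $\frac{2}{k}-\epsilon$ ratio; this is the content of the claim ``for any $k\ge 5$ the Restricted $k$-Set Packing algorithm achieves the same ratio as the unrestricted heuristic.'' The key observation is that when $p+1$ new sets of size $k\ge 5$ replace $p$ old ones, the newly covered elements (at least one net new element per improvement, and the freed elements of the removed $A$-sets) can be re-covered using the new $B$-sets themselves, so by Lemma 2.2 of \cite{furer} the minimum number of $1$-sets does not go up. One must check that the local structure of the augmenting configuration leaves enough room for this; the size bound $k\ge 5$ is what gives the slack.

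For $k=4$ this slack disappears, and the honest bound degrades to $\frac{7}{16}$. Here I would set up a counting argument directly. Partition the elements of $U$ covered in this phase according to how $A$ and $B$ cover them, and assign charges: each $4$-set of $B$ distributes a unit of charge to the $A$-sets meeting it, and local optimality (now the \emph{restricted} version) bounds how much charge any single $A$-set can receive. Because improvements that would create a new $1$-set are forbidden, the adversary can force configurations in which a $4$-set of $A$ is ``blocked'' — it intersects several $B$-sets but cannot be swapped out — and the worst such pattern yields the ratio $\frac{7}{16}$ rather than $\frac12$. The matching lower-bound construction (tightness) is built by stacking these blocked gadgets: a family of instances where the restricted algorithm is stuck with a packing of size $\frac{7}{16}|B|$ because every candidate improvement, while legal as a packing move, would increase the number of residual $1$-sets.

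Concretely, the steps I would carry out are: (1) recall the reduction of local optimality to the nonexistence of short augmenting configurations in the conflict graph; (2) for $k\ge 5$, show any minimal augmenting configuration can be completed to one that is $1$-set-nonincreasing, using Lemma 2.2 of \cite{furer} to certify the $1$-set count via the semi-local $(2,1)$-solution on the residual elements; (3) conclude the $\frac{2}{k}-\epsilon$ bound for $k\ge 5$ by quoting Theorem~2; (4) for $k=4$, run the charging/discharging count with the restricted-optimality constraints to get the upper bound $|A|\ge\frac{7}{16}|B|$ up to lower-order terms; (5) exhibit the tight family of instances for $k=4$. The main obstacle is step (2): proving that \emph{every} short improvement the unrestricted algorithm would make can be realized (possibly after local surgery) without increasing the $1$-set count. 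This requires understanding how the semi-local $(2,1)$-optimum on the residual instance reacts to swapping a bounded number of $k$-sets, and showing that for $k\ge 5$ the net change in uncovered elements is always absorbable into even-size sets; I expect this to be the technically delicate combinatorial core of Section 3, and the place where the $k=4$ case genuinely splits off.
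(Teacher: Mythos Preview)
Your step (2) is the crux, and as stated it does not go through. You propose to show that whenever the unrestricted heuristic has an $s$-bounded improvement, the restricted algorithm also has one---so that at termination the restricted packing is unrestricted-locally-optimal and Theorem~2 applies directly. But this reduction is not what the paper proves, and the paper's own framework indicates it is false. The paper explicitly allows \emph{blocking}: situations where a valid packing swap is prohibited because it increases the $1$-set count (this is exactly what Section~3.1 and Example~1 formalize). For $k\ge 5$ the paper does \emph{not} argue that blocking is absent; rather, it accepts that the terminal packing may fail to be unrestricted-locally-optimal and bounds the ratio anyway. Concretely, the paper fixes a lexicographically-minimal extension of the packing by $1$-, $2$-, $3$-sets, builds a \emph{blocking forest} $\mathscr{F}$ whose vertices are optimal sets and whose edges come from crossing $2$-sets in that extension, proves structural facts about $\mathscr{F}$ (Propositions~1--4), and then runs a token-redistribution argument over $\mathscr{F}$ (Propositions~15--17 for $k\ge 5$, Propositions~5--14 for $k=4$). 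A further sign that your reduction cannot be right: for $k=5,6$ the paper needs the larger local-search parameter $s=\lceil 2/(k\epsilon)\rceil$ rather than the $O(\log_k(1/\epsilon))$ of Theorem~2, precisely because the restricted analysis is not a black-box call to Hurkens--Schrijver.

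Your informal justification for step~(2) also misidentifies the obstruction. The danger is not the elements newly covered by the incoming $B$-sets; it is the elements that were covered by the $p$ outgoing $A$-sets and are \emph{not} covered by the $p{+}1$ incoming $B$-sets. Those elements fall into the residual instance, and nothing in ``$k\ge 5$ gives slack'' prevents one of them from being forced into a $1$-set by the semi-local $(2,1)$-optimum. The paper's blocking-tree machinery is exactly the device that tracks how such forced $1$-sets propagate, and the $k\ge 5$ vs.\ $k=4$ split emerges from how many free tokens a leaf of $\mathscr{F}$ carries ($k-3$ units), not from any claim that improvements are never blocked. For $k=4$ your charging sketch is closer in spirit, but without the blocking-forest decomposition you have no handle on \emph{which} $A$-sets are blocked or how to route charge among optimal sets that are far apart in the conflict graph; the three-round redistribution in Section~3.2 is doing real work that a local per-$A$-set charge bound does not capture.
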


\begin{remark}
Without loss of generality, we assume that optimal solution of the Restricted $k$-Set Packing problem also has the property that it does not increase the number of 1-sets needed to finish the cover of the remaining uncovered elements. This assumption is justified by Lemma 2 in Appendix 9.1.
\end{remark}

\section{The Restricted $k$-Set Packing Algorithm}

We fix one optimal solution $\mathscr{O}$ of the Restricted $k$-Set Packing algorithm. We refer to the sets in $\mathscr{O}$ as optimal sets. For fixed $s$, a local improvement replaces $p\leq s$ $k$-sets with $p+1$ $k$-sets. We pick a packing of $k$-sets $\mathscr{A}$ that cannot be improved by the Restricted $k$-Set Packing algorithm. We say an optimal set is an \emph{$i$-level set} if exactly $i$ of its elements are covered by sets in $\mathscr{A}$. For the sake of analysis, we call a local improvement an \emph{$i$-$j$-improvement} if it replaces $i$ sets in $\mathscr{A}$ with $j$ sets in $\mathscr{O}$. As a convention in the rest of the paper, small letters represent elements, capital letters represent subsets of $U$, and calligraphic letters represent collections of sets. We first introduce the notion of blocking.

\subsection{Blocking}

The main difference between unrestricted $k$-set packing and restricted $k$-set packing is the restriction on the number of 1-sets which are needed to finish the covering via the semi-local (2,1)-improvement. This restriction can prohibit a local improvement. If any $i$-$j$-improvement is prohibited because of an increase of 1-sets, we say there exists a $blocking$. In Example 1 given in Appendix Section 6.1, we construct an instance of 4-set packing to help explain how blocking works.

We now define blocking formally. We are given a fixed optimal $k$-set packing $\mathscr{O}$ of $U$ and a $k$-set packing $\mathscr{A}$ chosen by the Restricted $k$-Set Packing algorithm. We consider all possible extensions of $\mathscr{A}$ to a disjoint cover of $U$ by 1-sets, 2-sets and 3-sets. We order these extensions lexicographically, first by the number of 1-sets, second by the total number of 2-sets and 3-sets which are not within a $k$-set of $\mathscr{O}$, and third by the number of 3-sets which are not within a $k$-set of $\mathscr{O}$. We are interested in the lexicographically first extension. Notice that we pick this specific extension for analysis only. We cannot obtain this ordering without access to $\mathscr{O}$. We explain how we order the extensions in Example 2 (Appendix Section 6.2).

Suppose we finish the cover from the packing $\mathscr{A}$ with the lexicographically first extension. Let $\mathscr{F}$ be an undirected graph such that each vertex in $\mathscr{F}$ represents an optimal set. Two vertices are adjacent if and only if there is a 2-set in the extension intersecting with the corresponding optimal sets. Since the number of 2-sets and 3-sets not within an optimal set is minimized, there are no multiple edges in the graph. For brevity, when we talk about a node $V$ in $\mathscr{F}$, we also refer to $V$ as the corresponding optimal set. Moreover, when we say the \emph{degree} of a node $V$, we refer to the number of neighbors of $V$.

\begin{proposition}
$\mathscr{F}$ is a forest.
\end{proposition}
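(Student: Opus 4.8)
The plan is to show that $\mathscr{F}$ contains no cycle by a minimality argument on the chosen extension. Suppose for contradiction that $\mathscr{F}$ has a cycle $V_0, V_1, \dots, V_{m-1}, V_0$, where consecutive vertices are joined by edges corresponding to 2-sets $P_0, P_1, \dots, P_{m-1}$ in the extension, with $P_j$ meeting $V_j$ and $V_{j+1}$ (indices mod $m$). Each 2-set $P_j$ has one element $a_j \in V_j$ and one element $b_j \in V_{j+1}$, with all these $2m$ elements distinct (no multiple edges, and a cycle visits distinct optimal sets). The idea is that these 2-sets, together with the edges of the optimal sets $V_0,\dots,V_{m-1}$ along the cycle, form a structure where we can re-route the cover to reduce one of the later lexicographic criteria without increasing the number of 1-sets, contradicting the choice of the lexicographically first extension.

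First I would make precise how to re-route. Within each optimal set $V_j$ we have two designated elements $b_{j-1}, a_j$ (the endpoints of the two cycle-2-sets incident to $V_j$). Replace the $m$ 2-sets $P_0,\dots,P_{m-1}$ by a new collection of 2-sets, each of which pairs $a_j$ with $b_{j-1}$ inside $V_j$ — i.e. pair up the two marked elements within the same optimal set. This is legal because $|V_j| = k \geq 2$ contains both elements, and these new pairs are disjoint from each other and from everything else in the extension (we only touched the elements $a_j, b_j$, which are exactly the ones the $P_j$ covered). The new extension has the same number of 1-sets and the same total number of 2-sets and 3-sets, but strictly fewer 2-sets that are \emph{not} contained within an optimal set: we removed $m \geq 1$ such "crossing" 2-sets and introduced only 2-sets living inside a single optimal set. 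This decreases the second lexicographic coordinate (with the first unchanged), contradicting minimality. Hence no cycle exists, so each connected component of $\mathscr{F}$ is a tree and $\mathscr{F}$ is a forest.

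A couple of details need care. I must confirm the new cover is still a valid disjoint cover of $U$ by 1-, 2-, and 3-sets: since we partition the same element set $\{a_j, b_j : j\}$ differently (from a perfect matching "across" cycle steps to a perfect matching "within" optimal sets) and leave all other sets of the extension untouched, disjointness and coverage are preserved, and no set exceeds size 3 (indeed all new sets have size 2). I also need the cycle to have length $m \geq 3$ so that the $V_j$ are genuinely distinct — but $\mathscr{F}$ is simple (no loops, no multiple edges, as argued in the text from the minimality of the count of non-optimal 2- and 3-sets), so any cycle has length at least $3$ and its vertices are distinct optimal sets; a loop would require a 2-set with both endpoints in the same optimal set, which is not an edge of $\mathscr{F}$ at all, and a length-2 "cycle" would be a multiple edge, already excluded.

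The main obstacle, and the place I would spend the most effort, is verifying that the swap genuinely does not increase the number of 1-sets — i.e. that the first lexicographic coordinate is truly unchanged. The subtlety is that in principle re-routing 2-sets could free up elements elsewhere or create new forced 1-sets; but here the operation is element-preserving on the $2m$ affected elements and changes nothing else, so the multiset of covered elements is identical and the number of 1-sets is literally unchanged rather than merely non-increasing. Once this is nailed down, the contradiction with lexicographic minimality of the chosen extension is immediate, and the proposition follows.
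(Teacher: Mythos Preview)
Your proposal is correct and follows essentially the same argument as the paper: assume a cycle, delete the crossing 2-sets along the cycle, and re-pair their endpoints inside each optimal set, thereby strictly decreasing the second lexicographic coordinate while leaving the first unchanged. The paper's proof is a terse two-line version of exactly this swap; your additional care about distinctness of endpoints, validity of the new cover, and the simplicity of $\mathscr{F}$ is warranted but not strictly new.
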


\begin{proposition}
For any $i<k-1$, there is no 1-set inside an $i$-level set. i.e. 1-set can only appear in $(k-1)$-level sets.
\end{proposition}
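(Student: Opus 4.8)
I want to show: if some element $x$ is covered by a 1-set in the lexicographically-first extension, and $x$ lies in an optimal set $O$, then $O$ must be a $(k-1)$-level set, i.e. $k-1$ of its elements are already covered by the packing $\mathscr{A}$. Equivalently, at most one element of $O$ is covered by $1,2,3$-sets of the extension (and that one element is exactly $x$, covered by the 1-set). The natural approach is by contradiction: assume $O$ is an $i$-level set with $i < k-1$, so $O$ has at least two elements, say $x$ and $y$, not covered by $\mathscr{A}$, hence both covered by sets of the extension. I would then exhibit a modification of the extension that is lexicographically smaller, contradicting minimality.

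First I would set up the case analysis on how $y$ is covered in the extension. If $y$ is covered by a 1-set, then $\{x\}$ and $\{y\}$ are two 1-sets both inside $O$; replacing them by the single 2-set $\{x,y\}$ (which is disjoint from everything else in the extension, since $x,y$ were each covered by their own singletons) strictly decreases the number of 1-sets — and $\{x,y\}$ lies within an optimal set, so it does not increase the second coordinate. This contradicts lexicographic minimality. If $y$ is covered by a 2-set $\{y,z\}$ in the extension, I would merge: remove the 1-set $\{x\}$ and the 2-set $\{y,z\}$ and insert the 3-set $\{x,y,z\}$ together with nothing else — wait, that removes one 1-set and one 2-set and adds one 3-set, which again lowers the number of 1-sets; the 3-set $\{x,y,z\}$ need not be inside an optimal set, but since we are comparing on the first coordinate (number of 1-sets), a strict decrease there already wins regardless of the other coordinates. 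If $y$ is covered by a 3-set $\{y,z,w\}$, we cannot simply absorb $x$ into it. Instead I replace the 1-set $\{x\}$ and the 3-set $\{y,z,w\}$ by the 2-set $\{x,y\}$ and the 2-set $\{z,w\}$: this removes one 1-set, keeps the same number of non-optimal-contained 2+3-sets changed by a controlled amount, but the key point is it eliminates a 1-set, again strictly decreasing the first lexicographic coordinate.

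The cleanest way to present all three cases uniformly: whenever $O$ is an $i$-level set with $i<k-1$ containing a 1-set $\{x\}$, pick any second uncovered-by-$\mathscr{A}$ element $y\in O$; whatever set $S$ of the extension covers $y$ has size $1$, $2$, or $3$, and in every case we can repartition $\{x\}\cup S$ (which has size $\le 4$) into disjoint $1/2/3$-sets so that the resulting partition uses strictly fewer singletons than $\{x\}$ and $S$ did — concretely, $\{x,y\}$ together with the $\le 2$ leftover elements of $S$ as one more set, or as two singletons only if forced, but size $|S|-1\le 2$ so the leftover is a single $1$- or $2$-set. Since this strictly lowers the number of 1-sets, we contradict the choice of the lexicographically first extension. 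Therefore no such $y$ exists, so $O$ is a $(k-1)$-level set.

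**Main obstacle.** The delicate point is the bookkeeping when $|S|=3$: removing $\{x\}$ and the $3$-set and adding $\{x,y\}$ and $\{z,w\}$ replaces a singleton by a $2$-set and a $3$-set by a $2$-set, so the first coordinate drops by exactly one and we are done immediately; one must just double-check that $\{x,y\}$ and $\{z,w\}$ are genuinely disjoint from the rest of the extension, which holds because $x,z,w$ were each previously covered only by $\{x\}$ or by $S$. The other subtlety is ensuring $x\ne y$ and that $\{x\}$ is really a separate set of the extension from $S$ — this follows because $x$ is covered by a $1$-set by hypothesis and the extension is a disjoint cover, so the set covering $y$ cannot also contain $x$. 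Once these disjointness checks are in place, the contradiction with lexicographic minimality on the number of $1$-sets is immediate and the proposition follows.
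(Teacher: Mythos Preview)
Your argument is correct and is in fact more direct than the paper's. Both proofs proceed by contradiction against lexicographic minimality of the extension and both exploit that an $i$-level set with $i<k-1$ has a second element $y$ not covered by $\mathscr{A}$. From there the two diverge.

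You simply merge $\{x\}$ with whatever set $S$ of the extension covers $y$: if $|S|=1$ form $\{x,y\}$; if $|S|=2$ form $\{x,y,z\}$; if $|S|=3$ split into $\{x,y\}$ and $\{z,w\}$. In every case the number of 1-sets drops, so the first lexicographic coordinate strictly decreases and you are done. The paper instead distinguishes whether $V_1$ contains an internal 2- or 3-set (then it absorbs the 1-set into that set, staying inside $V_1$) and, when it does not, follows a path in the forest $\mathscr{F}$ from $V_1$ to a degree-1 node $V_l$, deletes the cross 2-sets along that path, inserts 2-sets inside each intermediate node, and pushes the 1-set all the way to $V_l$. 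This path-rerouting is the same device the paper reuses verbatim in the proofs of Propositions~3 and~4, which is presumably why it is presented here; for Proposition~2 alone it is heavier than necessary, and your three-line case split suffices. The only thing your version relies on that the paper's avoids is the formation of the 3-set $\{x,y,z\}$ possibly straddling two optimal sets in Case~2; since the extensions are arbitrary disjoint covers by 1-, 2-, and 3-sets (an analysis-only construct, as the paper notes), this is not an obstacle.
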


\begin{proposition}
For any tree $\mathscr{T}$ in $\mathscr{F}$, there is at most one node which represents an $i$-level set, such that the degree of the node is smaller than $k-i$.
\end{proposition}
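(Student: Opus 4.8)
My plan is to argue by contradiction: suppose some tree $\mathscr{T}$ in $\mathscr{F}$ contains two distinct nodes $V_1$ and $V_2$, where $V_1$ is an $i_1$-level set with degree strictly less than $k-i_1$, and $V_2$ is an $i_2$-level set with degree strictly less than $k-i_2$. The intuition is that each node $V$ being an $i$-level set means $i$ of its elements are covered by sets of $\mathscr{A}$, so $k-i$ elements are covered by the extension (1-, 2-, and 3-sets). The degree of $V$ counts how many of those elements go into 2-sets that leave the optimal set. If the degree is smaller than $k-i$, then $V$ has "spare capacity": there are elements of $V$ covered by 1-sets, by 3-sets, or by 2-sets internal to $V$ — configurations that look wasteful and that one should be able to improve by a local packing move combined with re-running the semi-local $(2,1)$-improvement.

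The key step is to show that two such deficient nodes in the same tree can be used jointly to build a local improvement of the packing $\mathscr{A}$ that does not increase the number of 1-sets, contradicting the fact that $\mathscr{A}$ is locally optimal for the Restricted $k$-Set Packing algorithm. Concretely, I would (i) use Proposition 3 to locate, for each deficient node, the specific "bad" configuration on its spare elements (note Proposition 3 already forces that any 1-set lives only in a $(k-1)$-level set, so a deficient $i$-level set with $i<k-1$ has its slack realized entirely by internal 2-sets or by 3-sets); (ii) trace the unique tree path in $\mathscr{F}$ from $V_1$ to $V_2$, which is an alternating sequence of optimal sets and connecting 2-sets; and (iii) perform surgery along this path — swap the $k$-sets of $\mathscr{A}$ that currently block the optimal sets on the path out, bring in the corresponding optimal sets, and re-cover the leftover elements. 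Because $V_1$ and $V_2$ each have slack, the rerouting along the path can be absorbed without creating a new 1-set, and in fact the lexicographic-minimality of the chosen extension (first on 1-sets, then on external 2-/3-sets, then on external 3-sets) is what guarantees the bad configuration cannot have already been eliminated — so the improvement is genuinely available, contradicting local optimality.

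The main obstacle, and where I expect the real work to be, is controlling the size of the local improvement: the tree path between $V_1$ and $V_2$ could be long, and a naive swap would replace more than $s_i$ sets of $\mathscr{A}$ at once, which is not an allowed move. I would handle this by showing that the improvement can be localized — one does not need the whole path, only a bounded-length initial segment near one of the deficient nodes — or alternatively by a counting/charging argument showing that if every tree had two deficient nodes then already a single-node-type improvement of bounded size must exist somewhere. This is analogous to how bounded local-search moves suffice in the unrestricted Hurkens–Schrijver analysis; the subtlety here is carrying the "no new 1-set" side condition through the re-run of the semi-local $(2,1)$-improvement, for which I would invoke Lemma 2.2 of \cite{furer} (the minimality of 1-sets) together with the structural Propositions 2 and 3 to certify that the leftover elements after surgery can be packed into 2-sets and 3-sets without falling back on 1-sets.

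Finally I would verify the degenerate cases separately: a deficient node that is a leaf of $\mathscr{T}$, a deficient node that is itself the unique $(k-1)$-level set carrying a 1-set, and the case where $V_1$ and $V_2$ are adjacent. These should all reduce to the same surgery with a shorter path, and in the adjacent case the argument becomes essentially a single $1$-$1$ or $2$-$2$ improvement, which is the cleanest instance of the contradiction.
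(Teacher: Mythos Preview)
Your approach has a fundamental misidentification of what is being contradicted. You are trying to use two deficient nodes to build a local improvement to the packing $\mathscr{A}$ itself (swapping out sets of $\mathscr{A}$ and bringing in optimal sets), contradicting that $\mathscr{A}$ is locally optimal for the Restricted $k$-Set Packing algorithm. That is not how the proposition is proved, and it is why you run into the unbounded-path obstacle: there is no reason two deficient nodes in the same tree should yield a bounded-size replacement move on $\mathscr{A}$.

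The correct contradiction is against the \emph{lexicographic minimality of the chosen extension}, not against the local optimality of $\mathscr{A}$. Recall that $\mathscr{F}$ is defined using the lexicographically first extension of $\mathscr{A}$ to a full cover by 1-, 2-, and 3-sets, ordered first by number of 1-sets, second by the number of 2- and 3-sets not lying inside a single optimal set, third by the number of such 3-sets. The proof is pure surgery on this extension: take the unique path from $V_1$ to $V_2$ in $\mathscr{T}$, delete all the ``external'' 2-sets along the path (those connecting adjacent optimal sets), and for each intermediate node on the path re-cover the two freed elements by a single 2-set \emph{inside} that optimal set. At the endpoints $V_1$ and $V_2$, deficiency (degree $< k-i$) together with Proposition~2 (no 1-set in an $i$-level set for $i<k-1$) guarantees there is already an internal 2-set or 3-set, which can absorb the single freed element (a 2-set becomes a 3-set, a 3-set becomes two 2-sets). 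The number of 1-sets is unchanged, but the number of external 2-/3-sets strictly drops, contradicting lexicographic minimality. No move on $\mathscr{A}$ is involved at all, so path length is irrelevant.
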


For any tree, if there exists a node with property in Proposition 3, we define it to be the root. Otherwise, we know that all degree 1 nodes represent $(k-1)$-level sets. We define an arbitrary node not representing a $(k-1)$-level set to be the root. If there are only $(k-1)$-level sets in the tree, i.e. the tree degenerates to one edge or a single point, we define an arbitrary $(k-1)$-level set to be the root. All leaves represent $(k-1)$-level sets. (The root is not considered to by a leaf.) We call such a tree a \emph{blocking tree}. For any subtree, we say that the leaves \emph{block} the nodes in this subtree. We also call the set represented by a leaf a \emph{blocking set}.

We consider one further property of the root.

\begin{proposition}
Let $k \geq 4$.
In any blocking tree, there exists at most one node of either 0-level or 1-level that is of degree 2. If such a node exists, it is the root.
\end{proposition}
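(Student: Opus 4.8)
The plan is to derive this almost immediately from Proposition~3 and the root convention stated right after it. Suppose $V$ is a node of a blocking tree $\mathscr{T}$ that represents an $i$-level set with $i\in\{0,1\}$ and that has degree exactly $2$. Since $k\geq 4$, we have $k-i\geq k-1\geq 3>2=\deg(V)$; hence $V$ is an $i$-level node whose degree is strictly less than $k-i$, i.e.\ $V$ is precisely a node of the type singled out in Proposition~3.

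First I would use the uniqueness assertion of Proposition~3: a blocking tree contains at most one node that is $j$-level with degree smaller than $k-j$ (the index $j$ ranging over all levels). Any two distinct nodes that are each of $0$-level or $1$-level and of degree $2$ would both be such nodes, so no blocking tree can contain two of them; this gives the ``at most one'' part of the claim (covering the cases of two $0$-level nodes, two $1$-level nodes, or one of each simultaneously).

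Next I would invoke the root convention: if a blocking tree contains a node having the property of Proposition~3, that node is declared to be the root, and the uniqueness in Proposition~3 is exactly what makes this designation unambiguous. Since the node $V$ above has that property, $V$ must be the root, which is the second assertion. Finally I would dispose of the degenerate blocking trees: because $k\geq 4$ forces $k-1\geq 3>1$, a $0$-level or $1$-level set is never a $(k-1)$-level set, so a blocking tree that degenerates to a single edge or a single vertex (all of whose nodes are $(k-1)$-level) contains no $0$-level or $1$-level node at all and the statement holds vacuously there.

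I do not expect a genuine obstacle: all the combinatorial work is already contained in Proposition~3, and the only point that needs care is checking, uniformly for every $k\geq 4$, that ``$0$-level or $1$-level of degree $2$'' really does fall under the hypothesis $\deg(V)<k-i$ of Proposition~3 — which is the single inequality $2<3\leq k-i$ verified above.
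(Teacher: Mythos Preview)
Your argument is correct, and it takes a genuinely different route from the paper. You derive the statement as an immediate corollary of Proposition~3 together with the root convention: a $0$- or $1$-level node of degree~$2$ satisfies $\deg(V)<k-i$ since $k\geq 4$, so Proposition~3 guarantees uniqueness, and the root convention then forces this node to be the root. The paper instead gives a self-contained proof that essentially repeats the path-rerouting argument from the proof of Proposition~3: first it assumes two $0$/$1$-level nodes of degree~$2$ and reroutes $2$-sets along the connecting path to reduce the lexicographic key; then, assuming one such node $V_1$, it shows directly that no degree-$1$ node can have level $\leq k-2$ (again by rerouting along a path), concluding that $V_1$ is the root. Your approach is more economical, since it recognizes that all the combinatorial work was already done in Proposition~3; the paper's direct argument is redundant but readable in isolation.
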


The proofs of Proposition 1 to 4 are given in Appendix Section 6.3.

Based on these simple structures of the blocking tree, we are now ready to prove the approximation ratio of the Restricted $k$-Set Packing algorithm.

\subsection{Analysis of the Restricted 4-Set Packing Algorithm}

We prove in this section that the Restricted 4-Set Packing algorithm has an approximation ratio $\frac{7}{16}$. We first explain how this $\frac{7}{16}$ ratio is derived. We use the unit $\mathcal{U}$ defined in Example 1 (Appendix Section 6.1). Assume when the algorithm stops, we have $n\gg 1$ copies of $\mathcal{U}$ and a relatively small number of 3-level sets. We denote the $i$-th copy of $\mathcal{U}$ by $\mathcal{U}_i$. For each $i$ and $1\leq j\leq 12$, the set $O_j$ in $\mathcal{U}_i$ and $\mathcal{U}_{i+1}$ are adjacent. This chain of $O_j$'s starts from and ends at a 3-level set respectively. Then the performance ratio of this instance is slightly larger than $\frac{7}{16}$. We first prove that the approximation ratio of the Restricted 4-Set Packing algorithm is at least $\frac{7}{16}$.

Given $\mathscr{F}$, a collection of blocking trees. We assign 4 tokens to every element covered by sets chosen by the restricted packing algorithm. We say a set has a free token if after distributing the token, this set retains at least 7 tokens. We show that we can always distribute the tokens among all the optimal sets $\mathscr{O}$, so that there are at least 7 tokens in each optimal set.

\begin{proof}
We present the first round of redistribution. \\


\textbf{Round 1 - Redistribution in each blocking tree $\mathscr{T}$}. Every leaf in $\mathscr{T}$ has 4 free tokens to distribute. Every internal node $V$ of degree $d$ requests $4(d-2)$ tokens from a leaf. We consider each node with nonzero request in the reverse order given by breadth first search (BFS).

        \begin{itemize}
            \item If $d=3$, $V$ requests 4 tokens from any leaf in the subtree rooted at $V$ which has 12 tokens.
            \item If $d=4$, $V$ has three children $V_1,V_2,V_3$. $V$ sends requests of 4 tokens to any leaf in the subtree rooted at $V_1,V_2,V_3$, one for each subtree. $V$ takes any two donations of 4 tokens.
            \item The root of degree $r$ receives the rest of the tokens contributed by the leaves.
        \end{itemize}

\begin{proposition}
After Round 1, every internal node in $\mathscr{T}$ has at least 8 tokens, the root of degree $r$ has $4r$ tokens.
\label{prop1-k4}
\end{proposition}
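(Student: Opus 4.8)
The plan is to track, for each node of a blocking tree $\mathscr{T}$, how many tokens it holds after Round~1, and verify the two claims: internal nodes end with at least~8 tokens, and the root of degree~$r$ ends with exactly $4r$. The starting point is the token count before redistribution. Each element covered by $\mathscr{A}$ carries 4 tokens, and these are initially assigned to the optimal set containing that element. So an $i$-level optimal set begins with $4i$ tokens. A leaf of $\mathscr{T}$ represents a $(k-1)$-level set, hence begins with $4(k-1)$ tokens; since we want each optimal set to keep~7 and we are in the case $k=4$, a leaf begins with $4\cdot 3 = 12$ tokens and therefore has $12-7 = 5$ tokens above the threshold — but the redistribution rule only hands out tokens in blocks of~4, so each leaf contributes one block of~4 as a ``free token'' and keeps~8. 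I would state this as the base fact: every leaf has exactly one donation of~4 available, after which it retains~8.

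\textbf{Why requests are satisfiable.} The second step is to argue that the request pattern is consistent: every internal node of degree~$d$ requests $4(d-2)$ tokens and must receive them from leaves in its subtree. Processing nodes in reverse BFS order (deepest first) means that when a node~$V$ of degree~$d$ issues its requests, all donations from leaves strictly below the children of~$V$ that were already consumed are accounted for; I need to check that enough leaves remain. Here the key structural input is Proposition~3 (via its use in defining blocking trees) together with a counting of leaves versus internal nodes in a subtree. For a node of degree $d=3$ (one parent edge, two child edges) the subtree rooted at~$V$ is a binary-branching structure; for $d=4$ it is ternary. A clean way to see each request is met is: the number of leaves in the subtree rooted at a degree-$d$ internal node is at least $d-1$, and more generally, by induction up the tree, the total free tokens ($4\times$ number of leaves) in any subtree exceed the total requests of its internal nodes by a margin that leaves one donation per remaining leaf. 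I would do this induction explicitly on subtree height.

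\textbf{The per-case verification.} With satisfiability in hand, the per-node bookkeeping is routine. A degree-2 internal node requests nothing, so it retains its initial $4\cdot(\text{level})$; by Proposition~2 and Proposition~4 a degree-2 internal node is either a $(k-1)$-level set (giving $12\ge 8$, but then it would be a leaf, so this does not occur among internal nodes) or a $2$-level set or higher — actually one needs Proposition~4 to rule out that a $0$- or $1$-level node of degree~2 is internal, and what remains is level~$\ge 2$, giving $\ge 8$ tokens. A degree-3 internal node has level $\ge k-3 = 1$ by Proposition~3 applied to non-root nodes, so it starts with $\ge 4$ tokens, receives one donation of~4, and ends with $\ge 8$. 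A degree-4 internal node has level $\ge k-4 = 0$, starts with $\ge 0$, receives two donations, ends with $\ge 8$. For the root of degree~$r$: by construction the root absorbs all leftover tokens, and the claim is that this total is exactly $4r$; I would verify this by the global conservation identity — total free tokens equal $4 L$ where $L$ is the number of leaves, total requested by internal non-root nodes is $4\sum_{V}(d_V - 2)$, and a handshake/Euler argument on the tree ($\sum d_V = 2(|V|-1)$) collapses the remainder for the root to exactly $4r$.

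\textbf{Main obstacle.} The delicate part is not any single arithmetic step but establishing that the reverse-BFS request process never fails — that is, that when a degree-3 node asks for ``any leaf in its subtree with 12 tokens'' such a leaf still exists, and similarly the degree-4 node finds one undepleted leaf in each of the three child-subtrees. This requires the leaf-counting induction to be tight enough, and it is exactly where Propositions~3 and~4 (bounding how many low-level, low-degree nodes a blocking tree can contain, forcing them to the root) are doing the real work. I would isolate this as a preliminary lemma — ``in any subtree rooted at an internal node~$V$, the leaves not yet depleted after processing all descendants of~$V$ number at least (degree of $V$) $-1$'' — prove it by induction on height using the degree constraints, and then the three bullet-point cases and the root claim follow immediately.
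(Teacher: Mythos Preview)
Your proposal is correct and follows essentially the same approach as the paper: both arguments hinge on the induction that each child-subtree retains exactly one undepleted leaf donation (equivalently, $4$ free tokens) after processing, together with the handshake identity $\sum_{V\text{ internal}}(d_V-2)+r = L$ to compute the root's share as $4r$. Your write-up is more explicit than the paper's in spelling out the per-degree cases (degree $2,3,4$) and in invoking Propositions~3 and~4 to lower-bound the level of non-root internal nodes, but the underlying mechanism is identical. One small imprecision worth noting: when you say the root ``ends with exactly $4r$'', what your conservation argument actually establishes (and what the paper proves) is that the root \emph{receives} $4r$ tokens from leaves; its total after Round~1 is $4(s+r)$ where $s$ is its level, as the paper later uses in Proposition~8.
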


Proposition 5 is proved in Appendix Section 7.1. According to Proposition 4, we know that after the first round of redistribution every node has at least 8 tokens except any 0-level roots which are of degree 1 and any singletons in $\mathscr{F}$ which are 1-level sets. \\

We first consider the collection of 1-level sets $\mathscr{S}_1$ which are singletons in $\mathscr{F}$. Let $S_1$ be such a 1-level set that intersects with a 4-set $A$ chosen by the algorithm. Assume $A$ also intersects with $j$ other optimal sets $\{O_i\}_{i=1}^j$.

We point out that no $O_i$ belongs to $\mathscr{S}_1$. Otherwise suppose $O_i\in\mathscr{S}_1$. Then there is a 1-2-improvement (replace $A$ with $S_1$ and $O_i$).

We give the second round of redistribution, such that after this round, every set in $\mathscr{S}_1$ has at least 7 tokens.
For optimal sets $O,W$, consider each token request sent to $O$ from $W$. We say it is an \emph{internal request}, if $W\in\mathscr{S}_1$, and $W$ and $O$ intersect with a set $A$ chosen by the algorithm. Otherwise, we say it is an \emph{external request}. \\

\textbf{Round 2 - Redistribution for $S_1\in\mathscr{S}_1$}. $S_1$ sends $\tau$ requests of one token to $O_i$ if $|O_i\bigcap A|=\tau$. For each node $V$ in $\mathscr{T}$, internal requests are considered prior to external requests.

        For each request sent to $V$ from $W$,
        \begin{itemize}
            \item If $V$ has at least 8 tokens, give one to $W$.
            \item If $V$ has only 7 tokens. \\
            (1) If $V$ is a leaf, it requests from the node which has received 4 tokens from it during the first round of redistribution.\\
            (2) If $V$ is not a leaf,

            (2.1) If $W\in\mathscr{S}_1$, $V$ requests a token from a leaf which has at least 8 tokens in the subtree rooted at $V$.  

            (2.2) If $W$ is a node in $\mathscr{T}$. Suppose $V$ has children $V_1,..,V_d$ and $W$ belongs to the subtree rooted at $V_1$. $V$ requests from a leaf  which has at least 8 tokens in the subtree rooted at $V_2,...,V_d$.
        \end{itemize}

We now prove the correctness of the second round of redistribution.

\begin{proposition}
Every singleton node $O$ of level $j$ has at most $j-1$ requests. Every leaf has at most $k-2$ internal requests. Every internal node of level $j$ has at most $j$ requests. The root of level $s$ has at most $s$ requests.
\end{proposition}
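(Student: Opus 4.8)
The plan is to split the requests arriving at a node into \emph{internal} ones, sent directly by a set $S_1\in\mathscr{S}_1$ that meets the same chosen set $A\in\mathscr{A}$ as the node, and \emph{external} ones, relayed along the blocking tree by rules~(1), (2.1) and~(2.2) of Round~2. Since each $S_1\in\mathscr{S}_1$ is a $1$-level set, it meets exactly one set $A=A(S_1)$ of $\mathscr{A}$, and in exactly one element; by the rule of Round~2 it then sends $|O\cap A|$ internal requests to each optimal set $O$ meeting $A$, and (as already observed) none to a member of $\mathscr{S}_1$. Consequently the number of internal requests received by an optimal set $O$ is $\sum_A |O\cap A|\,n_A$, where $A$ ranges over the sets of $\mathscr{A}$ meeting $O$ and $n_A$ is the number of members of $\mathscr{S}_1$ meeting $A$; moreover $\sum_A|O\cap A|$ is precisely the level of $O$.

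The heart of the argument is to bound $\sum_A|O\cap A|\,n_A$ by $(\text{level of }O)-1$ when $O$ is a singleton or a leaf. I would establish two facts by exchange arguments: (i) every set of $\mathscr{A}$ meeting $O$ meets at most one member of $\mathscr{S}_1$ (so $n_A\le 1$), and (ii) at least one set of $\mathscr{A}$ meeting $O$ meets no member of $\mathscr{S}_1$. If either failed, one could perform a $p\to p+1$ packing replacement --- in the simplest case deleting one set $A\in\mathscr{A}$ and inserting two members of $\mathscr{S}_1$ that meet it, and in the tree cases deleting a few sets of $\mathscr{A}$ around $O$ and inserting, in addition, $O$ or an optimal fragment covering the freed elements (available since $\mathscr{S}$ is closed under subsets). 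The delicate point is that this replacement must also be \emph{unblocked}: one uses Proposition~2 (the leftover of each $S_1$, being a set of level $1<k-1$, carries no $1$-set in the lexicographically first extension) together with the blocking-tree structure of Propositions~1, 3 and~4 to verify that after recomputing the semi-local $(2,1)$ extension the number of $1$-sets has not increased. Non-improvability of $\mathscr{A}$ then forces (i) and~(ii), and hence the bound. For a singleton node this is the full statement, since a singleton receives no external requests (the relaying rules only move tokens inside a nontrivial blocking tree); for a leaf it is exactly the claimed bound $k-2$ on internal requests.

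For an internal node and for the root I would add in the external requests. By the Round~1 accounting (Proposition~5) each leaf hands over its four free tokens to a single internal or root ancestor; tracing rules~(1), (2.1) and~(2.2), every external request to a node is then attributable to a leaf in its subtree, and using that by Proposition~3 each non-root node of level $j$ has degree exactly $k-j$ (the degree bound $\le|O|-j\le k-j$ is immediate from disjointness of the cover, and the unique possible low-degree exception is the designated root, whose degree is further restricted by Proposition~4 when it is $0$- or $1$-level) one bounds the external requests so that the total is at most $j$ for an internal node and at most $s$ for the root of level $s$. It remains to check that no node is asked in a way that breaks the count --- e.g. that a leaf's repeated appeals to its Round~1 partner do not accumulate --- and that internal and external requests are not double counted, both of which follow directly from the Round~2 rules.

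The step I expect to be the obstacle is the exchange argument of the second paragraph: for each configuration one must exhibit an explicit $p\to p+1$ replacement that is simultaneously a legal packing swap \emph{and} provably unblocked. The second requirement is what makes this harder than an ordinary packing calculation, since it forces one to track how the semi-local $(2,1)$ extension recomputes its $1$-sets after the swap; this bookkeeping is heaviest for internal and root nodes, where the swap has to carry along an entire subtree of sets of $\mathscr{A}$ while the blocking invariants of Propositions~1--4 are invoked to certify that no new $1$-set is created.
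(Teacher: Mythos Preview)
For singletons and leaves your exchange argument is exactly the paper's: if every $A_i$ covering $O$ met some $S_i\in\mathscr{S}_1$, then replacing $A_1,\dots,A_l$ by $S_1,\dots,S_l,O$ would be an $l$-$(l{+}1)$-improvement; hence at least one $A_i$ contributes no internal request, giving the bound $j-1$ (and $k-2$ for a $(k{-}1)$-level leaf). The paper does \emph{not} carry out the unblocked-ness verification you flag as the main obstacle---it simply names the improvement and proceeds, exactly as it does for every $i$-$j$-improvement in Section~3---so the bookkeeping you propose with Propositions~1--4 and the semi-local extension is absent from the paper's proof.

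Where your plan goes astray is the internal-node and root cases. You read the proposition as bounding \emph{total} requests and try to fold in the external ones via the degree--level relation of Proposition~3. But the paper's claim for these two cases is only the trivial bound on \emph{internal} requests: a node of level $j$ receives at most $\sum_A|O\cap A|\le j$ of them, and that is the whole proof. That reading is forced by the later Proposition~8, which separately adds up to $r$ external requests at a root of degree $r$ to reach ``at most $s+r$'' there; it is also forced by consistency, since a $0$-level non-root internal node has degree $k$, can receive up to $k-2$ external requests from the leaves that fed it in Round~1, and so could not possibly satisfy a total bound of $j=0$. The external-request accounting you sketch is deferred to Propositions~8 and~10--12, where it is handled case by case rather than by a uniform degree formula.
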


\begin{proposition}
Singleton node $O$ can satisfy all the requests.
\end{proposition}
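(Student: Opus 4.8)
The plan is to prove Proposition 7 by a direct token count, using Proposition 6 as the only nontrivial input. First I would record how many tokens a singleton carries into Round 2. A singleton $O$ is isolated in $\mathscr{F}$, so it is neither a leaf nor an internal node of any multi-node blocking tree; consequently the within-tree redistribution of Round 1 never asks it to donate and never routes a request to it. Hence $O$ enters Round 2 with exactly the tokens of its own algorithm-covered elements: if $O$ is a $j$-level set, each of its $j$ covered elements carries $4$ tokens, so $O$ holds $4j$ tokens.

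Next I would bound the demand on $O$. By Proposition 6 a singleton of level $j$ receives at most $j-1$ requests, each for a single token, and these are internal requests: the cascading requests generated in cases (2.1) and (2.2) of Round 2 are always directed to a leaf inside a multi-node subtree, so an isolated node never receives one. Satisfying every request therefore costs $O$ at most $j-1$ tokens, leaving it with at least $4j-(j-1)=3j+1$.

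I would then split on the level. A singleton that actually receives a request shares an algorithm set $A$ with some $S_1\in\mathscr{S}_1$, hence has a covered element and satisfies $j\ge 1$; and by the earlier observation that no recipient $O_i$ of an $\mathscr{S}_1$-request lies in $\mathscr{S}_1$, such a recipient is not a $1$-level singleton, so in fact $j\ge 2$. (A $1$-level singleton is exactly a member of $\mathscr{S}_1$ and, by Proposition 6, receives $0$ requests, so the statement holds vacuously for it.) For $j\ge 2$ we get $3j+1\ge 7$, so $O$ finishes Round 2 with at least $7$ tokens after honoring every request. Moreover $O$ drops to exactly $7$ tokens only after its final request and only when $j=2$, so at the moment any request arrives $O$ still holds at least $8$ tokens; the simple ``give one token'' rule of Round 2 applies at every request, and $O$ never needs the ``only $7$ tokens'' branch, which for an isolated node would be impossible to execute since it has no subtree to draw from.

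The main obstacle is not the arithmetic but pinning down that a request-receiving singleton necessarily has level at least $2$; this is what creates the slack $4j\ge 8$ that makes the ``give one token'' rule self-sustaining. This step rests on combining the request bound of Proposition 6 with the earlier $\mathscr{S}_1$-exclusion argument, together with the observation that singletons receive only internal requests. Once these are in place, the inequality $3j+1\ge 7$ closes the proof immediately.
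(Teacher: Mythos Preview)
Your proposal is correct and follows essentially the same approach as the paper: the paper's proof simply assumes $j\ge 2$, invokes Proposition~6 to bound the number of requests by $j-1$, and computes $4j-(j-1)=3j+1\ge 7$. Your write-up adds useful scaffolding the paper leaves implicit---why a singleton is untouched by Round~1, why it receives only internal requests, and why a request-receiving singleton must have $j\ge 2$---but the core argument is identical.
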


\begin{proposition}
The root $R$ of level $s$ and degree $r$ can satisfy all the requests if $s+r\geq 2$.
\end{proposition}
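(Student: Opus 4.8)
The plan is to count, for the root $R$ of level $s$ and degree $r$, how many tokens it retains after Round 1 and how many requests it must honor, and then verify that it never runs short as long as $s+r\geq 2$. By Proposition 5, after Round 1 the root of degree $r$ holds exactly $4r$ tokens. By Proposition 6, $R$ has at most $s$ requests (and if $R$ is actually a non-root-type node the bound is no worse). So it suffices to argue that $4r$ tokens suffice to answer $s$ requests, each of which either is granted directly (when $R$ still has at least $8$ tokens) or is forwarded down into one of the $r$ subtrees hanging off $R$ via the case (2.2) mechanism.

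First I would split into the easy regime and the tight regime. If $r\geq 2$, then $4r\geq 8$, and more to the point each request can be pushed into a distinct child-subtree: since $R$ has $r\geq 2$ children, for any incoming request from the subtree rooted at $V_1$ there are still $r-1\geq 1$ other subtrees $V_2,\dots,V_r$ from which $R$ can pull a token from a leaf with at least $8$ tokens (such a leaf exists because, by the Round-1 invariant of Proposition 5 together with the structure of the redistribution, every subtree rooted at a child of $R$ contains a leaf that finished Round 1 with $\ge 8$ tokens — I would isolate this as the key sub-claim). Thus in the $r\geq 2$ case the root is never the bottleneck. The remaining case is $r=1$, which by the hypothesis $s+r\geq 2$ forces $s\geq 1$; here $R$ has $4$ tokens and (by Proposition 6, noting a degree-$1$ root behaves like a leaf-adjacent node) at most one request, so it simply grants that single request directly, leaving $3$ tokens — but I should double-check against Proposition 4, which tells us a degree-$2$ node of level $0$ or $1$ can only be the root, to be sure the level/degree combinations that actually arise are exactly those covered.

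The main obstacle I anticipate is the bookkeeping in case (2.2): I must confirm that when $R$ forwards a request into subtree $V_i$, the leaf it reaches there genuinely still has a surplus token at the moment it is asked, i.e. that the forwarded requests across all of $R$'s subtrees, together with whatever Round-1 and Round-2 traffic already passed through those subtrees, never overdraw a single leaf. The clean way to handle this is an amortized/invariant argument: show that after Round 1 each subtree rooted at a child of $R$ has a "reservoir" of free tokens at its $\ge 8$-token leaves at least as large as the number of requests that can ever be routed into it (bounded via Proposition 6 applied within the subtree plus the one external forward from $R$), so the reservoir is never exhausted. I would phrase this as a short lemma-style induction down the tree, piggybacking on the counting already done for Propositions 6–8, and then the conclusion $s+r\geq 2 \Rightarrow$ "all requests satisfied" falls out by the two-case split above.
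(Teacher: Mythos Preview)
Your proposal has a genuine gap stemming from a miscount of the root's tokens and a missing structural argument.

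First, the token count. You cite Proposition~5 as giving the root exactly $4r$ tokens, but that $4r$ is only what the root \emph{receives from the leaves} in Round~1. The root is an $s$-level set, so it already owns $4s$ tokens from its own covered elements; its total after Round~1 is $4(s+r)$. This error propagates: in your $r=1$, $s\ge 1$ case you conclude the root has $4$ tokens and, after granting one request, is left with $3$. But ``satisfying all requests'' here means the root must retain at least $7$ tokens for itself, so your $r=1$ analysis would actually fail even on its own terms.

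Second, and more importantly, you are missing the key combinatorial step. The paper's argument is a direct count: the root has $4(s+r)$ tokens and can receive at most $s$ internal requests (from $\mathscr S_1$-sets hitting its $s$ covered elements) plus at most $r$ external requests (from the $r$ leaves that donated to it in Round~1), for a total of at most $s+r$. If the actual number of requests is at most $s+r-1$, then $4(s+r)-(s+r-1)=3(s+r)+1\ge 7$ and we are done. The only remaining case is exactly $s+r$ requests, and this is ruled out by exhibiting a local improvement: if every $A_i$ covering $R$ meets some $S_i\in\mathscr S_1$ and every donating leaf forwards a request on behalf of some $S_j'\in\mathscr S_1$, then swapping all those $A$-sets for the corresponding $S$-sets together with $R$ itself is an $(l+r)$-for-$(l+r+1)$ improvement, contradicting local optimality. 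Your proposal never identifies this improvement; instead you try to route requests into child subtrees via the (2.2) mechanism and promise an inductive ``reservoir'' lemma. That detour is both unnecessary and unjustified: you would still need to explain why the root cannot be saturated with $s+r$ requests, and that explanation \emph{is} the local-improvement argument you omitted.
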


\begin{proposition}
There is no external request sent to a root which is of level 0 and degree 1.
\end{proposition}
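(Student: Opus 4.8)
The plan is to trace every external request back to where it is created and check that none of them can arrive at a root $R$ of level $0$ and degree $1$. External requests are generated in exactly two places. First, in Round 1 each internal node of degree $d$ asks a leaf of its subtree for $4(d-2)$ tokens; the recipient is always a leaf, and $R$ is the root, not a leaf, so $R$ receives no Round 1 request. Second, in Round 2 there are the forwarding requests of cases (1), (2.1), (2.2), made by a node that holds only $7$ tokens at the moment it must honor an incoming request. The requests issued directly by the sets $S_1\in\mathscr{S}_1$ are internal by definition, and each of them is addressed to an optimal set $O_i$ that meets a chosen set $A$; such an $O_i$ has an element covered by $\mathscr{A}$ and hence has level at least $1$, whereas $R$ has level $0$, so $R$ is none of these $O_i$.

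It remains to examine the forwarding requests. Those of cases (2.1) and (2.2) are always sent to a leaf of the blocking tree, so they do not reach $R$. The only remaining possibility is case (1), in which a leaf $V$ carrying exactly $7$ tokens forwards a request to the node that received $V$'s four free tokens in Round 1. Since $V$ has $8$ tokens immediately after Round 1, one incoming request still leaves it at $7$ and only a second one would force $V$ into case (1); hence it suffices to prove that whenever $V$'s four free tokens end up at $R$, the leaf $V$ receives at most one internal request and so never enters case (1) at all.

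This last statement is the crux, and it is where the structure of the blocking tree at $R$ is used. Since $R$ has level $0$ and degree $1$, all $k$ of its elements are covered by the extension, exactly one of them (call it $r_0$) lying in the single $2$-set forming the unique tree-edge at $R$; moreover $R$ occurs in the analysis only because it is blocked, i.e.\ adjoining $R$ to $\mathscr{A}$ would increase the number of $1$-sets needed to finish the cover, the new $1$-set being forced precisely because $r_0$'s partner in that $2$-set would be left with no available partner inside its own optimal set. I would argue by contradiction: if the leaf $V$ that feeds $R$ received two internal requests, the two chosen sets witnessing them, together with the corresponding two sets from $\mathscr{S}_1$ and the set $R$, can be assembled into a $p\to p+1$ modification of $\mathscr{A}$ that does not create any $1$-set — the new ingredient over the usual arguments being that, once the relevant elements of $V$ and of the nodes on the path from $V$ up to $R$ become packing-covered, the $2$-set hanging off $R$ can be re-absorbed as two $2$-sets inside $R$. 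This contradicts the local optimality of $\mathscr{A}$ under the Restricted $k$-Set Packing algorithm. I expect this last step to be the main obstacle: the delicate part is the bookkeeping for the $1$-set-non-increasing improvement, in particular the case distinction between two internal requests coming from a single $S_1$ (with $|C\cap A|=2$, where $C$ is the neighbor of $R$) and from two distinct $S_1$'s, and tracking how the cascade of re-coverings interacts with the other tree-edges when the path from $V$ to $R$ passes through internal nodes rather than through $V$ alone — the same kind of intricate local-improvement analysis underlying Propositions 2, 3 and 6.
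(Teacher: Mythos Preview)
Your tracing of where external requests originate is accurate, and your reduction to the leaf $V$ that fed $R$ its four tokens in Round 1 is the right starting point. But from there you make the problem much harder than it is. You aim to show that $V$ receives at most one internal request and then launch into a delicate argument involving two sets $S_1,S_2$, two chosen sets $A_1,A_2$, the path from $V$ to $R$, and a ``cascade of re-coverings'' to certify the $1$-set condition. None of this is needed.

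The paper's proof is a single line. If $R$ receives an external request, it comes via case~(1) from a leaf $V$, and for $V$ to be at $7$ tokens it must already have received one internal request from some $S\in\mathscr{S}_1$; this $S$ intersects some $A\in\mathscr{A}$. Now observe the one fact you never exploit directly: $R$ has level $0$, so $R$ intersects no set of $\mathscr{A}$ at all. Hence removing the single set $A$ and inserting both $S$ and $R$ is a legal $1$-$2$-improvement, contradicting local optimality. The leaf $V$, the path from $V$ to $R$, and the blocking edge at $R$ play no role in the improvement itself; they are only used to produce the witness $S$. Your worry about ``re-absorbing the $2$-set hanging off $R$'' is a red herring: the improvement does not touch $V$ or any edge of the tree.

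So your plan is not wrong in principle---ruling out two internal requests via a $2$-$3$-improvement would also work---but you have missed the key simplification that one request already suffices, and the complications you flag as ``the main obstacle'' simply do not arise.
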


\begin{proposition}
Any external request sent from a leaf $L$ can be satisfied.
\end{proposition}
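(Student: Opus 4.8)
The plan is to follow an external request issued by a leaf through the forwarding mechanism and show that a token is always routed back to it. A leaf $L$ of a blocking tree issues an external request only through step~(1) of Round~2, and by that point it has already disposed of its one surplus token: being a $(k-1)$-level set, $L$ is assigned $4(k-1)=12$ tokens, it gave exactly $4$ of them away in Round~1 (to the node $P$ discussed below), so it holds $8$ tokens, one above the target $7$; after honouring its first Round~2 request it is at $7$ tokens, and every further request it then receives it forwards to $P$, the node that collected its $4$ tokens in Round~1. The task is thus to show that $P$ can satisfy this forwarded request.

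I would first dispose of the case where $P$ is the root. If the root's level $s$ and degree $r$ satisfy $s+r\ge 2$, then by Proposition~8 it satisfies every request addressed to it; the only root with $s+r<2$ to which a leaf could forward is one of level $0$ and degree $1$ --- an isolated $0$- or $1$-level set receives no Round~1 donation and so is never a creditor --- and Proposition~9 rules out external requests to such a root. So I may assume $P$ is an internal node. For $k=4$, Proposition~3 together with the fact that an $i$-level set has only $4-i$ uncovered elements (hence degree at most $4-i$) forces every non-root node of level $i$ to have degree exactly $4-i$; the internal nodes are therefore the level-$1$ (degree-$3$) and level-$2$ (degree-$2$) ones, a level-$2$ internal node requests no tokens in Round~1, and so the only internal node that can be a creditor is a level-$1$, degree-$3$ node, whose unique donor is $L$.

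Next I would bound what $P$ is asked for. After Round~1 it holds $4\cdot 1+4=8$ tokens. The requests reaching $P$ are the primary ones counted by Proposition~6 --- at most $j=1$, and then necessarily an internal request from an $\mathscr{S}_1$-set sharing an $\mathscr{A}$-set with $P$ --- together with the forwards from $L$. Because $L$ is at $7$ tokens from its first Round~2 request onward, and external requests reach only leaves with at least $8$ tokens while internal requests are processed first, every forward $L$ makes is in response to an internal request; as a leaf carries at most $k-2=2$ internal requests (Proposition~6) and the first one does not cause a forward, $L$ forwards to $P$ at most once, so $P$ sees at most two requests. It parts with an actual token only on the first; on the second it is already at $7$, so being an internal node it invokes step~(2), re-forwarding to a leaf below it still holding at least $8$ tokens (step~(2.2) when the second request is $L$'s forward, step~(2.1) when it is the $\mathscr{S}_1$-request) and relaying the returned token back to $L$. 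Either way $P$ leaves Round~2 with at least $7$ tokens, $L$'s external request is honoured, and any re-forward that bottoms out at a root is in turn handled by Propositions~8 and~9.

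The hard part is justifying the re-forwarding itself: that when $P$ re-forwards, a leaf below it --- in one of the sibling subtrees of $L$'s subtree, for step~(2.2) --- really does still hold at least $8$ tokens. I would get this from the blocking-tree structure of Propositions~1--4, which makes each such subtree rooted at a node of level $\ge 1$ and hence nonempty of leaves, together with a counting argument bounding how many requests can have drained into that subtree before this moment --- in effect re-running the present token argument inside the subtree and using the request-count bounds of Proposition~6. With that in hand, every request generated in Round~2 is ultimately satisfied, so every optimal set retains at least $7$ tokens, which is exactly what the token argument for the $\frac{7}{16}$ ratio requires.
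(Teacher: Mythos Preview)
Your setup is largely on track: you correctly trace the forward from the leaf $L$ to its Round~1 creditor $P$, you correctly dispose of the case $P=\text{root}$ via Propositions~8 and~9, and you correctly observe that $L$ forwards at most once. There is a minor slip --- for $k=4$ the creditor $P$ need not be a level-$1$, degree-$3$ node; it can equally well be a level-$0$, degree-$4$ internal node (such a node requests $8$ tokens in Round~1 and absorbs donations from two leaves), so your case analysis of $P$ is incomplete.

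The real gap is at what you yourself call ``the hard part.'' The paper does \emph{not} use a counting argument to show that some leaf in a sibling subtree still holds $8$ tokens. It uses a local-improvement contradiction: suppose that in step~(2.2) every leaf in the subtrees rooted at $V_2,\dots,V_d$ (where $V=P$ and $L$ lies under $V_1$) already carries an internal request. Pick one such leaf $L_i$ in each subtree; its internal request comes from some $S_i\in\mathscr{S}_1$ intersecting some $A_i\in\mathscr{A}$. Leaf $L$ itself has an internal request from some $S\in\mathscr{S}_1$ intersecting $A\in\mathscr{A}$. Then replacing $A,A_2,\dots,A_d$ by $S,S_2,\dots,S_d,V$ is a $d$-$(d{+}1)$ improvement, contradicting that $\mathscr{A}$ admits none. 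This contradiction is the whole engine of the proof; your proposed ``counting argument bounding how many requests can have drained into that subtree'' is not specified, and it is not clear that any such bookkeeping can succeed without invoking the packing's local optimality --- the only handle on \emph{why} the sibling subtrees cannot all have been hit by $\mathscr{S}_1$-requests is precisely that such a configuration would create a forbidden local improvement.
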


\begin{proposition}
Any external request sent from an internal node of degree $d\geq 3$ can be satisfied.
\end{proposition}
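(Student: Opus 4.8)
The plan is to determine exactly when such a node forwards a request, bound the total demand it can receive, and then check that the forwarding rule always has a legal target whose token propagates back, using Proposition 10 to terminate the chain. I first pin down the node. For $k=4$, a non-root internal node $V$ of degree $d\ge 3$ is exactly a $(4-d)$-level set (its degree is at most $4$ minus its level, since the disjoint $2$-sets of the extension each use a distinct uncovered element, and at least that by Proposition 3); thus $V$ is either a $1$-level set of degree $3$ or a $0$-level set of degree $4$, and in either case it has $d-1\ge 2$ children. By the Round~1 rule it requested $4(d-2)$ tokens, taking $4$ tokens from one leaf of its subtree when $d=3$ and $8$ tokens from two leaves of its subtree when $d=4$, so by Proposition 5 it ends Round~1 with exactly $4(4-d)+4(d-2)=8$ tokens, i.e.\ with a single unit of slack above the target of $7$.

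Next I would count what can reach $V$ in Round~2. Internal requests to $V$ come only from sets $S_1\in\mathscr{S}_1$ that share an algorithm set with $V$, and by Proposition 6 there are at most $4-d\in\{0,1\}$ of them. External requests to $V$ can only be type-(1) forwards from the $d-2$ leaves that $V$ drew tokens from in Round~1, since rules (2.1) and (2.2) send requests only downward toward leaves of a subtree, so nothing forwards upward to $V$. Processing internal requests before external ones, $V$ spends its single slack token on the first request it sees and must forward every later one; hence $V$ issues an external request at all only when (a) $d=3$ and it served one internal request and then received a forward back from its Round~1 leaf, or (b) $d=4$ and it served one forwarded request and then received a second, and in every case it issues at most one external request.

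It remains to discharge that one forward. If it was triggered by an internal request ($W\in\mathscr{S}_1$), rule (2.1) sends it to any leaf with at least $8$ tokens in $V$'s subtree; each of $V$'s $\ge 2$ child-subtrees contains a leaf, and a count of the requests via Proposition 6 (and Proposition 10, which limits how often each leaf can forward) shows that not all leaves of those subtrees can already be at $7$, so such a leaf still exists. If it was triggered by a forward from a leaf $L$ lying in the subtree of child $V_1$ (rule (2.2)), $V$ must instead pick a leaf with at least $8$ tokens in the subtree of some other child $V_2,\dots,V_d$; since $d\ge 3$ there is at least one other child, its subtree contains a leaf, and that leaf is distinct from $L$ and outside $L$'s subtree — exactly what blocks an immediate cycle. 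The chosen leaf either has at least $8$ tokens and satisfies the request outright, or has exactly $7$ and re-forwards, in which case Proposition 10 completes the argument.

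The hard part is this last step: Propositions 10 and 11 are mutually dependent — a leaf forwards to the internal node that drew its Round~1 tokens, which may forward back down toward leaves — so the write-up must order the forwarding events across the whole tree and rule out cycles; the restriction in rule (2.2) to subtrees disjoint from the originating subtree is precisely the mechanism for this, and termination then follows because the finite token supply of the relevant subtree strictly shrinks along any forwarding chain. A secondary point to nail down is that when $d=4$ the Round~1 accounting (which leaves one of the three solicited leaves untouched) still keeps a spare token available in a child-subtree other than the one that just forwarded.
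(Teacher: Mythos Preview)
Your structural analysis is largely sound: you correctly pin down that an internal node $V$ of degree $d\ge 3$ has level $4-d$, starts Round~2 with exactly $8$ tokens, absorbs at most one internal request (none if $d=4$), and therefore forwards at most once, always in response to an external request from a leaf in one of its child subtrees (so rule~(2.2), not~(2.1), is the operative case --- your own case analysis makes the~(2.1) branch vacuous, so the paragraph treating it is dead weight).

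The genuine gap is in the discharge step. Rule~(2.2) instructs $V$ to request from a leaf \emph{which has at least $8$ tokens} in the sibling subtrees; the content of the proposition is precisely that such a leaf exists. You do not establish this. Your appeal to ``a count of the requests via Proposition~6'' is not carried out, and your fallback --- that a $7$-token leaf ``re-forwards'' and Proposition~10 finishes --- misreads the rule and is circular in any case, since Proposition~10's proof and Proposition~11's proof are the same argument (the paper says so explicitly), not results you can invoke one inside the other. Your proposed termination scheme (``finite token supply strictly shrinks along any forwarding chain'') is not substantiated and would need a careful global ordering to avoid the very circularity you flag.

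The paper's mechanism is entirely different and much shorter: a \emph{local-improvement contradiction}. If every leaf in the subtrees rooted at the other children of $V$ already sits at $7$ tokens, then each such leaf carries an internal request from some $S_i\in\mathscr{S}_1$ intersecting some $A_i\in\mathscr{A}$; together with the $S$ and $A$ attached to the originating leaf $L$, one can replace the $A$'s by the $S$'s and $V$, yielding a $d$-$(d+1)$ improvement that the restricted packing forbids. That single contradiction is the whole proof; there is no forwarding chain to terminate. Your write-up never invokes the packing/local-improvement hypothesis at all, which is the missing idea.
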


\begin{proposition}
Any external request sent from an internal node $V$ of degree 2 can be satisfied.
\end{proposition}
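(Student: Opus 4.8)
The plan is to follow the single external request issued by $V$ hop by hop until it lands on a node already handled by one of Propositions 8--11, using induction on the distance from $V$ to the root of its blocking tree. First I pin down what $V$ is: by Proposition 4 an internal node of degree $2$ is neither $0$-level nor $1$-level (such a node would have to be the root), and a $3$-level optimal set has, in the $k=4$ case, only one element left uncovered by $\mathscr{A}$ and hence degree at most $1$; so $V$ is $2$-level. It therefore enters Round 2 with exactly $8$ tokens (two covered elements carrying $4$ tokens each), neither donates nor requests anything in Round 1 (its Round-1 request $4(d-2)$ equals $0$), and by Proposition 6 it receives at most $2$ requests during Round 2. Answering the first drops $V$ to $7$ tokens, and the external request to be analysed is precisely the one $V$ is forced to issue when a second request arrives while it holds only $7$.

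Next I split on where this forwarded request goes, as dictated by Round 2. If the second request came from some $S_1\in\mathscr{S}_1$, rule (2.1) directs it to a leaf of the subtree rooted at $V$ that still holds at least $8$ tokens; that leaf either answers directly, or it is at $7$ and re-requests, via rule (1), from the node that drew $4$ tokens out of it in Round 1 --- an external request issued by a leaf, hence satisfiable by Proposition 10. If instead the second request came from a tree node strictly below $V$, then rule (2.2) degenerates: $V$ has a unique child and therefore no sibling subtree of the requester, so $V$ must forward the request to its parent $P$, and here the induction enters. If $P$ is the root, it is not a $0$-level degree-$1$ root (no external request ever reaches such a root, by Proposition 9), and every other root with an incident edge has level-plus-degree at least $2$, so Proposition 8 applies. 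If $P$ is internal of degree $\geq 3$, it either answers directly or forwards into one of its other children's subtrees, which is an external request from a degree-$\geq 3$ node and is handled by Proposition 11. If $P$ is internal of degree $2$, then $P$ lies strictly closer to the root, so the inductive hypothesis --- the present proposition applied to $P$ --- disposes of $P$'s forwarded request, and with it $V$'s. Termination is immediate, since every hop moves the request strictly toward the root or straight onto a leaf.

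The delicate point, and the one I expect to be the main obstacle, is justifying that rule (2.1) is genuinely applicable whenever a degree-$2$ node invokes it --- that is, that the subtree rooted at $V$ really does contain a leaf holding at least $8$ tokens at that moment --- together with checking that the cascade of forwarded requests never pushes any node past the request bounds of Proposition 6 (at most $j$ for a level-$j$ internal node, at most $k-2=2$ internal requests for a leaf, at most $s$ for a level-$s$ root), on which every invocation of Propositions 8--11 rests. The leaf-existence claim calls for a token count inside the subtree rooted at $V$: a leaf carries $12$ tokens, surrenders at most $4$ of them in Round 1, and by rule (1) never falls below $7$, so I would establish the dichotomy that either some leaf of that subtree survives with at least $8$ tokens, or the subtree has already absorbed enough Round-1 requests that $V$ is instead entitled to escalate to its parent and succeed there by the induction above. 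Stitching together this ``a rich enough leaf lies below, or else escalate upward'' dichotomy with the request-count invariant is where the real work sits; once it is done, every terminal node of the trace falls under one of Propositions 8--11, and $V$'s external request is satisfied.
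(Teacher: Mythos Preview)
Your case split misses the structural fact that makes the proposition tractable: a degree-$2$ non-root internal node $V$ never receives an external request. External requests in Round 2 go either to leaves (rules (2.1), (2.2)) or, via rule (1), to the node that drew $4$ tokens from a leaf in Round 1 --- and Round-1 requests are for $4(d-2)$ tokens, which vanishes at $d=2$. Hence both requests reaching $V$ are internal, coming from some $S_1,S_2\in\mathscr{S}_1$ intersecting $A_1,A_2\in\mathscr{A}$. Your case (b) is vacuous, and the parent-escalation mechanism you build for it (``$V$ must forward to its parent $P$'') is not a Round-2 rule at all; the induction on distance to the root has nothing to stand on.

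That leaves only your case (a), and here you correctly locate the crux --- the existence of a leaf with at least $8$ tokens in the subtree below $V$ --- but do not prove it. The paper's argument is not a token-counting dichotomy or an upward escalation; it is a one-shot contradiction with local optimality. Suppose no such leaf exists. Then every leaf below $V$ has already dropped to $7$, so each carries an internal request from some $S_i'\in\mathscr{S}_1$ intersecting some $A_i'\in\mathscr{A}$. Take the closest descendant $X$ of $V$ with degree $d\geq 3$ (or $X=V$, $d=2$, if the subtree is a chain), pick one leaf $L_i$ from each of $X$'s $d-1$ child subtrees, and observe that replacing $A_1',\dots,A_{d-1}',A_1,A_2$ by $S_1',\dots,S_{d-1}',S_1,S_2,V$ is a $(d{+}1)$-$(d{+}2)$ improvement --- contradicting that $\mathscr{A}$ is locally optimal. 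The ``real work'' you flagged is this single swap, not an inductive trace through the tree.
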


The proofs of Proposition 6 to 12 are given in Appendix Section 7.1. From Proposition 7, 8, 10, 11 and 12, we know that all requests can be satisfied. Hence after the second round of distribution, every set in $\mathscr{S}_1$ has at least 7 tokens. And from Proposition 9, we know that every root of level 0 and degree 1 retains 4 tokens. \\

We consider a root $R$ which is a 0-level set of degree 1. $R$ receives 4 tokens from leaf $B$. Assume $B$ is covered by $\{A_i\}_{i=1}^j\in\mathscr{A}$ and $\{A_i\}_{i=1}^j$ intersect with $\{O_i\}_{i=1}^l\in\mathscr{O}$. We first prove that,

\begin{proposition}
$\forall O\in\{O_i\}_{i=1}^l$, $O$ does not receive any token request during Round 2 of redistribution.
\end{proposition}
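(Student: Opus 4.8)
The plan is to argue by contradiction: suppose some $O \in \{O_i\}_{i=1}^l$ receives a token request during Round~2. By the definition of Round~2, requests originate only from singleton $1$-level sets $S_1 \in \mathscr{S}_1$, and such a request is sent to $O$ precisely when $O$ intersects a set $A \in \mathscr{A}$ that also intersects $S_1$. So I would first extract from this assumption the concrete configuration: a $1$-level set $S_1$, a chosen set $A$ with $S_1 \cap A \neq \emptyset$ and $O \cap A \neq \emptyset$, where $O$ is one of the optimal sets meeting the chain $\{A_i\}_{i=1}^j$ that covers the leaf $B$, and $R$ is a $0$-level root of degree $1$ which received $B$'s $4$ free tokens.

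The key idea is to exhibit a forbidden local improvement, contradicting the fact that $\mathscr{A}$ is locally optimal for the Restricted $k$-Set Packing algorithm (here $k=4$). Since $R$ is $0$-level, none of its $4$ elements is covered by $\mathscr{A}$; since $B$ is a leaf it is a $(k-1)$-level set, and in a blocking tree it is adjacent to $R$ via a $2$-set of the lexicographically first extension, so exactly one element of $B$ lies outside the union of $\mathscr{A}$ and that element is paired with an element of $R$. The chain $\{A_i\}_{i=1}^j$ covering $B$ then touches the optimal sets $\{O_i\}_{i=1}^l$, and I would use the incidence with $S_1$ and $A$ to build an improvement that replaces a small number of sets of $\mathscr{A}$ (including $A$, and possibly some $A_i$'s) by a strictly larger collection of sets drawn from $\mathscr{O}$ together with $S_1$ — and, crucially, argue that this replacement does not increase the number of $1$-sets needed to finish the cover, because the newly covered elements (in particular those of $R$ and $S_1$) absorb exactly the elements that would otherwise have been handled by small sets. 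This makes the improvement legal under the restriction, contradicting local optimality. The analogue of the argument already used in the text ("if $O_i \in \mathscr{S}_1$ there is a $1$-$2$-improvement") is the template to follow; here the $0$-level root $R$ plays the role that lets us enlarge the packing.

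The main obstacle I anticipate is the bookkeeping on $1$-sets: it is not enough to produce a packing improvement, one must show it is \emph{restricted}-legal, i.e.\ the semi-local $(2,1)$-improvement on the new remainder uses no more $1$-sets than before. The $0$-levelness of $R$ and Proposition~2 (a $1$-set lives only inside a $(k-1)$-level set) should be the levers here: after the improvement, the previously-uncovered elements of $R$ become covered by an optimal set, so they can no longer force a $1$-set, while the single uncovered element of $B$ that was matched to $R$ gets absorbed into the same $2$- or $3$-structure without a net change. I would also need to handle the case distinction on how many of the $O_i$ receive requests and whether $S_1$ coincides with some $O_i$ (ruled out earlier) or is disjoint from the chain, but these should reduce to the same improvement-plus-no-new-$1$-sets argument. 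Finally, one checks that no request to $O$ can instead come from the Round~1 redistribution within a tree, since $O$ is a singleton optimal set in $\mathscr{F}$ and Round~1 requests stay inside a single blocking tree — so the only possible source is Round~2, and that has just been excluded.
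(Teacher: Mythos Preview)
Your overall strategy --- derive a contradiction by exhibiting a forbidden local improvement that uses the $0$-level root $R$ as the ``extra'' optimal set --- is exactly the paper's strategy. However, your proposal contains a genuine gap in its characterization of Round~2 requests.

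You write that a Round~2 request is sent to $O$ ``precisely when $O$ intersects a set $A \in \mathscr{A}$ that also intersects $S_1$.'' This describes only the \emph{internal} requests. But Round~2 also has a forwarding mechanism: when a node $V$ receives a request while holding only $7$ tokens, $V$ forwards the request --- a leaf forwards to the internal node that received its $4$ tokens in Round~1, and an internal node forwards to a leaf in one of its subtrees (steps~(1), (2.1), (2.2) of Round~2). These forwarded requests are the \emph{external} requests, and $O$ may receive one without sharing any set of $\mathscr{A}$ with a member of $\mathscr{S}_1$. Since the $O_i$'s are merely the optimal sets meeting the $\{A_i\}$ that cover $B$, nothing forces $O$ to be a singleton in $\mathscr{F}$ (your final paragraph assumes this without justification); $O$ may perfectly well be a leaf of some blocking tree and thus a target for an external request.

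The paper's proof accordingly has two parts. The internal-request case is dispatched by a $1$-$2$-improvement (replace the single $A$ by the requesting $1$-level set and $R$). The external-request case --- the bulk of the argument --- requires tracing the forward chain back to an internal node $V$ of degree $d\in\{2,3,4\}$ and, in each sub-case, assembling a $3$-$4$-improvement of the form ``replace $A,A_1,A_2$ by $R,S_1,S_2,V$,'' where $S_1,S_2$ are the $1$-level singletons whose requests reached $V$. Your proposal does not anticipate this case split, and the improvement you sketch (removing $A$ and some $A_i$'s, adding $S_1$ and sets from $\mathscr{O}$) does not cover it: in the external case $O$ need not share any chosen set with a member of $\mathscr{S}_1$, so there is no ``$A$ intersecting both $S_1$ and $O$'' to remove.

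Your concern about the $1$-set restriction is reasonable but secondary; the paper treats the exhibited improvements as valid without an explicit $1$-set check. The missing ingredient in your plan is the external-request case analysis.
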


The proof of Proposition 13 is given in Appendix Section 7.1. Based on Proposition 13, for any set $O\in\{O_i\}_{i=1}^l$, we can think of a token request from $R$ to $B$ as an internal request to $O$. We describe the third round of redistribution.

\textbf{Round 3 - Redistribution for root of level 0 and degree 1}. Request 1 token from each of $O_1,...,O_l$ following Round 2.

\begin{proposition}
$l\geq 3$.
\end{proposition}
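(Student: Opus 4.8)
The plan is to show $l\geq 3$ by exhibiting a forbidden local improvement or a lexicographically smaller extension whenever $l\leq 2$, mirroring the style of the earlier arguments (e.g.\ the proof that no $O_i$ belongs to $\mathscr{S}_1$). The root $R$ is a $0$-level set: none of its $k$ elements is covered by $\mathscr{A}$. Its only neighbor in $\mathscr{F}$ is the leaf $B$, a $(k-1)$-level set; the single $2$-set in the extension joining $R$ and $B$ uses one element of $R$ and one element of $B$, and the remaining $k-1$ elements of $R$ must be covered by $1$-sets and $2$-sets/$3$-sets of the extension, none of which can connect to another optimal set (else $R$ would have degree $\geq 2$). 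Since $B$ is covered by the sets $A_1,\dots,A_j$ of $\mathscr{A}$, and these in turn meet the optimal sets $O_1,\dots,O_l$ (with $B$ itself not among them, as $B$ is a leaf hence $(k-1)$-level and not a singleton root — one should first check $B\notin\{O_i\}$, which follows because $B$ is covered by $\mathscr A$ while each $O_i$ supplies elements covered by $\mathscr A$ but $B$'s own covered elements lie in $A_1,\dots,A_j$; more carefully, $\{O_i\}$ are the optimal sets other than $B$ that $A_1,\dots,A_j$ touch), the sets $A_1,\dots,A_j$ together with $B$ and $O_1,\dots,O_l$ form a candidate for a local improvement.

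The key counting step: the $j$ sets $A_1,\dots,A_j$ cover all $k-1$ covered elements of $B$ plus whatever they contribute to $O_1,\dots,O_l$. If $l\leq 2$, I would argue that replacing $A_1,\dots,A_j$ by the $k$-sets $B, O_1,\dots,O_l$ (and if necessary some of the optimal sets already accounted for) either reduces the number of $k$-sets needed — giving a $j$-$(l{+}1)$-improvement when $l+1 < j$ — or keeps it equal while strictly decreasing the lexicographic rank of the extension, because after swapping in $B$ the $k-1$ covered elements of $B$ become part of an optimal $k$-set, and the $2$-set joining $R$ to $B$ together with the $1$-sets inside $R$ can be reorganized: $R$ and $B$ can now be merged along that $2$-set in a way that removes a $2$-set "not within an optimal set," contradicting minimality of the chosen extension. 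The numerology that makes $l\le 2$ impossible is essentially: $B$ has $k-1$ covered elements, each $A_i$ has $k$ elements of which at most... — here I would pin down that with only $l\le 2$ optimal sets besides $B$ being touched, the total element count $k\cdot j$ of the $A_i$'s must fit inside $(k-1) + l\cdot k \le (k-1)+2k = 3k-1$, forcing $j\le 2$, and then a $2$-$3$- (or $1$-$2$-, or $2$-$2$-) improvement or a lexicographic decrease is available, contradicting either un-improvability of $\mathscr A$ or minimality of the extension.

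I expect the main obstacle to be the careful case analysis on the pair $(j,l)$ with $l\in\{0,1,2\}$: for each small value one must either produce an honest $i$-$j$-improvement with $j>i$ (handled by the packing being locally optimal), or, when the set counts only tie, produce a strictly lex-smaller extension of $\mathscr A$ — and verifying that the new extension really is smaller requires tracking all three lexicographic keys (number of $1$-sets; number of $2$-/$3$-sets outside optimal sets; number of $3$-sets outside optimal sets) through the rearrangement around $R$ and $B$. The subtlety is that swapping $A_i$'s for optimal sets changes which elements are "covered by $\mathscr A$," so the extension must be rebuilt, and one must confirm the $1$-sets do not increase (which is exactly the blocking hypothesis) while showing some later key strictly drops. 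Handling $l=0$ (where $B$ alone absorbs all of $\bigcup A_i$, forcing $j=1$ and an immediate $1$-$2$-improvement via $A_1\to B$ plus reorganizing $R$) and $l=1$ similarly should be quick; $l=2$ is the tight case and will need the most care, presumably reducing to showing that a $2$-set joining $R$ to $B$ can always be eliminated once $B\in\mathscr{A}$, which is where Proposition~4's structural control on low-level degree-$2$ nodes and Proposition~13 (no Round-2 requests at the $O_i$) get used.
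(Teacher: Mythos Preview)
Your proposal misses the central move: $R$ itself is added to the packing. Because $R$ is a $0$-level set, it is disjoint from every set in $\mathscr{A}$ and can always be inserted; and since the $k-1=3$ covered elements of $B$ lie precisely in $A_1,\dots,A_j$, removing those $A_i$ frees $B$ as well. Hence removing $A_1,\dots,A_j$ and inserting $\{R,B\}$ is always a legal packing swap. The paper's argument (here $k=4$) then runs on $j$: if $j=1$ this is already a $1$-$2$-improvement; if $j=3$ the $4\cdot 3-3=9$ elements of $(\bigcup_i A_i)\setminus B$ cannot fit in two $4$-sets, so $l\ge 3$ outright; and if $j=2$ with $l=2$, pigeonhole puts at least three of the five elements of $(A_1\cup A_2)\setminus B$ into a single $O_i$, and replacing $A_1,A_2$ by $O_i,B,R$ is a $2$-$3$-improvement.

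Your proposed replacement $A_1,\dots,A_j\to B,O_1,\dots,O_l$ need not yield a packing, since nothing prevents an $O_i$ from meeting some $A'\in\mathscr{A}\setminus\{A_1,\dots,A_j\}$. You also have the improvement inequality reversed (a $j$-$(l{+}1)$-improvement requires $l+1>j$, not $l+1<j$), and your $l=0$ case is vacuous (each $A_i$ has four elements, only three of which can lie in $B$, so $l\ge 1$ always). Once $R$ is part of the swap, no lexicographic fallback is needed --- the paper uses only genuine packing improvements for this proposition --- and Propositions~4 and~13 play no role here.
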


The correctness of Round 3 follows from Proposition 14. The proof of Proposition 14 is given in Appendix Section 7.1. We thus prove that a root of level 0 and degree 1 has 7 tokens after the third round of redistribution.


Therefore, after the three rounds of token redistribution, each optimal set has at least 7 tokens, then the approximation ratio of the Restricted 4-Set Packing algorithm is at least $\frac{7}{16}$. \qed
\end{proof}

We give the construction of tight example in Appendix Section 7.2. We thus conclude that the approximation ratio of the Restricted 4-Set Packing algorithm is $\frac{7}{16}$.

\subsection{Analysis of the Restricted $k$-Set Packing Algorithm, $k\geq 5$}

For $k\geq 5$, we prove that the approximation ratio of the Restricted $k$-Set Packing algorithm is the same as the set packing heuristic \cite{schrijver}. For fixed $s$, a local improvement can replace at most $s$ sets with $s+1$ sets. We prove that for any $\epsilon>0$, there exists an $s$, such that the approximation ratio of the Restricted $k$-Set Packing algorithm is at least $\frac{2}{k}-\epsilon$.

The proof strategy is similar to that of the Restricted 4-Set Packing algorithm. We first create a forest of blocking trees $\mathscr{F}$. We give every element covered by sets chosen by the algorithm one unit of tokens. We then redistribute the tokens among all the optimal sets. We claim that for $s\geq\frac{2}{k\epsilon}$, after redistribution, every optimal set gets at least $2-k\epsilon$ units of tokens. We use a different parameter of local improvement from Theorem 2. The algorithm still runs in polynomial time.

\begin{proof}
The first round of redistribution goes as follows. \\

\textbf{Round 1 - Redistribution in each blocking tree $\mathscr{T}$}. Every leaf in $\mathscr{T}$ has $k-3$ units of free tokens to distribute. Every internal node $V$ of degree $d$ receives $d-2$ units of tokens from a leaf. The root receives the remaining tokens.

\begin{proposition}
After Round 1, every node has at least 2 units of tokens, except singletons which are 1-level sets.
\end{proposition}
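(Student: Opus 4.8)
The plan is to verify the token count node by node, tracking for each vertex of a blocking tree $\mathscr{T}$ the number of units of tokens it holds after the Round 1 reassignment, and showing this is at least $2$ whenever the vertex is not a $1$-level singleton. Recall that a node representing an $i$-level set starts with $k-i$ units (one unit per element covered by $\mathscr{A}$), a leaf by definition represents a $(k-1)$-level set, so a leaf starts with $1$ unit; the phrase ``$k-3$ units of free tokens'' refers to what a leaf can spare after retaining $2$ units for itself — i.e.\ a leaf holds $k-1$ tokens, keeps $2$, and donates $k-3$. First I would restate this bookkeeping cleanly: a leaf ends Round 1 with exactly $2$ units (it donated everything beyond $2$), so leaves are fine.

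Next I would handle the internal (non-root) nodes. An internal node $V$ of degree $d$ has $d-1$ children plus one parent edge, hence $d$ neighbours, and as an $i$-level set it started with $k-i$ units. By Proposition 3 (the definition of the blocking tree forces every non-root node of level $i$ to have degree at least $k-i$), we have $d \geq k-i$, i.e.\ $k-i \leq d$. The node receives $d-2$ extra units from a leaf, so it ends with $(k-i) + (d-2) \geq (k-i) + (k-i) - 2 = 2(k-i) - 2$. Since an internal node is not a leaf it has degree $d\geq 2$ as an internal node of a tree with $\geq 2$ nodes — actually I must be careful: an internal node has a parent and at least one child, so $d \geq 2$, giving $k-i \leq d$ and $i \leq k-1$... but I actually want $k-i\geq 2$, i.e.\ $i\leq k-2$. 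For $i = k-1$ the node would be a leaf unless it is the root, so internal non-root nodes have $i \leq k-2$, hence $k-i \geq 2$ and the count $2(k-i)-2 \geq 2$. The requested amount $d-2$ is non-negative, and I would note that the leaf being solicited indeed has enough to give (this is where one invokes that the subtree rooted at $V$ contains enough leaf surplus; with $k\geq 5$ each leaf alone carries $k-3\geq 2$ free units, which will comfortably cover the cascading requests — this is the same mechanism as in Proposition 5 for $k=4$ but with slack).

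Finally I would treat the root $R$, say of level $i$ and degree $r$. The total token mass in the tree is conserved: it equals the sum over all nodes of (their starting units), and after Round 1 every non-root node holds at least $2$ by the above. I would argue that $R$ holds all leftover tokens and that this leftover is at least $2$ — the cleanest route is to observe that the root was chosen (Proposition 3) either as a node of level $i$ with degree $< k-i$, or, failing that, as a node that is not $(k-1)$-level, so in the generic case $R$ has level $i \leq k-2$ and thus starts with $k-i \geq 2$ units of its own, which it never gives away (requests only flow from internal nodes toward leaves, and leaves only donate); the exceptional case of an all-$(k-1)$-level tree (a single edge or point) is immediate since then $R$ has $1$ unit and... this is precisely the $1$-level-singleton–type edge case the statement must exclude, but here the degenerate tree has $(k-1)$-level sets, not $1$-level, so for $k\geq 5$ a $(k-1)$-level set has $1$ unit, which is \emph{not} $\geq 2$ — so I would need to confirm that such a degenerate tree either does not arise or that its root also collects a donation from its unique leaf neighbour. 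I expect \textbf{this root accounting to be the main obstacle}: one must show the remaining-token count at the root is genuinely $\geq 2$ in every tree shape, including small/degenerate trees and roots of small degree, and rule out (via Proposition 4, which pins down the possible low-level low-degree root, and Proposition 2, which controls where $1$-sets live) the scenarios where the root would come up short. The non-root and leaf cases are routine degree-counting; the root is where the structural propositions 2–4 have to be brought to bear.
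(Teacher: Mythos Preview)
Your token bookkeeping has the level formula inverted. An $i$-level optimal set has exactly $i$ of its elements covered by $\mathscr{A}$, so it starts with $i$ units of tokens, not $k-i$; your own parenthetical ``one unit per element covered by $\mathscr{A}$'' is the right description and contradicts the formula you wrote down. A leaf, being a $(k-1)$-level set, therefore starts with $k-1$ units (you do say this once, but then immediately write ``so a leaf starts with $1$ unit''). The inversion propagates: your worry about a $(k-1)$-level root in a degenerate tree having only $1$ token is misplaced, since such a node actually holds $k-1\geq 4$ tokens; conversely, your claim that a root of level $i\leq k-2$ ``starts with $k-i\geq 2$ units of its own'' is false under the correct formula, since it actually starts with $i$ tokens, which may be $0$ or $1$.

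With the correct count the internal-node step is a one-liner: a non-root internal node of level $i$ and degree $d$ has $d\geq k-i$ by Proposition~3, hence ends Round~1 with $i+(d-2)\geq k-2\geq 3$ units. The root is indeed the only nontrivial case, but the resolution is not that the root has enough on its own. The root receives all \emph{leftover} leaf donations: by~(\ref{eqn:node}) the number of leaves is $\sum(d_V-2)+r$, the leaves donate $(k-3)\bigl(\sum(d_V-2)+r\bigr)$ units in total, and the internal nodes absorb $\sum(d_V-2)$ of those, so the root receives $(k-4)\sum(d_V-2)+(k-3)r$ in addition to its own $s$. For $k\geq 5$ and $r\geq 1$ this donation alone is at least $k-3\geq 2$ --- this is precisely the observation the paper isolates (``contrary to the Restricted $4$-Set Packing problem \ldots\ every leaf contributes at least $2$ units of tokens''). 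For $r=0$ the node is a singleton holding $s$ tokens: $s\geq 2$ is fine, $s=1$ is the stated exception, and $s=0$ cannot occur since a $0$-level singleton could be added to $\mathscr{A}$ as a restricted $0$-$1$ improvement (by Proposition~2 it contains no $1$-set, and being a singleton it has no crossing $2$-set, so adding it creates no new $1$-set).
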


We consider the collection of 1-level sets $\mathscr{S}_1$ which are singletons in $\mathscr{F}$. Let $S_1$ be such a 1-level set that intersects with a $k$-set $A$ chosen by the algorithm. Assume $A$ intersects with $j$ other optimal sets $\{Q_i\}_{i=1}^j$. \\

\textbf{Round 2 - Redistribution for $S_1\in\mathscr{S}_1$}.
    \begin{itemize}
    \item For every singleton node of level $i$ ($i\geq 3$):

    Send 1 unit of tokens each to arbitrarily $i-2$ internal requests.
    \item If $S_1$ receives one unit of tokens, we are done. Otherwise, pick an arbitrary singleton node $Q_i$ from $\{Q_i\}_{i=1}^j$. Let $Q_i=O_1$. Let $A_1\in\mathscr{A}$ be a set intersecting with $Q_i$ while it does not intersect with any set in $\mathscr{S}_1$. (The existence of $A_1$ follows from Proposition 6.)
    \item If $A_1$ intersects with some singleton node which has at least 3 units of tokens, we move 1 unit to $S_1$. Otherwise pick an arbitrary singleton node $O_2$ which intersects with $A_1$. Let $A_2\in\mathscr{A}$ be a set intersecting with $O_2$ while it does not intersect with any set in $\mathscr{S}_1$.
    \item Repeat this procedure and form a chain of singleton sets $O_1=Q_i,O_2,...,O_p$ such that $S_1$ and $O_1$ intersect with $A=A_0$, $O_i$ and $O_{i+1}$ intersect with a set $A_i$ chosen by the algorithm, for $i=1,..,p-1$, until

        (1) The chain ends when excluding $O_p$, every set intersecting with $A_{p-1}$ is not a singleton node. Denote the collection of such $S_1$ by $\mathscr{S}_1^T$.

        Move 1 unit of tokens from any node in the collection of non-degenerate blocking trees which has at least 3 units of tokens to $S_1$.

        (2) The chain ends where there exists a 1-level set $S_1'\in\mathscr{S}_1$ which intersects with $A'$, such that $S_1'$ starts another chain of length $q$ which ends at $O_p(=O_q)$ (as illustrated in Fig. 4 (Appendix Section 8.1)).

        Construct a graph $\mathscr{G}$, such that every vertex in the graph represents a node in a chain, $V_1,V_2$ are connected, if the corresponding nodes intersect with a set $A$ chosen by the algorithm. Consider every connected component $\mathscr{C}$ in $\mathscr{G}$. Equally distribute all tokens to every vertex in $\mathscr{C}$.



    \end{itemize}

The correctness of step (1) and step (2) follows from Proposition 16 and 17. We give the proofs in Appendix Section 8.2.
\begin{proposition}
The collection of non-degenerate blocking trees has at least $|\mathscr{S}_1^T|$ free units of tokens. After step (1), every set in $\mathscr{S}^T$ has 2 units of tokens.
\end{proposition}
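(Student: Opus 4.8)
Proof proposal for Proposition 16 (the statement that the collection of non-degenerate blocking trees has at least $|\mathscr{S}_1^T|$ free units of tokens, and that after step (1) every set in $\mathscr{S}_1^T$ has 2 units).

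The plan is to prove the two halves separately, first establishing that each $S_1 \in \mathscr{S}_1^T$ can be "charged" to a distinct witness in a non-degenerate blocking tree, and then observing that each such witness carries a surplus of at least one unit after Round 1. For the first half, I would start from a fixed $S_1 \in \mathscr{S}_1^T$ and examine the terminating chain $O_1 = Q_i, O_2, \dots, O_p$ guaranteed by case (1): excluding $O_p$, every optimal set meeting $A_{p-1}$ is a non-singleton node of $\mathscr{F}$, i.e.\ lies in a non-degenerate blocking tree. Since $A_{p-1}$ is a $k$-set of $\mathscr{A}$, it meets at least two optimal sets (here is where $k \geq 5$ together with the local-search optimality of $\mathscr{A}$ matters — a $k$-improvement of small arity would otherwise apply), so at least one such non-singleton node $B = B(S_1)$ exists. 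I would argue this assignment $S_1 \mapsto B(S_1)$ is injective: if two distinct $S_1, S_1' \in \mathscr{S}_1^T$ mapped to the same non-singleton node $B$ contained in some blocking tree $\mathscr{T}$, then chasing the two chains back through $B$ produces a short local improvement (or violates the minimality of the lexicographically-first extension defining $\mathscr{F}$), contradiction; this is the same style of argument used to rule out $O_i \in \mathscr{S}_1$ earlier in the 4-set analysis and in Proposition 6.

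For the second half — that every witness node $B$ in a non-degenerate blocking tree has a free unit, i.e.\ retains at least $3$ units after Round 1 — I would invoke Proposition 15 together with a closer look at Round 1's bookkeeping. Proposition 15 already gives every node of a non-degenerate tree at least $2$ units; I need the extra unit. A leaf of such a tree starts with $k-3 \geq 2$ free units and distributes only to cover the $d-2$ deficit of internal ancestors, so for $k \geq 5$ a leaf that is asked for at most one donation retains $\geq 2$; more to the point, I would count globally: each leaf brings $k-3$ units, each internal node of degree $d$ absorbs $d-2$, and by the handshake/forest identities (a tree with $\ell$ leaves has $\sum_{\text{internal}} (d-2) = \ell - 2 + (\text{root contribution}) $, adjusted for the root's special status as in Proposition 5's analogue), the total surplus sitting at leaves, internal nodes, and the root is at least $\ell(k-3) - (\ell - 2) = \ell(k-4) + 2 \geq 2$, and in fact is spread so that some node — concretely, the root, which receives all leftover tokens, or any lightly-loaded leaf — carries $\geq 3$. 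Since step (1) moves exactly one unit per element of $\mathscr{S}_1^T$ out of the non-degenerate trees, and these trees collectively hold at least $|\mathscr{S}_1^T|$ free units, every transfer can be honored while leaving each donor with $\geq 2$; hence each $S_1 \in \mathscr{S}_1^T$ ends Round 2 step (1) with its original $\geq 1$ unit plus one received unit, for $\geq 2$ total (recall $k-3 \geq 2$ means a 1-level singleton $S_1$ already holds $k-3\geq 2$? — no: a 1-level set has only one covered element, hence $1$ token unit — so the received unit brings it to $2$, which is the claim).

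The main obstacle I anticipate is the injectivity of the witness assignment $S_1 \mapsto B(S_1)$: the chains defining different elements of $\mathscr{S}_1^T$ can share sets of $\mathscr{A}$ and can thread near the same blocking tree, so I must argue carefully that their terminal non-singleton witnesses are distinct, ruling out all the ways two chains could "collide" at a non-singleton node without producing a forbidden local improvement or contradicting the lexicographic minimality of the chosen extension. This requires a case analysis on how the two chains enter the shared tree (through the same leaf, through sibling subtrees, through an internal node) mirroring the external-request analysis of Propositions 10–12; I expect it to be the longest part of the argument, whereas the token-counting in the second half is routine once Proposition 15 and the forest identities are in hand.
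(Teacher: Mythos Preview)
Your approach via an injective witness assignment $S_1 \mapsto B(S_1)$ differs from the paper's, and it has a genuine gap. The chains $O_1,\ldots,O_p$ built in Round~2 carry no a~priori length bound in terms of the local-search parameter $s$: each $O_i$ is merely some $2$-level singleton, and there can be arbitrarily many of them. Consequently ``chasing the two chains back through $B$ produces a short local improvement'' does not go through --- the would-be replacement could involve far more than $s$ sets of $\mathscr{A}$, so no contradiction with local optimality follows. Your second half is also incomplete in a more basic way: you compute a global surplus of roughly $\ell(k-4)+2$ free tokens in a tree with $\ell$ leaves, but you never bound $N_1=|\mathscr{S}_1^T|$ against any tree parameter, so there is nothing to compare the surplus to. (Per-witness surplus does not rescue this either: a leaf may end Round~1 with exactly $2$ units, since it starts with $k-1$ and can donate all $k-3$ free units, so an individual $B(S_1)$ need not carry a spare token.)

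The paper's proof avoids both difficulties by a direct global count with no injection. It writes down the free-token surplus in the (without loss of generality unique) non-degenerate tree $\mathscr{T}$ as $(k-4)\sum_V(d_V-1)+(k-3)r+s-2$, and then --- this is the idea you are missing --- upper-bounds $N_1$ by counting \emph{elements}: for each $S_1\in\mathscr{S}_1^T$ the terminal set $A_{p-1}$ of its chain places $k-1$ of its elements inside nodes of $\mathscr{T}$ (since by the definition of case~(1) every optimal set other than $O_p$ meeting $A_{p-1}$ lies in $\mathscr{T}$), whence $(k-1)N_1$ is at most the number of $\mathscr{A}$-covered elements in $\mathscr{T}$, namely $(k-1)[\sum(d_V-2)+r]+\sum(k-d_V)+s$. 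Assuming for contradiction that $N_1$ exceeds the free-token count and combining the two inequalities yields a relation in $k,r,s,\{d_V\}$ that fails for all $k\geq 5$ except the boundary case $k=5$, $r=1$, $s=0$ with at most two internal nodes; that case is handled separately by observing that a leaf there cannot meet any $A$ that meets some $S\in\mathscr{S}_1$ (else a $1$-$2$-improvement with the $0$-level root arises), sharpening the element bound to a contradiction.
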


\begin{proposition}
After step (2), every optimal set has at least $2-k\epsilon$ units of tokens.
\end{proposition}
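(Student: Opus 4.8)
The plan is to analyze the token distribution along the chains built in Round 2, case (2), where two distinct $1$-level singletons $S_1, S_1' \in \mathscr{S}_1$ start chains that merge at a common node $O_p = O_q$. First I would set up the accounting carefully: every element covered by $\mathscr{A}$ carries $1$ unit, so each $k$-set $A_i$ in the chain contributes $k$ units in total, split among the optimal sets intersecting it; after Round 1 every node in a non-degenerate blocking tree already has at least $2$ units, and each singleton of level $i$ in the chain starts with $i$ units (its $i$ covered elements). The key structural fact, coming from the chain-termination condition and Proposition 6, is that the sets $A_0 = A', A_1, \dots, A_{p-1}$ and $A_0 = A, A_1, \dots$ together with the singletons $O_1, \dots, O_p$ and the $1$-level sets $S_1, S_1'$ form a closed configuration: a $k$-set $A_i$ shared between consecutive singletons forces, since there is no $1$-$2$-improvement, that $A_i$ meets at least three optimal sets (so its $k \geq 5$ elements are distributed among $\geq 3$ sets), hence no single optimal set absorbs too little.

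The main step is to bound, for each connected component $\mathscr{C}$ of the auxiliary graph $\mathscr{G}$, the total number of tokens available against the number of vertices, and show the average is at least $2-k\epsilon$. I would argue that $\mathscr{C}$ contains $t$ singleton nodes and is "fed" by the $k$-sets internal to it plus the at least two $1$-level starter sets $S_1, S_1'$; counting the tokens on the covered elements of these $k$-sets and distributing equally, the deficit below $2$ per node is at most a term of order $1/s$ per node, and since $s \geq \lceil 2/(k\epsilon)\rceil$ we get each vertex at least $2 - k\epsilon$. Concretely, the absence of any $i$-$(i+1)$-improvement for $i \leq s$ means a short cycle or path in $\mathscr{G}$ of length $\leq s$ cannot be "closed" profitably, forcing the component to be large enough that the few units of slack lost at its two ends (the starter $1$-level sets contribute only $1$ unit each instead of needing $2$) amortize to at most $k\epsilon$ per vertex — this is exactly the same trade-off that yields the $\frac{2}{k}-\epsilon$ ratio in the unrestricted heuristic of Theorem~2, now transported to the chain structure.

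The hard part will be verifying that no local improvement of size $\leq s$ is hidden in the merged-chain configuration of case (2): one must check that replacing the $k$-sets $A_0, \dots, A_{p-1}$ (at most $s$ of them, using $s \geq 2/(k\epsilon)$ and a bound on chain length) by the optimal sets $S_1, S_1', O_1, \dots, O_p$ together with suitable small sets does not decrease the set count and, crucially, does not increase the number of $1$-sets — otherwise the configuration would have been improved away by the Restricted $k$-Set Packing algorithm. This requires combining the non-increase of $1$-sets (the blocking condition), the lexicographically-first extension used to define $\mathscr{F}$, and Proposition~1 (that $\mathscr{F}$ is a forest, so the chains cannot create spurious cycles of $2$-sets). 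Once this is established, the counting argument is routine: sum the $1$ unit per covered element over all $k$-sets touching $\mathscr{C}$, subtract what is owed to nodes outside $\mathscr{C}$, and divide by $|\mathscr{C}|$, using $s \geq \lceil 2/(k\epsilon)\rceil$ to conclude the per-node bound $2 - k\epsilon$.
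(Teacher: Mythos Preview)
Your plan aims at the right quantity --- the average token count over each connected component $\mathscr{C}$ of $\mathscr{G}$ --- but it misses the simple mechanism the paper actually uses and piles on machinery that plays no role. In the paper, every vertex of $\mathscr{C}$ is (by construction of the chains in case~(2)) a singleton of level $1$ or $2$, and its token count equals its degree in $\mathscr{G}$; hence the total token mass on $\mathscr{C}$ is exactly $2e$ by the handshake lemma. The argument then splits into two one-line cases: if $\mathscr{C}$ contains a cycle, $e\ge v$ and the average $2e/v\ge 2$; if $\mathscr{C}$ is a tree, $e=v-1$, and if $e\le s$ then replacing the $e$ edge-$A$-sets by the $v=e+1$ vertex-$O$-sets is an $e$-$v$ packing improvement, contradicting local optimality of $\mathscr{A}$. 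Thus $v\ge s+2$ and $2e/v=2-2/v\ge 2-2/(s+2)\ge 2-k\epsilon$ once $s\ge 2/(k\epsilon)$. That is the entire proof.

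Against this, your proposal is both overcomplicated and, in one place, logically inverted. The blocking-tree structure, the lexicographically-first extension, and Proposition~1 are irrelevant here: the vertices of $\mathscr{G}$ are \emph{singletons} in $\mathscr{F}$, so none of the tree structure of $\mathscr{F}$ enters. Your token accounting (``sum over $k$-sets touching $\mathscr{C}$, subtract what is owed outside'') is a detour around the one-line identity tokens $=$ degree, total $=2e$. More seriously, your ``hard part'' paragraph has the direction of the contradiction backwards: you do \emph{not} check that the replacement ``does not decrease the set count''; you check that when $e\le s$ the replacement \emph{is} a valid improvement (the packing grows by one), forcing $e>s$. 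The paper does not invoke the $1$-set restriction in this step at all --- the vertices being singletons in $\mathscr{F}$ and the removed $A$-sets only lowering levels elsewhere is what makes this unproblematic --- so your concern there, while not unreasonable, is not where any of the real work lies.
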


We conclude that after the second round of redistribution, every optimal set gets at least $2-k\epsilon$ units of tokens. Therefore, the approximation ratio of the Restricted $k$-Set Packing algorithm is at least $\frac{2}{k}-\epsilon$. \qed
\end{proof}

Moreover, the tight example of the $k$-set packing heuristic \cite{schrijver} can also serve as a tight example of the Restricted $k$-Set Packing algorithm for $k\geq 5$. Hence, the approximation ratio of the Restricted $k$-Set Packing algorithm is $\frac{2}{k}-\epsilon$. \\


\section{Analysis of the Algorithm PRPSLI}

We use the factor-revealing linear program introduced by Jain et al. \cite{lp1} to analyze the approximation ratio of the algorithm PRPSLI. Athanassopoulos et al. \cite{lp} first apply this method to the $k$-Set Cover problem. Notice that the cover produced by the restricted set packing algorithms is a cover which minimizes the number of 1-sets. In Appendix Section 9.1, we first show that for any $k\geq 4$, there exists a $k$-set cover which simultaneously minimizes the size of the cover and the number of 1-sets in the cover. We then present the set-ups and notations of the factor-revealing linear program (LP).

The proof of Theorem 3 is similar as the proof of Theorem 6 in \cite{lp}. Namely, we find a feasible solution to the dual program of (LP), which makes the objective function of the dual program equal to the value of $\rho_k$ defined in Theorem 3, thus $\rho_k$ is an upper bound of the approximation ratio of PRPSLI. We give the proof of an upper bound of the approximation ratio of PRPSLI in Appendix Section 9.2. On the other side, we give an instance for each $k$, such that PRPSLI does not achieve a better ratio than $\rho_k$ on this instance in Appendix Section 9.3. We thus prove Theorem 3.\\

\bibliographystyle{plain}

\bibliography{ISAACbib}

\newpage


\section*{Appendix}

\section{Comparison on the Approximation Ratio of the $k$-Set Cover Problem with Previous Works}

\begin{table}
\caption{Comparison on the Approximation Ratio of the $k$-Set Cover Problem}
\begin{center}
\begin{tabular}{|c|c|c|c|c|}

  \hline
  $k$ & $GRSLI_{k,5}$\cite{furer} & \cite{levin} & $PRSLI_{k,5}$\cite{lp} & PRPSLI \\
  \hline
  3 & 1.3333 & 1.3333 & 1.3333 & 1.3333 \\
  4 & 1.5833 & 1.5808 & 1.5833 & 1.5208 \\
  5 & 1.7833 & 1.7801 & 1.7833 & 1.7333 \\
  6 & 1.9500 & 1.9474 & 1.9208 & 1.8667 \\
  7 & 2.0929 & 2.0903 & 2.0690 & 2.0190 \\
  8 & 2.2179 & 2.2153 & 2.1762 & 2.1262 \\
  9 & 2.3290 & 2.3264 & 2.2917 & 2.2413 \\
  10 & 2.4290 & 2.4264 & 2.3802 & 2.3302 \\
  20 & 3.0977 & 3.0952 & 3.0305 & 2.9779\\
  21 & 3.1454 & 3.1428 & 3.0784 & 3.0284\\
  50 & 3.9992 & 3.9966 & 3.9187 & 3.8683\\
  75 & 4.4014 & 4.3988 & 4.3178 & 4.2678\\
  100 & 4.6874 & 4.6848 & 4.6021 & 4.5520 \\
  \hline
  large $k$ & $H_k-0.5$ & $H_k-0.5026$ & $H_k-0.5902$ & $H_k-0.6402$ \\
  \hline
\end{tabular}
\end{center}
\end{table}

\section{Section 3.1}

\subsection{Example 1}

\begin{example}[Blocking]
Consider an instance $(U,\mathscr{S})$ of the 4-Set Cover problem. Suppose there is an optimal solution $\mathscr{O}$ of the Restricted 4-Set Packing algorithm which consists of only disjoint 4-sets that cover all elements. Let $\mathscr{O}=\{O_i\}_{i=1}^{16}\bigcup\{B_i\}_{i=1}^m$, $m>12$. Let $\{A_i\}_{i=1}^7$ be a collection of 4-sets chosen by the algorithm. $\{O_i\}_{i=1}^{16}$ is a collection of 1-level or 2-level sets. Denote the $j$-th element of $O_i$ by $o_i^j$, for $j=1,2,3,4$. If $A_i=(o_{i_1}^{j_1},o_{i_2}^{j_2},o_{i_3}^{j_3},o_{i_4}^{j_4})$, we say that $A_i$ covers the elements $o_{i_1}^{j_1},o_{i_2}^{j_2},o_{i_3}^{j_3}$ and $o_{i_4}^{j_4}$. Denote the following unit by $\mathcal{U}$,

\begin{displaymath}
\mathcal{U}=\left\{
  \begin{array}{ll}
    A_1=(o_{1}^{1},o_{5}^{1},o_{9}^{1},o_{13}^{1})\\
    A_2=(o_{2}^{1},o_{6}^{1},o_{10}^{1},o_{14}^{1})\\
    A_3=(o_{3}^{1},o_{7}^{1},o_{11}^{1},o_{15}^{1})\\
    A_4=(o_{4}^{1},o_{8}^{1},o_{12}^{1},o_{16}^{1})\\
    A_5=(o_{1}^{2},o_{2}^{2},o_{3}^{2},o_{4}^{2})\\
    A_6=(o_{5}^{2},o_{6}^{2},o_{7}^{2},o_{8}^{2})\\
    A_7=(o_{9}^{2},o_{10}^{2},o_{11}^{2},o_{12}^{2})
  \end{array}
\right.
\end{displaymath}

We visualize this construction in Fig. 1. Let each cube represent a 4-set in $\{A_i\}_{i=1}^7$ and we place $\{O_i\}_{i=1}^{16}$ vertically within a $4\times 4$ square (not shown in the figure), such that each $O_i$ intersects with one or two sets in $\{A_i\}_{i=1}^{7}$. $\{A_i\}_{i=1}^{7}$ are placed horizontally. Notice that $O_{13},O_{14},O_{15},O_{16}$ which intersects with $A_1,A_2,A_3,A_4$ respectively are 1-level sets. The other 12 sets in $\{O_i\}_{i=1}^{16}$ are 2-level sets.

$\{B_i\}_{i=1}^m$ is a collection of 3-level sets. Notice that for our fixed optimal solution, all elements can be covered by 4-sets, so there is no 1-set needed to finish the cover. For given $\mathscr{S}$, when we compute an extension of the packing to a full cover via the semi-local (2,1)-improvement, assume the unpacked element of $B_i$ ($1\leq i\leq 12$) can only be covered by a 2-set intersecting with both $B_i$ and $O_i$, or it introduces a 1-set in the cover. The remaining unpacked elements of $\{O_i\}_{i=1}^{16}$ and $\{B_i\}_{i=1}^m$ can be covered arbitrarily by 2-sets and 3-sets. In unrestricted packing, one of the local improvements consists of replacing $A_1,A_2,A_3,A_4,A_5$ by $O_1,O_2,O_3,O_4,O_{13},O_{14},O_{15},O_{16}$. However, in restricted packing, for $1\leq i\leq 12$, adding any $O_i$ to the packing would create a 1-set covering the unpacked element of $B_i$ during the semi-local (2,1)-improvement. Hence this local improvement is prohibited as a result of restricting on the number of 1-sets.

We remark that blocking can be much more complicated than in this simple example. As we shall see later in Section 3.2, for the Restricted 4-Set Packing problem, a 3-level set can initiate a blocking of many optimal sets.

\begin{figure}
\begin{center}

\begin{tikzpicture}[scale =0.6]

  \draw[yslant=-0.5] (0,0) rectangle (4,1);
  \draw[yslant=-0.5] (0,1) rectangle (1,2);
  \node at (0.5, 1.3) {$A_5$};
  \draw[yslant=-0.5] (1,1) rectangle (2,2);
  \node at (1.5, 0.8) {$A_6$};
  \draw[yslant=-0.5] (2,1) rectangle (3,2);
  \node at (2.5, 0.3) {$A_7$};

  \draw[yslant=0.5] (4,-4) grid (8,-3);
  \node at (4.5,-1.3) {$A_1$};
  \node at (5.5, -0.8) {$A_2$};
  \node at (6.5, -0.3) {$A_3$};
  \node at (7.5, 0.2) {$A_4$};
  \pgftransformyshift{1.5cm}
  \pgftransformxshift{-1cm}
  \draw[yslant=0.5] (4,-4) rectangle (8,-3);
  \pgftransformyshift{-1.5cm}
  \pgftransformxshift{1cm}

  \pgftransformyshift{-2cm}
  \draw[yslant=0.5,xslant=-1] (4,4) rectangle (8,3);
  \draw[yslant=0.5,xslant=-1] (4,3) rectangle (8,2);
  \draw[yslant=0.5,xslant=-1] (4,2) rectangle (8,1);
  \pgftransformyshift{-1cm}
  \draw[yslant=0.5,xslant=-1] (4,1) grid (8,0);

\end{tikzpicture}
\caption{Placement of $A_1$ to $A_7$}
\end{center}
\end{figure}
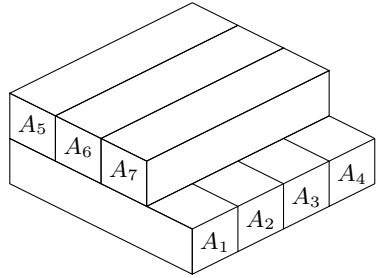

\end{example}

\subsection{Example 2}
\begin{example}[Finish the cover by 1-sets, 2-sets and 3-sets]
In Fig. 2, Fig. 3 and Fig. 4. Rectangles placed vertically represent optimal sets. Circles represent 1-sets. Ellipses placed horizontally represent 2-sets or 3-sets, where the smaller ones stand for 2-sets and the larger ones stand for 3-sets. The cross symbol represents an element covered by a $k$-set in $\mathscr{A}$. Let $(n_1,n_2,n_3)$ be an ordered pair, such that, $n_1$ is the number of 1-sets, $n_2$ is the total number of 2-sets and 3-sets which are not within one optimal set, and $n_3$ is the number of 3-sets which are not within one optimal set. These 1-sets, 2-sets and 3-sets are used to finish the cover. The right picture is always a cover which is before the cover in the left picture in the lexicographic order.
\end{example}

\begin{figure}
\begin{center}

\begin{tikzpicture}[scale =0.6]
    \draw (0,0) rectangle (1,4);
    \draw (1.2,0) rectangle (2.2,4);
    \draw (2.4,0) rectangle (3.4,4);

    \draw (0.5,0.5) node {X} (0.5,1.5) node {X} (0.5,2.5) node {X} (1.7,0.5) node {X} (1.7,1.5) node {X} (1.7,2.5) node {X} (2.9,0.5) node {X} (2.9,1.5) node {X};

    \draw (0.5, 3.5) circle (0.4);
    \draw (2.3,3.5) ellipse (1 and 0.4);
    \draw (2.9, 2.5) circle (0.4);

    \node at (5.8, 2) {$\Longrightarrow$};
    \pgftransformxshift{8cm}
    \draw (0,0) rectangle (1,4);
    \draw (1.2,0) rectangle (2.2,4);
    \draw (2.4,0) rectangle (3.4,4);

    \draw (0.5,0.5) node {X} (0.5,1.5) node {X} (0.5,2.5) node {X} (1.7,0.5) node {X} (1.7,1.5) node {X} (1.7,2.5) node {X} (2.9,0.5) node {X} (2.9,1.5) node {X};

    \draw (0.5, 3.5) circle (0.4);
    \draw (2.9,3) ellipse (0.4 and 0.9);
    \draw (1.7, 3.5) circle (0.4);

\end{tikzpicture}

\caption{$(2,1,0)\Rightarrow (2,0,0)$}
\end{center}
\end{figure}
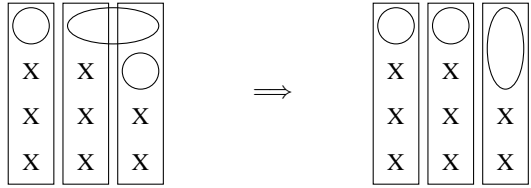

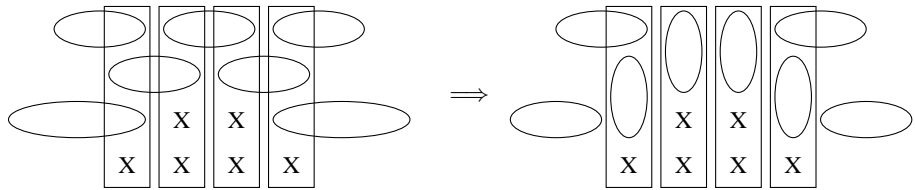
\begin{figure}
\begin{center}

\begin{tikzpicture}[scale =0.6]
    \draw (0,0) rectangle (1,4);
    \draw (1.2,0) rectangle (2.2,4);
    \draw (2.4,0) rectangle (3.4,4);
    \draw (3.6,0) rectangle (4.6,4);
    \draw (0.5,0.5) node {X} (1.7,0.5) node {X} (1.7,1.5) node {X} (2.9,0.5) node {X} (2.9,1.5) node {X} (4.1,0.5) node {X};
    \draw (-0.1,3.5) ellipse (1 and 0.4);
    \draw (1.1, 2.5) ellipse (1 and 0.4);
    \draw (-0.6, 1.5) ellipse (1.5 and 0.4);
    \draw (2.3,3.5) ellipse (1 and 0.4);
    \draw (3.5, 2.5) ellipse (1 and 0.4);
    \draw (4.7, 3.5) ellipse (1 and 0.4);
    \draw (5.2, 1.5) ellipse (1.5 and 0.4);

    \node at (8, 2) {$\Longrightarrow$};
    \pgftransformxshift{11cm}
    \draw (0,0) rectangle (1,4);
    \draw (1.2,0) rectangle (2.2,4);
    \draw (2.4,0) rectangle (3.4,4);
    \draw (3.6,0) rectangle (4.6,4);
    \draw (0.5,0.5) node {X} (1.7,0.5) node {X} (1.7,1.5) node {X} (2.9,0.5) node {X} (2.9,1.5) node {X} (4.1,0.5) node {X};
    \draw (-0.1,3.5) ellipse (1 and 0.4);
    \draw (0.5, 2) ellipse (0.4 and 0.9);
    \draw (-1.1, 1.5) ellipse (1 and 0.4);
    \draw (1.7,3) ellipse (0.4 and 0.9);
    \draw (2.9,3) ellipse (0.4 and 0.9);
    \draw (4.7, 3.5) ellipse (1 and 0.4);
    \draw (4.1,2) ellipse (0.4 and 0.9);
    \draw (5.7, 1.5) ellipse (1 and 0.4);
\end{tikzpicture}

\caption{$(0,7,2)\Rightarrow (0,4,0)$}
\end{center}
\end{figure}

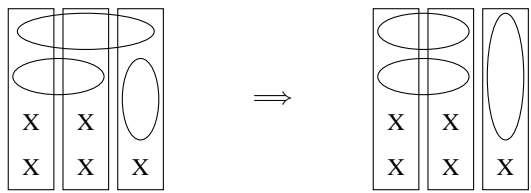
\begin{figure}
\begin{center}
\begin{tikzpicture}[scale=0.6]

    \draw (1.2,0) rectangle (2.2,4);
    \draw (2.4,0) rectangle (3.4,4);
    \draw (3.6,0) rectangle (4.6,4);
    \draw (1.7,0.5) node {X} (1.7,1.5) node {X} (2.9,0.5) node {X} (2.9,1.5) node {X} (4.1,0.5) node {X};
    \draw (2.9,3.5) ellipse (1.5 and 0.4);
    \draw (2.3, 2.5) ellipse (1 and 0.4);
    \draw (4.1, 2) ellipse (0.4 and 0.9);

    \node at (7, 2) {$\Longrightarrow$};
    \pgftransformxshift{8cm}
    \draw (1.2,0) rectangle (2.2,4);
    \draw (2.4,0) rectangle (3.4,4);
    \draw (3.6,0) rectangle (4.6,4);
    \draw (1.7,0.5) node {X} (1.7,1.5) node {X} (2.9,0.5) node {X} (2.9,1.5) node {X} (4.1,0.5) node {X};
    \draw (2.3,3.5) ellipse (1 and 0.4);
    \draw (2.3, 2.5) ellipse (1 and 0.4);
    \draw (4.1, 2.5) ellipse (0.4 and 1.4);

\end{tikzpicture}
\caption{$(0,2,1)\Rightarrow (0,2,0)$}
\end{center}
\end{figure}

\subsection{Proofs in Section 3.1}

\begin{proof}[Proposition 1]
It is sufficient to show that there is no cycle in $\mathscr{F}$. Suppose $V_1,V_2,...,V_l$ form a cycle in $\mathscr{F}$. We remove the 2-sets intersecting with adjacent nodes in the cycle and add the 2-sets inside each $V_i$, for $i=1,...,l$. Then the total number of 2-sets and 3-sets not within an optimal set decreases. \qed
\end{proof}

\begin{proof}[Proposition 2]
Let $V_1$ be an $i$-level set which is covered by a 1-set, for $i<k-1$. If there are more than one 1-set covering $V_1$, we can replace them with 2-set or 3-set inside $V_1$. Hence we only consider the case that there is only one 1-set covering $V_1$. If there is a 2-set or 3-set inside $V_1$, we can remove the 1-set and replace the 2-set or 3-set with a 3-set or two 2-sets respectively. If $V_1$ is a singleton node, since $i<k-1$, there exists a 2-set or 3-set inside $V_1$. Otherwise let $V_l$ be any node of degree 1 which connects to $V_1$ via a simple path $V_1,V_2,...,V_{l-1},V_l$. We remove all 2-sets intersecting with adjacent nodes in this path and move the 1-set from $V_1$ to $V_l$, then add 2-sets inside $V_1,...,V_{l-1}$. If there is a 1-set in any $V_i$, $2\leq i\leq l-1$, it can be replaced together with the 2-set just added inside $V_i$ with a 3-set. If there is another 1-set in $V_l$, it can be combined with the 1-set moved from $V_1$ to a 2-set. Hence, there is no 1-set in any $i$-level set for $i<k-1$. \\ However, 1-set can remain in $(k-1)$-level set. In this case, the $(k-1)$-level set is a singleton node in $\mathscr{F}$. \qed
\end{proof}

\begin{proof}[Proposition 3]
We consider non-degenerate tree. Suppose there are two nodes, an $i$-level set $V_1$ and a $j$-level set $V_2$ which satisfy the requirements. Consider a simple path connecting $V_1$ and $V_2$. We remove all 2-sets intersecting with adjacent nodes on this simple path. For those nodes excluding $V_1,V_2$ on the path, we can add a 2-set inside each optimal set. For $V_1$, since $V_1$ is not a $(k-1)$-level set, there is no 1-set inside $V_1$. Moreover, the degree of $V_1$ is smaller than $k-i$, hence there is a 2-set or 3-set inside $V_1$, we can then replace it with a 3-set or two 2-sets inside $V_1$ respectively. The same argument applies to $V_2$. Hence, there can be at most one $i$-level set with degree smaller than $k-i$. \qed
\end{proof}

\begin{proof}[Proposition 4]
We first prove that there is at most one node of either 0-level or 1-level which is of degree 2. Assume $V_1$ and $V_2$ are 0-level or 1-level sets and of degree 2. We replace the 2-sets intersecting with two adjacent nodes along the simple path connecting $V_1$ and $V_2$ with 2-sets or 3-sets inside the optimal sets represented by these nodes, then the total number of 2-sets and 3-sets not within an optimal set decreases. In this way, $V_1$ and $V_2$ turn to be of degree 1.

Suppose we have one 0-level or 1-level set $V_1$ of degree 2. We prove that there is no node of degree 1 representing an $i$-level set for any $i\leq k-2$, thus $V_1$ is the root. Assume $W$ is an $i$-level set of degree 1. We replace the 2-sets intersecting with two adjacent nodes along the simple path connecting $V_1$ and $W$ with 2-sets or 3-sets inside the optimal sets represented by these nodes, then the total number of 2-sets and 3-sets not within an optimal set decreases. In this way, $V_1$ turns to be of degree 1 and $W$ turns to be of degree 0. \qed
\end{proof}

\section{Section 3.2}

\subsection{Proofs in Section 3.2}

\begin{proof}[Proposition 5]
Consider any subtree with root $V$ of degree $d+1$, $d\geq 2$ ($V$ is not the root of $\mathscr{T}$). Since any internal node of degree 2 does not request any token, for simplicity we assume there is no internal node of degree 2 in the subtree. Assume $V$ has children $V_1,...,V_d$. If every child of $V$ is a leaf, since the request is processed in the reverse order given by BFS, we know that there is no other node requesting tokens from $V_1,...,V_d$, hence $V$'s requests can be satisfied. Assume that in every subtree rooted at $V_1,...,V_d$, all the requests have been satisfied. Since for any tree with root of degree $r$, the quantity
\begin{equation}
\sum_{\textrm{V is an internal node}}(d_V-2)+r \; .
\label{eqn:node}
\end{equation}
equals to the number of the leaves of the tree. We know that there are 4 free tokens in each subtree rooted at $V_1,..,V_d$. Hence, $V$'s requests can be satisfied.

By (\ref{eqn:node}), the root of $\mathscr{T}$ receives $4r$ tokens. \qed
\end{proof}

\begin{proof}[Proposition 6]
Suppose a singleton node $O$ is covered by $\{A_i\}_{i=1}^l$. If every $A_i$ intersects with $S_i\in\mathscr{S}_1$, we have a local improvement by replacing $A_1,...,A_l$ with $S_1,...,S_l$ and $O$. Hence, $\{A_i\}_{i=1}^l$ intersect with at most $l-1\leq j-1$ sets in $\mathscr{S}_1$. $O$ has at most $j-1$ requests of tokens. Similarly, we know that a leaf is a $(k-1)$-level set, it has at most $k-2$ internal requests.

Every internal node of level $j$ can have $j$ requests, and the root of level $s$ can have $s$ requests. \qed
\end{proof}

\begin{proof}[Proposition 7]
Assume $O$ is of level $j$ and $j\geq 2$. From Proposition 6, $O$ has at most $j-1$ request. After giving $j-1$ tokens, $O$ has $3j+1\geq 7$ tokens left. Hence, $O$ can satisfy all the requests. \qed
\end{proof}

\begin{proof}[Proposition 8]
The root has $4(s+r)$ tokens and it has at most $s+r$ requests. If it has at most $s+r-1$ requests, it retains $3s+3r+1\geq 7$ tokens after satisfying all the requests.

It remains to prove that it cannot have $s+r$ requests. Otherwise, suppose $R$ is covered by $\{A_i\}_{i=1}^l$ and every $A_i$ intersects with $S_i\in\mathscr{S}_1$. There are leaves $L_1,...,L_r$ which request token from $R$ for 1-level singleton $S_1',...,S_r'$. $S_i'$ intersects with $A_i'\in\mathscr{A}$. Then we have a $(l+r)$-$(l+r+1)$-improvement (replace $S_1,...,S_l,S_1',...,S_r'$ with $A_1,...,A_l,A_1',...,A_r'$ and $R$). \qed
\end{proof}

\begin{proof}[Proposition 9]
If such a root has an external request from $V$, then $V$ is a leaf and $V$ has an internal request from an $S\in\mathscr{S}_1$, $S$ intersects with an $A\in\mathscr{A}$. Then we have a 1-2-improvement (replace $A$ with $S$ and the root). \qed
\end{proof}

\begin{proof}[Proposition 10]
By Proposition 6, $L$ makes an external request if and only if it has 2 internal requests. As in Step (1), $L$ sends request to $V$ which has received 4 tokens from it. If $V$ has at least 8 tokens, it gives one to $L$. Otherwise, we proceed with Step (2.2). Hence, it is sufficient to prove that in (2.2), there exists a leaf in some subtree rooted at $V_2,...,V_d$ which has 8 tokens.

Otherwise, suppose any leaf in the subtree rooted at $V_2,...,V_d$ has an internal request. We pick $L_2,..,L_d$ belonging to the subtree rooted at $V_2,..,V_d$ respectively. $L_i$ has an internal request from $S_i$ which intersects with $A_i\in\mathscr{A}$. Moreover, we know that $L$ has an internal request from $S$ intersecting with $A\in\mathscr{A}$. We have a $d$-$(d+1)$-improvement (replace $A_2,...,A_d,A$ with $S_2,...,S_d,S$ and $V$). \qed
\end{proof}

\begin{proof}[Proposition 11]
An internal node of degree at least 3 sends a request if and only if it receives a request from a leaf but fails to satisfy it. The proof is indeed contained in the proof of Proposition 10. \qed
\end{proof}

\begin{proof}[Proposition 12]
If $V$ makes an external request, by Proposition 6, $V$ has two internal requests, say from $S_1,S_2$. $S_1$ intersects with $A_1\in\mathscr{A}$, $S_2$ intersects with $A_2\in\mathscr{A}$.

Let $X$ be a closest node to $V$ which belongs to the subtree rooted at $X$ and has a degree $d\geq 3$. If no such $X$ exists, let $X=V$ and $d=2$. If on the contrary, $V$'s request cannot be satisfied, we pick a leaf in each subtree rooted at a child of $X$. Let $L_1,...,L_{d-1}$ be these leaves, then $L_i$ has an internal request from $S_i'$ intersecting with $A_i'\in\mathscr{A}$. We then have a local improvement by replacing $A_1',...,A_{d-1}',A_1,A_2$ with $S_1',...,S_{d-1}'$, $S_1,S_2$ and $V$. \qed
\end{proof}

\begin{proof}[Proposition 13]
First, $\forall O\notin\mathscr{S}_1$. Otherwise, assume $O$ intersects with $A\in\mathscr{A}$, we have a 1-2-improvement (replace $A$ with $O$ and $R$). Hence, $O$ does not have any internal request.

$O$ does not have any external request either. Recall in Round 2 of redistribution, a leaf has an external request if there exists an internal node $V$ of degree $d$, such that $O$ belongs to the subtree rooted at $V$, and \\
- If $d=2$, $V$ has two internal requests from $S_1,S_2$, $S_1,S_2$ intersect with $A_1,A_2\in\mathscr{A}$ respectively. However in this case, we have a 3-4-improvement (replace $A,A_1,A_2$ with $R,S_1,S_2,V$).\\
- If $d=3$, there is an external request sent to $V$ from leaf $L_1$, $L_1$ has an internal request from $S_1$ which intersects with $A_1\in\mathscr{A}$, and there is an internal request sent to $V$ from $S_2$. Let $S_2$ intersect with $A_2\in\mathscr{A}$. We have a 3-4-improvement (replace $A,A_1,A_2$ with $R,S_1,S_2,V$). \\
- If $d=4$, there are two external requests sent to $V$ from leaves $L_1,L_2$. $L_1,L_2$ have internal requests from $S_1,S_2$. $S_1,S_2$ intersect with $A_1,A_2\in\mathscr{A}$ respectively. There is a 3-4-improvement (replace $A,A_1,A_2$ with $R,S_1,S_2,V$).

Hence, $O$ does not have any token request during Round 2 of redistribution. \qed
\end{proof}

\begin{proof}[Proposition 14]
First, $j\geq 2$. Otherwise, we have a 1-2-improvement (replace $A_1$ with $B$ and $R$).  \\
If $j=2$, then the number of elements contained in
$\{O_i\}_{i=1}^l\bigcap\{A_i\}_{i=1}^j$ is $2\cdot 4-3=5$. Hence $l\geq 2$. If $l=2$, there exists an $O_i$ which is a 3-level set and completely covered by $\{A_i\}_{i=1}^j$. However in this case, we have a 2-3-improvement (replace $A_1,A_2$ with $O_i,B,R$). Hence, $l\geq 3$. \\
If $j=3$, then the number of elements contained in
$\{O_i\}_{i=1}^l\bigcap\{A_i\}_{i=1}^j$ is $3\cdot 4-3=9$. In this case, $l\geq 3$.  \qed
\end{proof}

\subsection{Tight example of the Restricted 4-Set Packing algorithm}

We construct an example showing that for any fixed $s$, which is the parameter of the local improvement, and any $\epsilon>0$ and there exists an instance, on which the Restricted 4-Set Packing algorithm has a performance ratio at most $\frac{7}{16}+\epsilon$. We thereby conclude that the approximation ratio of the Restricted 4-Set Packing algorithm is $\frac{7}{16}$. \\

Our construction is randomized. We take $\frac{n}{12}$ copies of the unit $\mathcal{U}$ defined in Example 1. Then there are $n$ 2-level sets. Suppose there are $m$ 3-level sets which start the blocking. In case of 4-set packing, blocking can start from a single 3-level set $S_0$, then propagate through an arbitrary number of 2-level sets $S_1,S_2,...,S_i$ ($i\geq 1$). More specifically, we form the blocking by covering the remaining uncovered elements of $S_0,S_1,...,S_i$ by 2-sets $T_1,T_2,...,T_i$, such that $T_j$ intersects $S_{j-1}$ and $S_{j}$, for $j=1,2,...,i$.

In our example, we assign each 2-level set to one of the $m$ blocking sets independently and uniformly at random. If a 2-level set is assigned to a blocking set, that blocking set is a leaf of the blocking tree containing the 2-level set. For fixed $s$, we consider local $p$-$(p+1)$-improvements for any $p\leq s$. For each subset of $\mathscr{A}$ with size $p$, suppose after removing one covering set of $\lambda$ blocking sets, we can replace these $p$ sets with at most $q$ optimal sets. Then $q\leq \frac{16p}{7}\leq\frac{16s}{7}$. If $q-p-\lambda\geq 1$, i.e., $\lambda\leq q-p-1\leq\frac{9s}{7}\leq 2s$, there is a local improvement. \\

We assume $n\gg s$. Let $t$ be the maximum number of 2-level sets which can be added to the solution by the local improvement. Then $t\leq \frac{16s}{7}$. Let $\mathscr{O}_2$ be the collection of 2-level sets. Let $\{\mathcal{E}_1,\mathcal{E}_2,...,\mathcal{E}_N\}$ be a set of random variables, where $N=\sum_{i=1}^t{n\choose i}\approx n^{t}$ is the number of nonempty subsets of $\mathscr{O}_2$ with size at most $t$. Assume we enumerate every 2-level sets and arrange all subsets of $\mathscr{O}_2$ lexicographically. Let $\mathcal{E}_i$ be the event that there exists a local improvement which adds the $i$-th subset with size at most $t$ of $\mathscr{O}_2$ to the solution. Let $Y$ be the number of blocking sets assigned to the 2-level sets in this subset. Then,

\begin{equation}
\Pr(\mathcal{E}_i)\leq \Pr(Y\leq 2s)=\sum_{\lambda=1}^{2s} \Pr(Y=\lambda) \; .
\end{equation}

Since the assignments of 2-level sets to blocking sets are independently and uniformly at random, we bound (7.2) as follows,

\begin{eqnarray}
\Pr(\mathcal{E}_i) &\leq& \sum_{\lambda=1}^{2s}{m\choose\lambda}(\frac{\lambda}{m})^N \leq \sum_{\lambda=1}^{2s} \frac{m^{\lambda}\lambda^N}{m^N} \nonumber \\
&\leq& (2s)^N\cdot m^{-(N-2s)}\approx (\frac{2s}{m})^{N} \; .
\end{eqnarray}

Assume that $m\gg s$. Since each $\mathcal{E}_i$ depends on less than $N\approx n^t$ elements in $\{\mathcal{E}_i\}_{i=1}^N$, and $N\cdot \Pr(\mathcal{E}_i)=o(1)$. By the following lemma, we have $\Pr(\bigcap_{i=1}^N \overline{\mathcal{E}_i})>0$.

\begin{lemma}[Corollary 5.12 \cite{random}]
Let $\{\mathcal{E}_1,\mathcal{E}_2,...,\mathcal{E}_n\}$ be events in a probability space with $\Pr(\mathcal{E}_i)\leq p$ for all $i$. If each event is mutually independent of all other events except for at most $d$, and $ep(d+1)\leq 1$, then $\Pr(\bigcap_{i=1}^n \overline{\mathcal{E}_i})>0$.
\end{lemma}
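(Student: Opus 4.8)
\quad The statement is the symmetric form of the Lov\'{a}sz Local Lemma, and the plan is to prove it by the standard inductive argument with the uniform weight $x=\frac{1}{d+1}$. The hypothesis $ep(d+1)\le 1$ gives $p\le\frac{1}{e(d+1)}$, and combined with the elementary estimate $\bigl(1-\frac{1}{d+1}\bigr)^{d}\ge e^{-1}$ (which follows from $-\ln(1-t)\le\frac{t}{1-t}$ applied with $t=\frac{1}{d+1}$) this yields the one numerical fact we need, namely $p\le x(1-x)^{d}$.

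The heart of the proof is the following claim, established by induction on $|S|$: for every index $i$ and every index set $S$ with $i\notin S$, the event $\bigcap_{j\in S}\overline{\mathcal{E}_j}$ has positive probability and $\Pr\bigl(\mathcal{E}_i\mid\bigcap_{j\in S}\overline{\mathcal{E}_j}\bigr)\le x$. The base case $|S|=0$ is immediate since $\Pr(\mathcal{E}_i)\le p\le x$. For the inductive step I would partition $S=S_1\cup S_2$, where $S_1$ consists of the (at most $d$) indices $j\in S$ for which $\mathcal{E}_i$ is not independent of $\mathcal{E}_j$, and $S_2=S\setminus S_1$, and write
\[
\Pr\Bigl(\mathcal{E}_i\,\Big|\,\bigcap_{j\in S}\overline{\mathcal{E}_j}\Bigr)=\frac{\Pr\bigl(\mathcal{E}_i\cap\bigcap_{j\in S_1}\overline{\mathcal{E}_j}\ \big|\ \bigcap_{j\in S_2}\overline{\mathcal{E}_j}\bigr)}{\Pr\bigl(\bigcap_{j\in S_1}\overline{\mathcal{E}_j}\ \big|\ \bigcap_{j\in S_2}\overline{\mathcal{E}_j}\bigr)}\,.
\]
The numerator is at most $\Pr\bigl(\mathcal{E}_i\mid\bigcap_{j\in S_2}\overline{\mathcal{E}_j}\bigr)=\Pr(\mathcal{E}_i)\le p$, using that $\mathcal{E}_i$ is mutually independent of $\{\mathcal{E}_j:j\in S_2\}$. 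For the denominator I would enumerate $S_1=\{j_1,\dots,j_m\}$ with $m\le d$, expand it as a telescoping product of $m$ conditional probabilities, and apply the induction hypothesis to each factor (each conditioning set having size strictly less than $|S|$), obtaining the lower bound $(1-x)^{m}\ge(1-x)^{d}\ge e^{-1}$; this simultaneously keeps all conditioning events of positive probability. Hence the ratio is at most $ep\le x$, closing the induction.

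The conclusion then follows from the chain rule, applying the claim with $S=\{1,\dots,i-1\}$:
\[
\Pr\Bigl(\bigcap_{i=1}^{n}\overline{\mathcal{E}_i}\Bigr)=\prod_{i=1}^{n}\Pr\Bigl(\overline{\mathcal{E}_i}\,\Big|\,\bigcap_{j<i}\overline{\mathcal{E}_j}\Bigr)\ge\prod_{i=1}^{n}(1-x)=\Bigl(1-\frac{1}{d+1}\Bigr)^{n}>0\,.
\]
The main obstacle is the inductive step: one must pick the partition $S=S_1\cup S_2$ so that the mutual-independence hypothesis can actually be invoked to bound the numerator by $\Pr(\mathcal{E}_i)$, and then be careful in the telescoping expansion of the denominator that every conditioning event that appears genuinely has a smaller index set than $S$ (so the induction hypothesis is available) and positive probability. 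The bounded-dependency assumption is used only to guarantee $|S_1|\le d$, and the hypothesis $ep(d+1)\le 1$ is exactly what is needed to conclude $ep\le x=\frac{1}{d+1}$.
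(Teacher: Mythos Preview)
Your proof is correct: it is the standard inductive derivation of the symmetric Lov\'asz Local Lemma with the uniform weight $x=\tfrac{1}{d+1}$, and all the steps you outline (the numerical inequality $p\le x(1-x)^d$, the partition $S=S_1\cup S_2$, the telescoping lower bound on the denominator, and the final chain-rule product) go through as stated.

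However, there is nothing to compare against: the paper does not prove this lemma. It is quoted verbatim as Corollary~5.12 of the cited reference \cite{random} and invoked as a black box in the tight-example construction of Appendix Section~7.2. So your proposal supplies a proof where the paper simply appeals to the literature; the approach you give is exactly the textbook argument one would find at the cited location.
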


Therefore, there exists an assignment of 2-level sets to blocking sets such that no local $p$-$(p+1)$-improvement is possible, for $p\leq s$. \\

Assume all 3-level sets but a constant few of them are blocking sets. The performance ratio of the Restricted 4-Set Packing algorithm on this instance is $\frac{7n+9m}{16n+12m}=\frac{7}{16}+O(\frac{1}{n})$. The ratio tends arbitrarily close to $\frac{7}{16}$ when $n\rightarrow \infty$.

\section{Section 3.3}

\subsection{Figure 5}

\begin{figure}
\begin{center}
\begin{tikzpicture}[scale=0.6]

    \draw (0,0) rectangle (0.8,5); 
    \node at (0.4, 4.5) {$S_1$};
    \draw (2,0) rectangle (2.8,5); 
    \node at (2.4,4.5) {$O_1$};
    \draw (4,0) rectangle (4.8,5); 
    \node at (4.4,4.5) {$O_2$};
    \draw (1.4, 0.5) ellipse (1.3 and 0.4); 
    \node at (1.4,0.5) {$A$};
    \draw (3.4,1.5) ellipse (1.3 and 0.4); 
    \node at (3.4, 1.5) {$A_1$};
    \draw (5.4, 0.5) ellipse (1.3 and 0.4); 
    \node at (5.4, 0.5) {$A_2$};

    \node at (6.5, 2.5) {$\cdots\cdots$};

    \pgftransformxshift{2cm};
    \draw (7,0) rectangle (7.8,5); 
    \node at (7,4.5) {$O_p=O_q$};
    \draw (9,0) rectangle (9.8,5); 
    \node at (9.4,4.5) {$O_{q-1}'$};

    \node at (11.5,2.5) {$\cdots\cdots$};
    \draw (13,0) rectangle (13.8,5); 
    \node at (13.4, 4.5) {$O_1'$};
    \draw (15,0) rectangle (15.8,5); 
    \node at (15.4,4.5) {$S_1'$};

    \draw (6.4, 1.5) ellipse (1.3 and 0.4); 
    \node at (6.4,1.5) {$A_{p-1}$};
    \node at (8.4,0.5) {$A_{p-1}$};
    \draw (8.4,0.5) ellipse (1.3 and 0.4);
    \draw (12.4, 1.5) ellipse (1.3 and 0.4);
    \node at (12.4,1.5) {$A_{1}'$};
    \draw (14.4,0.5) ellipse (1.3 and 0.4); 
    \node at (14.4, 0.5) {$A'$};

\end{tikzpicture}
\caption{Rectangles represent optimal $k$-sets. Ellipses represent $k$-sets chosen by the algorithm.}
\end{center}
\end{figure}
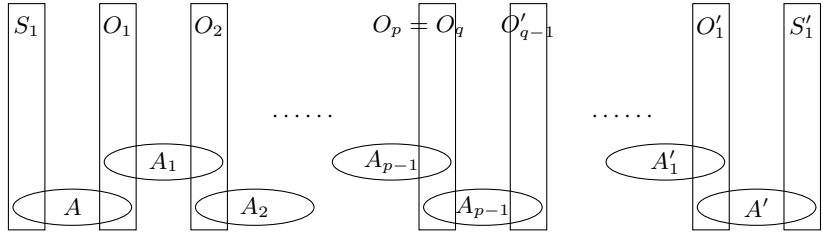

\subsection{Proofs in Section 3.3}

\begin{proof}[Proposition 15]
According to Proposition 4 and (\ref{eqn:node}), we know that after Round 1, every node has at least 2 units of tokens except singletons which are 1-level sets. Note that contrary to the Restricted 4-Set Packing problem, in restricted $k$-set packing for $k\geq 5$, every leaf contributes at least 2 units of tokens. Hence, even if a root is of level 0 and degree 1, it can still receives at least 2 units of tokens after the first round of redistribution.

We remark that we do not care about from which leaf an internal node or the root receive the tokens. \qed
\end{proof}

\begin{proof}[Proposition 16]
Let $N_1=|\mathscr{S}_1^T|$. Without loss of generality, assume there is only one non-degenerate tree $\mathscr{T}$. Let $\mathscr{T}$ have an $s$-level root of degree $r$ and a set of internal nodes with degree set $\{d_V\}_{V\textrm{ is an internal node}}$.

From (\ref{eqn:node}), we know that the leaves contribute $(k-3)\sum (d_V-2)+(k-3)r$ units of tokens. Here the summation goes through the set of internal nodes. There are $\sum (d_V-2+k-d_V-2)+(k-4)\sum (d_V-2)+(k-3)r+s-2$, i.e. $(k-4)\sum (d_V-1)+(k-3)r+s-2$ free units of tokens in $\mathscr{T}$ which can be distributed to sets in $\mathscr{S}_1^T$. \\
Assume on the contrary that
        \begin{equation}
        N_1>(k-4)\sum (d_V-1)+(k-3)r+s-2 \; .
        \label{N1lb}
        \end{equation}

        We derive an upper bound on $N_1$.

        We first claim that any $|A_i\bigcap O_j|\leq 1$. If $|A_i\bigcap O_j|\geq 2$, we know that $O_j$ has at least 3 units of tokens.

        If $A_{p-1}=A$, the $k-1$ elements of $A$ which do not intersect with $\mathscr{S}_1$, intersect with $\mathscr{T}$. If $A_{p-1}\neq A$, for each $S_1$, there are $k-1$ corresponding elements intersecting with $\mathscr{T}$. Here, these $k-1$ elements belong to $A_{p-1}$. Hence, $\mathscr{T}$ covers at least $(k-1)N_1$ elements. On the other hand, there are $(k-1)[\sum (d_V-2)+r]+\sum (k-d_V)+s$ elements covered by $\mathscr{T}$. We have,
        \begin{equation}
        (k-1)N_1\leq (k-1)[\sum (d_V-2)+r]+\sum (k-d_V)+s \; .
        \label{N1ub}
        \end{equation}
        Combining (\ref{N1lb}) and (\ref{N1ub}), we get $(k-1)[(k-4)\sum (d_V-1)+(k-3)r+s-2]<(k-1)[\sum (d_V-2)+r]+\sum (k-d_V)+s$, which implies
        \begin{equation}
        \sum (k^2-6k+6)(d_V-1)+(k^2-5k+5)r+(k-2)s-2(k-1)<0 \; .
        \label{dv1}
        \end{equation}
        (\ref{dv1}) only holds for the case that $k=5,r=1,s=0$ and there are at most 2 internal nodes. In this case, we observe that the leaf cannot intersect with any set $A$ which intersects with $S\in\mathscr{S}_1$, or we have a 1-2-improvement by replacing $A$ with $S$ and the root. Hence, we modify (\ref{N1ub}) and get
        \begin{equation}
        (k-1)N_1< \sum (k-d_V)+s \; .
        \label{N1ub2}
        \end{equation}
        Combining (\ref{N1ub2}) and (\ref{N1lb}), we have $\sum (5d_V-9)<0$. Since $d_V\geq 2$, it leads to a contradiction. \qed
\end{proof}

\begin{proof}[Proposition 17]
Let $e$ be the number of edges and $v$ be the number of vertices in $\mathscr{C}$.

If $\mathscr{C}$ contains a circle, then $e\geq v$. The average number of tokens in $\mathscr{C}$ is $\frac{2e}{v}\geq 2$.

If $\mathscr{C}$ is a tree, then $e=v-1$. We claim that $e\geq s+1$. Otherwise, there exists a $e$-$v$-improvement (replace all the sets corresponding to edges in $\mathscr{C}$ with all the sets corresponding to vertices in $\mathscr{C}$). The average number of tokens in $\mathscr{C}$ is $\frac{2e}{v}\geq 2-\frac{2}{v}\geq 2-\frac{2}{s+2}\geq 2-k\epsilon$ for $s\geq \frac{2}{k\epsilon}$.

Hence, in any case, after collecting all tokens of $\mathscr{C}$ then equally distributed among every vertex in $\mathscr{C}$, each vertex gets at least $2-k\epsilon$ units of tokens.  \qed
\end{proof}

\section{Proof of Theorem 3}

\subsection{Set-up of the Factor-revealing LP}

We first prove that there exists a $k$-set cover which simultaneous minimizes the size and the number of 1-sets in the cover.
\begin{lemma}
Suppose there are $k$-set covers $\mathscr{C}$, $\mathscr{C}'$, where $\mathscr{C}$ has $b$ sets and $b_1$ 1-sets, and $\mathscr{C}'$ has $b'$ sets and $b_1'$ 1-sets. Then there exists a $k$-set cover $\mathscr{C}''$ which has $\min(b, b')$ sets and $\min(b_1,b_1')$ 1-sets.
\end{lemma}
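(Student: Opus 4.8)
The plan is to prove the lemma by an exchange argument that shows one can ``merge'' the two covers so as to inherit the better count on each of the two objectives simultaneously. First I would reduce to the case $b \le b'$ and $b_1' \le b_1$ (if the same cover is best on both counts there is nothing to do, and if the inequalities point the same way the better cover already works); so assume $\mathscr{C}$ has fewer sets while $\mathscr{C}'$ has fewer $1$-sets. The target is to produce $\mathscr{C}''$ with at most $b$ sets and at most $b_1'$ one-sets.

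The key idea is the following local modification. Suppose $\mathscr{C}$ has strictly more than $b_1'$ one-sets. I would like to find a $1$-set $\{x\}\in\mathscr{C}$ and replace it, together with possibly rearranging nearby sets, so as to strictly decrease the number of $1$-sets without increasing the total size. To do this, look at the set $S\in\mathscr{C}'$ containing $x$: since $\mathscr{C}'$ has fewer $1$-sets than $\mathscr{C}$, a counting/pigeonhole argument (comparing, element by element, which cover uses $x$ inside a singleton) produces an element covered by a singleton in $\mathscr{C}$ but not in $\mathscr{C}'$; formally, consider the symmetric-difference-type structure of the two covers restricted to the elements that are $1$-sets in $\mathscr{C}$. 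The cleanest route is: as long as $\mathscr{C}$ has more than $b_1'$ singletons, there is a set $S\in\mathscr{C}'$, $|S|\ge 2$, all of whose elements are covered by singletons in $\mathscr{C}$ (otherwise every set of $\mathscr{C}'$ of size $\ge 2$ ``touches'' a non-singleton-covered element, and a charging argument gives $b_1\le b_1'$, a contradiction). Then replace those $|S|\ge 2$ singletons of $\mathscr{C}$ by the single set $S$: the number of sets drops by $|S|-1\ge 1$ and the number of $1$-sets drops by $|S|\ge 2$. Iterating, we reach a cover with at most $b_1'$ one-sets and at most $b$ sets — this is $\mathscr{C}''$. (If at some stage no such $S$ exists while still $b_1 > b_1'$, the charging argument I alluded to must be made to yield the contradiction directly.)

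The main obstacle I anticipate is making the ``there exists a set $S\in\mathscr{C}'$ of size $\ge 2$ entirely inside the $1$-set region of $\mathscr{C}$'' step airtight: a priori the sets of $\mathscr{C}'$ covering the singleton-elements of $\mathscr{C}$ could each also stick out into elements that are non-singletons in $\mathscr{C}$, so one cannot simply substitute. The fix is to argue by a minimal counterexample or by a fractional/matching argument on the bipartite incidence between singleton-elements of $\mathscr{C}$ and the sets of $\mathscr{C}'$ that cover them, using $|\{\text{singletons of }\mathscr{C}\}| = b_1 > b_1' \ge |\{\text{singletons of }\mathscr{C}'\}|$ to force some $\mathscr{C}'$-set of size $\ge 2$ to be ``captured.'' Once that structural step is secured, the rest is the straightforward monovariant (size and $1$-set count both non-increasing, $1$-set count strictly decreasing each step) which terminates since counts are non-negative integers, and the symmetric roles of $b,b'$ at the end give the stated $\min$'s. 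I would also remark that this lemma is exactly what justifies treating ``minimize size'' and ``minimize number of $1$-sets'' as compatible objectives in the factor-revealing LP that follows.
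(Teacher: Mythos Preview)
Your central structural claim --- that whenever $b_1 > b_1'$ there exists a set $S\in\mathscr{C}'$ with $|S|\ge 2$ whose elements are \emph{all} covered by singletons of $\mathscr{C}$ --- is false, and this breaks the argument. Take $U=\{1,2,3,4,5\}$, $\mathscr{C}=\{\{1\},\{2,3,4,5\}\}$ (so $b=2$, $b_1=1$) and $\mathscr{C}'=\{\{1,2\},\{3,4,5\}\}$ (so $b'=2$, $b_1'=0$). Here $b\le b'$ and $b_1>b_1'$, yet the singleton region of $\mathscr{C}$ is just $\{1\}$, and no multi-element set of $\mathscr{C}'$ lies inside it. The charging you sketch (``every $\ge 2$-set of $\mathscr{C}'$ sticks out, hence $b_1\le b_1'$'') does not go through either: in this example every $\ge 2$-set of $\mathscr{C}'$ does stick out, and still $b_1>b_1'$.

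What the paper does instead is an alternating-path exchange rather than a whole-set substitution. It builds the bipartite graph whose two sides are the sets of $\mathscr{C}$ and of $\mathscr{C}'$, with one edge per element of $U$ joining the two sets that cover it. Starting from a degree-$1$ vertex $v_1$ on the $\mathscr{C}$ side whose $\mathscr{C}'$-neighbor has degree $\ge 2$, it walks along a path until it first hits a $\mathscr{C}$-vertex $v_l$ of degree $\ge 3$; it then swaps the $\mathscr{C}$-sets $v_1,\dots,v_{l-1}$ on the path for the $\mathscr{C}'$-sets $v_1',\dots,v_{l-1}'$ (trimming any elements of the $v_i'$ not on the path, and deleting from $v_l$ the one element now doubly covered). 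This keeps the total number of sets fixed and kills exactly one singleton. If no degree-$\ge 3$ vertex exists on the $\mathscr{C}$ side, a counting argument on the bipartite graph forces $b>b'$, contradicting the standing assumption. In the example above the path is $\{1\}\to\{1,2\}\to\{2,3,4,5\}$, producing $\{\{1,2\},\{3,4,5\}\}$. Your plan can be salvaged by replacing the ``fully captured $S$'' step with this path argument; the monovariant reasoning you give after that step is fine.
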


\begin{proof}
We create a bipartite graph $G=(V, V', E)$, where every vertex in $V$($V'$) represents a set in $\mathscr{C}$($\mathscr{C}'$ ), and every edge in $E$ represents an element in the universe. Hence, the degree of a vertex represents the size of the corresponding set. Two vertices being adjacent means the corresponding two sets covers a same element. We show how to find $\mathscr{C}''$.
 
For simplicity, assume $G$ is connected and $b\leq b'$. If $b_1\leq b_1'$, take $\mathscr{C}''$ to be $\mathscr{C}$. Otherwise, consider a vertex $v_1\in V$ of degree 1 such that its neighbor has degree at least 2. If there exists a vertex $v_l\in V$ of degree at least 3, assume $v_l$ is the one with the shortest distance to $v_1$. Consider the path $v_1,v_1',...,v_{l-1}',v_l$ connecting $v_1$ and $v_l$, we replace the corresponding sets of $v_1,...,v_{l-1}$ with $v_1',...,v_{l-1}'$ and delete the element in $v_l$ which is covered by both $v_l$ and $v_{l-1}'$. If any $v_i'$ ($1\leq i\leq l-1$) has degree at least 3, delete the elements in $v_i'$ which are not covered by $v_1,...,v_{l-1}$. In this way, $b_1$ decreases by 1 while $b$ remains the same. If there is no $v_l\in V$ which has degree at least 3, since $b_1>b_1'$, and $\mathscr{C}$ and $\mathscr{C}'$ cover the same universe of elements, $b$ must be greater than $b'$, a contradiction. Hence, we can eventually decrease $b_1$ to be at most $b_1'$. Finally, we take $\mathscr{C}''$ to be this modified $\mathscr{C}$. \qed

\end{proof}

We are now ready to set-up the factor-revealing linear program for the $k$-set cover problem.

Let $(U,\mathscr{S})$ be an instance of the $k$-Set Cover problem, where $U$ is the set of elements to be covered, $\mathscr{S}$ is a collection of sets, and $\bigcup_{S\in \mathscr{S}}S=U$. For $i=k,k-1,...,3$, let $(U_i,\mathscr{S}_i)$ be the instance for phase $i$ of Algorithm PRPSLI, where $U_i$ is the set of elements which have not been covered before Phase $i$ and $\mathscr{S}_i$ is the collection of sets in $S$ which contain only the elements in $U_i$. Let $OPT_i$ be an optimal solution of $(U_i,\mathscr{S}_i)$ for $i\geq 7$. For $i\leq 6$, $OPT_i$ is an optimal solution of $(U_i,\mathscr{S}_i)$ with minimal number of 1-sets. $OPT$ is an optimal solution of $(U,\mathscr{S})$. Let $b_{i,j}$ be the ratio of the number of $j$-sets in $OPT_i$ over the number of sets in $OPT$. Let $\varrho_i$ be the approximation ratio of the set packing algorithm used in Phase $i$. Let $a_1$ be the ratio of the number of $1$-sets chosen by the semi-local optimization phase over the number of sets in $OPT$. Since $|OPT_i|\leq |OPT|$, we have for $i=k,k-1,...,3$,
\begin{equation}
\sum_{j=1}^i b_{i,j}\leq 1 \; .
\end{equation}

In each phase of PRPSLI, the number of $i$-sets chosen by the algorithm is $n_i=\frac{|U_i\backslash U_{i-1}|}{i}$. Since $U_{i-1}\subseteq U_i$, then $|U_i\backslash U_{i-1}|=|U_i|-|U_{i-1}|=(\sum_{j=1}^i jb_{i,j}-\sum_{j=1}^{i-1}jb_{i-1,j})|OPT|$. Let $\varrho_i$ be the approximation ratio of the set packing algorithm used in Phase $i$. At the beginning of Phase $i$, there are $b_{i,i}|OPT|$ $i$-sets. Thus,
\begin{eqnarray}
n_i &=& \frac{(\sum_{j=1}^i jb_{i,j}-\sum_{j=1}^{i-1}jb_{i-1,j})|OPT|}{i} \\
&&{} \geq \varrho_i b_{i,i}|OPT| \; .
\end{eqnarray}
i.e.
\begin{equation}
\sum_{j=1}^{i-1} jb_{i-1,j}-\sum_{j=1}^{i-1}jb_{i,j}-i(1-\varrho_i) b_{i,i}\leq 0 \; .
\end{equation}

We consider additional constraints imposed by the restricted phases, namely for Phase 6 to 3. Let $a_j$ be the ratio of the number of $j$-sets chosen by the semi-local optimization phase over the number of sets in $OPT$, for $j=1,2,3$. In each restricted phase, the number of 1-sets does not increase. Hence, for $i=3,4,5,6$,
\begin{equation}
a_1\leq b_{i,1} \; .
\end{equation}

Next, we obtain an upper bound of the approximation ratio of PRPSLI. From Lemma 2.3 in \cite{furer}, we have $a_1+a_2\leq b_{3,1}+b_{3,2}+b_{3,3}$. Also notice that $a_1+2a_2+3a_3=b_{3,1}+2b_{3,2}+3b_{3,3}$. Thus we have an upper bound of $n_3$, namely,
\begin{eqnarray}
n_3 &=& (a_1+a_2+a_3)|OPT|=(\frac{a_1}{3}+\frac{a_1+a_2}{3}+\frac{a_1+2a_2+3a_3}{3})|OPT| \nonumber \\
&\leq& (\frac{a_1}{3}+\frac{b_{3,1}+b_{3,2}+b_{3,3}}{3}+\frac{b_{3,1}+2b_{3,2}+3b_{3,3}}{3})|OPT| \nonumber \\
&=& (\frac{1}{3}a_1+\frac{2}{3}b_{3,1}+b_{3,2}+\frac{4}{3}b_{3,3})|OPT| \; .
\end{eqnarray}

Combining (9.2) and (9.6), we have an upper bound of the approximation ratio of PRPSLI as,
\begin{eqnarray}
\frac{\sum_{i=3}^k n_i}{|OPT|} &\leq& \sum_{i=4}^k \frac{\sum_{j=1}^i jb_{i,j}-\sum_{j=1}^{i-1}jb_{i-1,j}}{i}+\frac{1}{3}a_1+\frac{2}{3}b_{3,1}+b_{3,2}+\frac{4}{3}b_{3,3} \nonumber \\
&=& \sum_{j=1}^k \frac{j}{k}b_{k,j}+\sum_{i=4}^{k-1}\sum_{j=1}^i \frac{j}{i(i+1)}b_{i,j}\nonumber \\
&&+\frac{1}{3}a_1+\frac{5}{12}b_{3,1}+\frac{1}{2}b_{3,2}+\frac{7}{12}b_{3,3} \; .
\end{eqnarray}

Moreover,
\begin{equation}
b_{i,j}\geq 0 \qquad
\textrm{for } j=1,..,i;\quad i=k,...,3.
\end{equation}
\begin{equation}
a_1\geq 0 \; .
\end{equation}

Hence, we define the factor-revealing linear program of PRPSLI with objective function (9.7) and constraints (9.1), (9.4), (9.5), (9.8), (9.9) as follows,

\[\begin{array}{cll} \max &{\displaystyle \sum_{j=1}^k \frac{j}{k}b_{k,j}+\sum_{i=4}^{k-1}\sum_{j=1}^i \frac{j}{i(i+1)}b_{i,j}+\frac{1}{3}a_1+\frac{5}{12}b_{3,1}+\frac{1}{2}b_{3,2}+\frac{7}{12}b_{3,3}} \\
{\rm s.t.} &{\displaystyle \sum_{j=1}^i b_{i,j}\leq 1,} \quad i=3,...,k,\\
&{\displaystyle \sum_{j=1}^{i-1} jb_{i-1,j}-\sum_{j=1}^{i-1}jb_{i,j}-i(1-\varrho_i) b_{i,i} \leq 0,} \quad i=4,...,k, \\
&{\displaystyle a_1-b_{i,1} \leq 0,} \quad i=3,..,6, \\
&{\displaystyle b_{i,j} \geq 0,} \quad i=3,...,k, j=1,...,i, \\
& a_1 \geq 0. & \textrm{(LP)} \\
\end{array} \]

We also prove that

\begin{lemma}
For any $k\geq 4$, the approximation ratio of Algorithm PRPSLI is upper-bounded by the maximized objective function value of the factor-revealing linear program (LP).
\end{lemma}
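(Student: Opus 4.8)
The plan is to fix an arbitrary instance $(U,\mathscr{S})$ of $k$-Set Cover and show that the ratios $\{b_{i,j}\}$ and $a_1$ produced by the run of PRPSLI on it constitute a feasible point of (LP) whose objective value is at least the performance ratio $\frac{\sum_{i=3}^{k}n_i}{|OPT|}$ attained on the instance. Since PRPSLI outputs exactly $\sum_{i=3}^{k}n_i$ sets, this gives $\sum_{i=3}^{k}n_i \le (\text{objective value})\cdot|OPT| \le (\text{LP optimum})\cdot|OPT|$, and because (LP) is independent of the instance, the approximation ratio of PRPSLI is at most the LP optimum. So the whole argument reduces to (a) feasibility of this point and (b) the comparison of its objective value with the performance ratio.

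For feasibility I would check the constraints one by one, observing that each is already derived in the set-up. Nonnegativity is immediate, and $\sum_{j\le i}b_{i,j}\le 1$ is just $|OPT_i|\le|OPT|$. The packing constraints follow by combining the identity $n_i=\frac{|U_i|-|U_{i-1}|}{i}=\frac{\left(\sum_{j\le i}jb_{i,j}-\sum_{j\le i-1}jb_{i-1,j}\right)|OPT|}{i}$ with the lower bound $n_i\ge\varrho_i b_{i,i}|OPT|$; the latter holds because the (restricted) local-search packing of Phase $i$ has size within a factor $\varrho_i$ of a maximum packing of $i$-sets in $(U_i,\mathscr{S}_i)$, which is at least the number of $i$-sets in $OPT_i$. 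For $i\ge 7$ this is Theorem 2, and for $i\in\{4,5,6\}$ it is Theorem 4, where I also invoke the Remark following Theorem 4 (justified by Lemma 2) so that the comparison is against an optimal packing that itself respects the 1-set restriction.

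The constraints that need real work are $a_1\le b_{i,1}$ for $i=3,4,5,6$. Let $m_i$ be the number of 1-sets reported by the semi-local $(2,1)$-improvement on $U_{i-1}$ at the end of Phase $i$ (equivalently, at the start of the next phase). Inside each restricted phase $i\in\{4,5,6\}$ no committed improvement increases this quantity, and the final semi-local optimization only decreases it, so $m_6\ge m_5\ge m_4\ge a_1|OPT|$; on the other hand Lemma 2.2 of \cite{furer} applied to $(U_i,\mathscr{S}_i)$ gives $m_i\le b_{i,1}|OPT|$, where one uses Lemma 2 of this section to know that $OPT_i$ may be chosen to minimize simultaneously the size of the cover and the number of 1-sets, so that $b_{i,1}|OPT|$ really is the least achievable number of 1-sets. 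Chaining these yields $a_1\le b_{i,1}$; for $i=3$ one additionally uses $a_1+a_2\le b_{3,1}+b_{3,2}+b_{3,3}$ (Lemma 2.3 of \cite{furer}) together with $a_1+2a_2+3a_3=b_{3,1}+2b_{3,2}+3b_{3,3}$, which is also the route to the Phase-3 bound (9.6).

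Finally I would compare the objective value with the performance ratio. Writing $S_i=\sum_{j\le i}jb_{i,j}=|U_i|/|OPT|$, telescoping gives $\sum_{i=4}^{k}\frac{S_i-S_{i-1}}{i}=\frac{S_k}{k}+\sum_{i=4}^{k-1}\frac{S_i}{i(i+1)}-\frac{S_3}{4}$; adding the Phase-3 bound $\frac{n_3}{|OPT|}\le\frac13a_1+\frac23b_{3,1}+b_{3,2}+\frac43b_{3,3}$ of (9.6) reproduces exactly the objective function of (LP), the coefficients of $b_{3,1},b_{3,2},b_{3,3}$ collapsing to $\frac5{12},\frac12,\frac7{12}$. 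Hence $\frac{\sum_{i=3}^k n_i}{|OPT|}$ is bounded by the objective value at the chosen feasible point, which finishes the proof. The step I expect to be the main obstacle is the verification of $a_1\le b_{i,1}$: that is where the restriction built into Phases 4--6 is actually used, and it requires carefully tracking the ``number of 1-sets needed to finish the cover'' across phase boundaries, leaning both on the minimality guarantee of the semi-local $(2,1)$-improvement and on Lemma 2 so that the $b_{i,1}$ appearing in (LP) are the correct quantities to compare against; the remaining steps are telescoping and restatements of earlier facts.
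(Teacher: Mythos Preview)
Your proposal is correct and follows essentially the same route as the paper: the paper's ``proof'' of Lemma~3 is precisely the derivation in Section~9.1, establishing each constraint (9.1), (9.4), (9.5), (9.8), (9.9) and the objective bound (9.7), and you reproduce these verifications with somewhat more detail (particularly the telescoping for (9.7) and the chain argument for $a_1\le b_{i,1}$). One minor indexing slip: you define $m_i$ as the semi-local 1-set count on $U_{i-1}$ (end of Phase~$i$), but then invoke Lemma~2.2 of \cite{furer} on $(U_i,\mathscr{S}_i)$ to bound it---that lemma bounds the 1-set count on $U_i$, not on $U_{i-1}$, so you are tacitly also using the non-increase across Phase~$i$; defining $m_i$ on $U_i$ instead (at the \emph{start} of Phase~$i$) makes the chain $a_1|OPT|=m_3\le m_4\le m_5\le m_6$ and $m_i\le b_{i,1}|OPT|$ line up cleanly.
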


\subsection{Finding an Upper bound for the approximation ratio of PRPSLI}

\begin{proof}

Plug $\varrho_i=\frac{2}{i}-\epsilon$ for $i=k,...,5$ and $\varrho_4=\frac{7}{16}$ in (LP). The dual of (LP) is,

\[\begin{array}{cll} \min &{\displaystyle \sum_{i=3}^k \beta_i} \\
{\rm s.t.} & (1) \quad {\displaystyle \delta_3+\delta_4+\delta_5+\delta_6 \geq \frac{1}{3}}, \\
& (2) \quad {\displaystyle \beta_3+\gamma_4-\delta_3 \geq \frac{5}{12}}, \\
& (3) \quad {\displaystyle \beta_3+2\gamma_4 \geq \frac{1}{2}}, \\
& (4) \quad {\displaystyle \beta_3+3\gamma_4 \geq \frac{7}{12}}, \\
& (5) \quad {\displaystyle \beta_i+\gamma_{i+1}-\gamma_i-\delta_i \geq \frac{1}{i(i+1)}}, \quad i=4,5,6, \\
& (6) \quad {\displaystyle \beta_i+j\gamma_{i+1}-j\gamma_i \geq \frac{j}{i(i+1)}}, \quad i=4,...,k-1, j=1,...,i-1, \\
& (7) \quad {\displaystyle \beta_i+i\gamma_{i+1}-2.25\gamma_i \geq \frac{1}{i+1}}, \quad i=4, \\
& (8) \quad {\displaystyle \beta_i+i\gamma_{i+1}-(i-2+\epsilon)\gamma_i \geq \frac{1}{i+1}}, \quad i=5,...,k-1, \\
& (9) \quad {\displaystyle \beta_k-j\gamma_k \geq \frac{j}{k}}, \quad j=1,...,k-1, \textrm{ }k\geq 7 \\
& (9.1) \quad {\displaystyle \beta_k-\gamma_k-\delta_k \geq \frac{1}{k}}, \quad k=4,5,6, \\
& (10) \quad {\displaystyle \beta_k-(k-2+\epsilon)\gamma_k \geq 1}, \\
& (11) \quad {\displaystyle \beta_i \geq 0}, \quad i=3,...,k, \\
& (12) \quad {\displaystyle \gamma_i \geq 0}, \quad i=4,...,k, \\
& (13) \quad {\displaystyle \delta_i \geq 0}, \quad i=3,4,5,6. & \textrm{(Dual)}
\end{array} \]



For $k=4$, set $\gamma_4=\frac{1}{12}$, $\delta_3=0,\delta_4=\frac{1}{3}$, $\beta_3=\frac{1}{3}, \beta_4=1+\frac{1}{12}\cdot\frac{9}{4}$. We have $\sum_{i=3}^4\beta_i=\frac{7}{16}+\frac{1}{12}+1$. \\

For $k=5$, set $\gamma_4=\frac{1}{12},\gamma_5=0$, $\beta_3=\frac{1}{3}, \beta_4=\frac{3}{20}+3\gamma_4, \beta_5=1$, $\delta_3=0,\delta_4=\frac{1}{10}+2\gamma_4,\delta_5=\frac{1}{3}$. We have $\sum_{i=3}^5\beta_i =\frac{2}{5}+\frac{1}{3}+1$. \\

For $k\geq 6$.\\

Set $\delta_3=0$; $\gamma_4=\frac{1}{12}$, $\gamma_i=\gamma_{i+2}+\frac{2}{i(i+1)(i+2)}$ for $i=6,...,k-2$, $\gamma_{k-1}=0$, $\gamma_k=\frac{1}{(k-1)k}$; $\beta_3=\frac{1}{3}$, $\beta_i=\frac{1}{i+1}-i\gamma_{i+1}+(i-2+\epsilon)\gamma_i$ for $i=6,...,k-1$, $\beta_k=1+(k-2+\epsilon)\gamma_k$. \\

For $i\geq 6$, $\gamma_i+\gamma_{i+1}-\frac{1}{i(i+1)}=\gamma_{i+1}+\gamma_{i+2}+\frac{2}{i(i+1)(i+2)}-\frac{1}{i(i+1)}=\gamma_{i+1}+\gamma_{i+2}-\frac{1}{(i+1)(i+2)}$. By induction, $\gamma_i+\gamma_{i+1}-\frac{1}{i(i+1)}=\gamma_{k-1}+\gamma_{k}-\frac{1}{(k-1)k}=0$. Hence,
\begin{equation}
\gamma_i+\gamma_{i+1}=\frac{1}{i(i+1)}, \quad \textrm{for }i=6,...,k-1. \nonumber
\end{equation}

Then, constraints (2),(3) and (4) hold as equality. \\

In constraint (6) for $i\geq 6$, $j\leq i-1$, $\beta_i+j\gamma_{i+1}-j\gamma_i=\frac{1}{i+1}-i\gamma_{i+1}+(i-2+\epsilon)\gamma_i+j\gamma_{i+1}-j\gamma_i=\frac{1}{i+1}-(i-j)\gamma_{i+1}+(i-2-j+\epsilon)(\frac{1}{i(i+1)}-\gamma_{i+1})
=\frac{1}{i+1}(1+\frac{i-j-2+\epsilon}{i})-(2i-2-2j+\epsilon)\gamma_{i+1}\geq\frac{1}{i+1}+\frac{i-2-j+\epsilon}{i(i+1)}-\frac{2i-2-2j+\epsilon}{i(i+1)}=\frac{j}{i(i+1)}$. \\

In constraint (8) for $i\geq 6$, inequalities hold as equality. \\

Constraint (9) holds, $\beta_k-j\gamma_k=1+(k-2)\gamma_k-j\gamma_k\geq 1-\gamma_k\geq \frac{j}{k}$. \\

Constraint (10) holds as equality. \\

Set $\gamma_5=\frac{1}{30}-\gamma_6$, $\delta_4=2\gamma_4-2\gamma_5+\frac{1}{10}$, $\delta_5=2\gamma_5-4\gamma_6+\frac{4}{30}$, $\delta_6=0$ for odd $k$ and $\delta_6=\frac{1}{15}$ for even $k$. \\

$\beta_4\geq\max\{\frac{1}{5}+\frac{9}{4}\gamma_4-4\gamma_5,\frac{1}{20}+\gamma_4-\gamma_5+\delta_4,\frac{3}{20}+3\gamma_4-3\gamma_5\}=\frac{3}{20}+3\gamma_4-3\gamma_5$, $\beta_5\geq\max\{\frac{1}{6}+(3+\epsilon)\gamma_5-5\gamma_6,\frac{1}{30}+\gamma_5-\gamma_6+\delta_5,\frac{4}{30}+4\gamma_5-4\gamma_6\}=\frac{1}{6}+(3+\epsilon)\gamma_5-5\gamma_6$. Let $\beta_4=\frac{3}{20}+3\gamma_4-3\gamma_5$, $\beta_5=\frac{1}{6}+(3+\epsilon)\gamma_5-5\gamma_6$. \\

For $k$ is odd and $k\geq 9$, $\gamma_6=\gamma_6-\gamma_8+\cdots+\gamma_{k-3}-\gamma_{k-1}+\gamma_{k-1}=2(\frac{1}{6\cdot 7\cdot 8}+\cdots+\frac{1}{(k-3)(k-2)(k-1)})$ increases monotonically with respect to $k$. Thus, for any odd $k$ and $k\geq 9$,  $\gamma_6=2H_{\frac{k-3}{2}}-2H_{k-2}+\frac{1}{k-1}+\frac{7}{5}\leq -2\ln 2+\frac{7}{5}<\frac{1}{60}$.\\
For $k=7$, $\gamma_6=0$. \\

Hence, Constraint (1) holds, $\delta_3+\delta_4+\delta_5+\delta_6=2\gamma_4+\frac{1}{10}+\frac{4}{30}-4\gamma_6=\frac{2}{5}-4\gamma_6>\frac{1}{3}$. \\

For $k$ is even and $k\geq 6$, $\gamma_6=\gamma_6-\gamma_8+\cdots+\gamma_{k-2}-\gamma_{k}+\gamma_{k}=2(\frac{1}{6\cdot 7\cdot 8}+\cdots+\frac{1}{(k-2)(k-1)k})+\frac{1}{(k-1)k}=2H_{\frac{k-2}{2}}-2H_{k-1}+\frac{1}{k-1}+\frac{7}{5}$ decreases monotonically with respect to $k$.
For $k=6$, $\gamma_6=\frac{1}{30}$. Thus, for any even $k$ and $k\geq 6$, $\gamma_6\leq \frac{1}{30}$. Moreover, $\gamma_6>-2\ln 2+\frac{7}{5}>\frac{1}{60}$

Hence, Constraint (1) holds, $\delta_3+\delta_4+\delta_5+\delta_6=2\gamma_4+\frac{1}{10}+\frac{4}{30}-4\gamma_6+\frac{1}{15}=\frac{7}{15}-4\gamma_6\geq\frac{1}{3}$. \\

Constraint (5) for $i=4,5$, constraint (6) for $i=4,5$, constraint (7) and constraint (8) for $i=5$ hold directly as a result of the settings of these parameters. \\

Constraint (5) for $i=6$ holds, $\beta_6+\gamma_7-\gamma_6-\delta_6-\frac{1}{42}=\frac{1}{7}+3\gamma_6-5\gamma_7-\frac{1}{15}=\frac{1}{7}+3\gamma_6-5(\frac{1}{42}-\gamma_6)-\frac{1}{15}-\frac{1}{42}=8\gamma_6-\frac{1}{15}>\frac{8}{60}-\frac{1}{15}>0$.\\

Constraint (9.1) holds for $k=6$, $\beta_6-\gamma_6-\delta_6>1+3\gamma_6-\frac{1}{15}>\frac{1}{42}$. \\

Moreover, constraint (11), (12), (13) hold. \\

Finally, we compute the value of the objective function. \\

For odd $k$ and $k\geq 7$,
$\sum_{i=3}^k\beta_i=\beta_3+\beta_4+\beta_5+\sum_{i=6}^k\beta_i=\frac{1}{3}+\frac{3}{20}+\frac{1}{6}+3\gamma_4+\epsilon\gamma_5-5\gamma_6
+\sum_{i=6}^{k-1}\frac{1}{i+1}-i\gamma_{i+1}+(i-2+\epsilon)\gamma_i+1+(k-2+\epsilon)\gamma_k
=\frac{1}{3}+\frac{3}{20}+\frac{1}{6}+3\gamma_4-5\gamma_6+1+\frac{1}{7}-\gamma_7+4\gamma_6+\sum_{i=8}^k \frac{1}{i}-(\frac{1}{8\cdot 9}+\frac{1}{10\cdot 11}+\cdots+\frac{1}{(k-1)k})+\epsilon\sum_{i=5}^k \gamma_i=1+\frac{1}{3}+\frac{2}{5}+\frac{1}{6}+\frac{1}{7}-5\gamma_6-(\frac{1}{42}-\gamma_6)+4\gamma_6+2(\frac{1}{9}+\cdots\frac{1}{k})+\epsilon\sum_{i=5}^k\gamma_i
\leq 1+\frac{1}{3}+\frac{2}{5}+\frac{2}{7}+\cdots+\frac{2}{k}+\epsilon$. \\

Last inequality holds because $\sum_{i=5}^k\gamma_i\leq\sum_{i=5}^k\frac{1}{(i+1)i}\leq\frac{1}{5}$. \\

Similarly, for even $k$ and $k\geq 6$, $\sum_{i=3}^k\beta_i=\beta_3+\beta_4+\beta_5+\sum_{i=6}^k\beta_i=\frac{1}{3}+\frac{3}{20}+\frac{1}{6}+3\gamma_4+\epsilon\gamma_5-5\gamma_6
+\sum_{i=6}^{k-1}\frac{1}{i+1}-i\gamma_{i+1}+(i-2+\epsilon)\gamma_i+1+(k-2+\epsilon)\gamma_k
=\frac{1}{3}+\frac{3}{20}+\frac{1}{6}+3\gamma_4-5\gamma_6+1+\frac{1}{7}-\gamma_7+4\gamma_6+\sum_{i=8}^k \frac{1}{i}-(\frac{1}{8\cdot 9}+\frac{1}{10\cdot 11}+\cdots+\frac{1}{(k-2)(k-1)}+\frac{1}{(k-1)k})+\epsilon\sum_{i=5}^k \gamma_i=1+\frac{1}{3}+\frac{2}{5}+\frac{1}{6}+\frac{1}{7}-5\gamma_6-(\frac{1}{42}-\gamma_6)+4\gamma_6+2(\frac{1}{9}+\cdots\frac{1}{k-3})+\frac{1}{k-1}+\frac{2}{k}+\epsilon\sum_{i=5}^k\gamma_i
\leq 1+\frac{1}{3}+\frac{2}{5}+\frac{2}{7}+\cdots+\frac{2}{k-3}+\frac{1}{k-1}+\frac{2}{k}+\epsilon$. \\

Therefore, the approximation ratio of PRPSLI for odd $k$ and $k\geq 7$ can be upper bounded by $1+\frac{1}{3}+\frac{2}{5}+\frac{2}{7}+\cdots+\frac{2}{k}+\epsilon$. For even $k$ and $k\geq 6$, it is upper bounded by $1+\frac{1}{3}+\frac{2}{5}+\frac{2}{7}+\cdots+\frac{2}{k-3}+\frac{1}{k-1}+\frac{2}{k}+\epsilon$. \qed
\end{proof}

\subsection{Tight example of PRPSLI}
For every $k\geq 4$ and any $\epsilon>0$, we give a tight example of PRPSLI based on the tight example of the semi-local $(2,1)$-improvement \cite{furer} for 3-Set Cover, the tight example of the Restricted 4-Set Packing algorithm we give in Appendix Section 7.2, and the tight example of the Restricted $k$-Set Packing algorithm for $k\geq 5$, which is the same as the tight example of the $k$-set packing heuristic \cite{schrijver}. \\

We assume that the optimal solution $\mathscr{O}$ consists of only disjoint $k$-sets. To calculate the performance ratio on this instance, we charge a cost of 1 for each set chosen by the algorithm, and the cost is uniformly distributed to every element of the chosen set \cite{furer}.

\begin{itemize}

    \item \textbf{$k=4$}. In Phase 4, the Restricted 4-Set Packing algorithm covers 1 element of each set in a $\frac{1}{4}(1-\epsilon)$ fraction of $\mathscr{O}$, 2 elements of each set in a $\frac{3}{4}(1-\epsilon)$ fraction, and 3 elements of each set in the remaining $\epsilon$ fraction. Denote the three parts of $\mathscr{O}$ by $\mathscr{O}_1$, $\mathscr{O}_2$ and $\mathscr{O}_3$ respectively. In Phase 3, the semi-local optimization covers 1 element in each set of $\mathscr{O}_1$ by 3-sets, and the remaining uncovered elements of $\mathscr{O}$ are covered by 2-sets. The performance ratio of PRPSLI on this instance is $\frac{3}{4}(1-\epsilon)(\frac{1}{2}+1)+\frac{1}{4}(1-\epsilon)(\frac{1}{4}+\frac{1}{3}+1)+\epsilon(\frac{3}{4}+\frac{1}{2})=\frac{7}{16}+\frac{1}{12}+1-\frac{13}{48}\epsilon$. \\

    \item \textbf{$k=5$}. In Phase 5, the Restricted 5-Set Packing algorithm covers 2 elements of each set in a $1-\epsilon$ fraction of $\mathscr{O}$, 1 element of each set in the remaining $\epsilon$ fraction. Denote the two parts of $\mathscr{O}$ by $\mathscr{O}_2$ and $\mathscr{O}_1$ respectively. The algorithm switches to 4-Set Cover on $\mathscr{O}_1$ and it performs a semi-local optimization on $\mathscr{O}_2$. The performance ratio of Algorithm 1 on this instance is $(1-\epsilon)(\frac{2}{5}+\frac{1}{3}+1)+\epsilon(\frac{1}{5}+\frac{7}{16}+\frac{1}{12}+1)=\frac{2}{5}+\frac{1}{3}+1-\frac{1}{80}\epsilon$. \\

    \item \textbf{$k$ odd and $k\geq 7$}. In Phase $k$, the $k$-Set Packing algorithm covers 2 elements of each set in a $1-\epsilon$ fraction of $\mathscr{O}$, 1 element of each set in the remaining $\epsilon$ fraction. Denote the two parts of $\mathscr{O}$ by $\mathscr{O}_2$ and $\mathscr{O}_1$ respectively. In Phase $k-1$, the algorithm covers 1 element of each set in $\mathscr{O}_1$. Then it switches to $(k-2)$-Set Cover on the remaining uncovered elements. The performance ratio of PRPSLI on this instance is at least $(1-\epsilon)\frac{2}{k}+\epsilon(\frac{1}{k}+\frac{1}{k-1})+\rho_{k-2}$, i.e. $\frac{2}{k}+\rho_{k-2}+(\frac{1}{k-1}-\frac{1}{k})\epsilon$, which is $\frac{2}{7}+\frac{2}{5}+\frac{1}{3}+1+\frac{19}{1680}\epsilon$ for $k=7$, and by induction  $\frac{2}{k}+\frac{2}{k-2}+\cdots+\frac{2}{5}+\frac{1}{3}+1+(\frac{1}{k-1}-\frac{1}{k}+\frac{1}{k-3}-\frac{1}{k-2}+\cdots+\frac{1}{8}-\frac{1}{9}+\frac{19}{1680})\epsilon$ for $k\geq 9$. The coefficient of $\epsilon$ is upper bounded by $\frac{1}{k-1}-\frac{1}{k}+\frac{1}{k-3}-\frac{1}{k-1}+\cdots+\frac{1}{8}-\frac{1}{10}+\frac{19}{1680}= \frac{1}{8}-\frac{1}{k}+\frac{19}{1680}\leq 1$. Hence, the performance ratio is at most $\frac{2}{k}+\frac{2}{k-2}+\cdots+\frac{2}{5}+\frac{1}{3}+1+\epsilon$ for $k\geq 9$. \\

    \item \textbf{$k$ even and $k\geq 6$}. In Phase $k$, the $k$-Set Packing algorithm covers 2 elements of each set in a $1-\epsilon$ fraction of $\mathscr{O}$, 1 elements of each set in the remaining $\epsilon$ fraction. Denote the two parts of $\mathscr{O}$ by $\mathscr{O}_2$ and $\mathscr{O}_1$ respectively. In Phase $k-1$, the $(k-1)$-Set Packing algorithm covers 1 element of each set in $\mathscr{O}_1$ and then 1 element of each set in $\mathscr{O}$. Then the algorithm switches to $(k-3)$-Set Cover on the remaining uncovered elements. The performance ratio of Algorithm 1 on this instance is at least $(1-\epsilon)\frac{2}{k}+\epsilon(\frac{1}{k}+\frac{1}{k-1})+\frac{1}{k-1}+\rho_{k-3}$, i.e. $\frac{2}{k}+\frac{1}{k-1}+\rho_{k-3}+(\frac{1}{k-1}-\frac{1}{k})\epsilon$, which is $\frac{2}{6}+\frac{1}{5}+\frac{1}{3}+1+\frac{1}{30}\epsilon$ for $k=6$ and by a similar argument as the above case, at most $\frac{2}{k}+\frac{1}{k-1}+\frac{2}{k-3}+\cdots+\frac{2}{5}+\frac{1}{3}+1+\epsilon$ for $k\geq 8$. \\

\end{itemize}


\end{document}